\documentclass[journal]{IEEEtran}
\usepackage{amsmath,amssymb,amsfonts,mathtools}
\usepackage{amsthm}
\usepackage{algorithm}
\usepackage{algpseudocode}
\usepackage{graphicx}
\usepackage{booktabs}
\usepackage{multirow}
\usepackage{array}
\usepackage{xcolor}
\usepackage{colortbl}
\usepackage{pgfplots}
\pgfplotsset{compat=1.18}
\usepgfplotslibrary{groupplots}
\usetikzlibrary{arrows.meta,positioning,fit,calc,shapes.geometric,backgrounds}
\usepackage{url}
\usepackage[hidelinks]{hyperref}
\usepackage[capitalise,nameinlink]{cleveref}
\usepackage{cite}

\newtheorem{theorem}{Theorem}
\newtheorem{lemma}{Lemma}
\newtheorem{proposition}{Proposition}
\newtheorem{corollary}{Corollary}
\theoremstyle{definition}
\newtheorem{definition}{Definition}
\newtheorem{remark}{Remark}

\newcommand{\scheme}{\textsf{SILMARILS-MV}}
\newcommand{\silm}{\textsf{SILMARILS}}
\newcommand{\F}{\mathbb{F}_q}
\newcommand{\bfe}{\mathbf{e}}
\newcommand{\blam}{\boldsymbol{\lambda}}
\newcommand{\balpha}{\boldsymbol{\alpha}}
\newcommand{\bc}{\mathbf{c}}
\newcommand{\bb}{\mathbf{b}}
\newcommand{\bu}{\mathbf{u}}
\newcommand{\bv}{\mathbf{v}}
\newcommand{\bg}{\mathbf{g}}
\newcommand{\bh}{\mathbf{h}}
\newcommand{\bD}{\mathbf{D}}
\newcommand{\Fki}{\mathcal{F}_{\mathsf{KI}}}
\newcommand{\Pki}{\Pi_{\mathsf{DKI}}}
\newcommand{\Bad}{\mathsf{Bad}}
\newcommand{\Ideal}{\mathsf{Ideal}}
\newcommand{\Real}{\mathsf{Real}}
\newcommand{\Sign}{\mathsf{Sign}}
\newcommand{\Ver}{\mathsf{Verify}}
\newcommand{\AggVer}{\mathsf{AggVerify}}
\newcommand{\Agg}{\mathsf{Aggregate}}
\newcommand{\Res}{\mathsf{Resolve}}
\newcommand{\sh}[1]{[\![#1]\!]}
\newcommand{\rank}{\operatorname{rank}}
\newcommand{\spn}{\operatorname{span}}
\newcommand{\ind}{\mathbf{1}}
\newcommand{\T}{^{\mkern-1.5mu\mathsf{T}}}

\definecolor{cSIL}{HTML}{2a78d6}
\definecolor{cAGG}{HTML}{eb6834}
\definecolor{cED}{HTML}{1baf7a}
\definecolor{cD44}{HTML}{eda100}
\definecolor{cFAL}{HTML}{e87ba4}
\definecolor{cSLH}{HTML}{008300}
\definecolor{cD65}{HTML}{4a3aa7}
\definecolor{cRED}{HTML}{e34948}
\pgfplotsset{every axis/.append style={font=\footnotesize, grid=major, grid style={gray!20},
  tick label style={font=\scriptsize}, label style={font=\footnotesize},
  legend style={font=\scriptsize, draw=none, fill=none}, line width=0.9pt, mark size=1.8pt}}

\begin{document}

\title{\scheme: Information-Theoretically Secure Multi-Verifier Signatures with Dealer-Free Key Issuance and Linear Aggregation for Permissioned Blockchains}

\author{Khadijeh~Bagheri,
        Hassan~Khodaiemehr,
        and~Chen~Feng%
\thanks{K. Bagheri, H. Khodaiemehr and C. Feng are with the School of Engineering, Faculty of Applied Science, The University of British Columbia (Okanagan Campus), Kelowna, BC, Canada (e-mail: khadijeh.bagheri@ubc.ca; hassan.khodaiemehr@ubc.ca; chen.feng@ubc.ca).}%
\thanks{Preprint, October 2026. This is the full version of the paper, with all proofs and the complete evaluation.}}

\markboth{Preprint}%
{Bagheri \MakeLowercase{\textit{et al.}}: \scheme}

\maketitle

\begin{abstract}
Permissioned blockchains authenticate transactions, votes and quorum certificates with computationally secure signatures, and post-quantum replacements inflate consensus traffic by orders of magnitude. Information-theoretic alternatives have required a trusted authority, quantum channels, per-signature interaction, or single-use keys with bounded transfer. We present \scheme{}, a multi-verifier unconditionally secure signature in which a signature is the vector $\balpha=A\,\bfe(m)$ and each verifier holds a secret functional $\bc_i=\Gamma\T\blam(x_i)$ derived from \emph{virtual anchors} $\Gamma$ that nobody learns. Against unbounded adversaries corrupting up to $\omega$ verifiers and any signers, and observing every decision, an honest verifier accepts a forged or invalid signature with probability at most $Q/q+1/(q-1)$, a bound that is tight in its leading term. The scheme is strongly unforgeable, non-repudiable, transferable across unboundedly many hops, and linearly aggregatable. Keys are issued by the validators through perfectly secure MPC for $t<n/3$, without broadcast. To scale, we let the committee itself sign: in \emph{threshold notarization}, finality certificates have constant size $32(\omega+1)$ bytes, honest shares are released only on decided values, and per-validator issuance cost no longer depends on $n$. Arbiter-verified keys give clients $32$-byte signatures checked in batches, robust redistribution supports membership changes and share refresh, and verifiers can be enrolled retroactively. In C over $\mathbb{F}_{2^{255}-19}$, a certificate verifies in $2.4\,\mu$s. In an analytic normal-case model at $1000$ validators, threshold notarization needs $71$\,KiB of authentication traffic per validator per block, $22\times$ less than ML-DSA-44 within the admitted-verifier setting considered by \scheme{}. In a Rust HotStuff prototype with $100$ validators, \scheme{} commits $2.8\times$ more transactions per second than ML-DSA-44 with $67\times$ less cryptographic CPU, and Monte-Carlo attacks are consistent with the bounds.
\end{abstract}

\begin{IEEEkeywords}
Unconditionally secure signatures, information-theoretic security, quantum-safe blockchain, Byzantine fault tolerance, threshold signatures, multi-party computation, aggregate signatures, quorum certificates.
\end{IEEEkeywords}

\section{Introduction}\label{sec:intro}

\IEEEPARstart{D}{igital} signatures are the authentication backbone of permissioned blockchains. In Byzantine fault-tolerant (BFT) protocols such as PBFT~\cite{PBFT2002}, HotStuff~\cite{HotStuff2019,FastHotStuff2024} and Tendermint~\cite{Tendermint2018}, as well as in endorsement-based platforms such as Hyperledger Fabric~\cite{Fabric2018}, signatures serve three roles at once. They authenticate client transactions as these are relayed through gossip and mempools. They let a leader convince every replica that a quorum of $2t+1$ validators voted for a block, through a \emph{quorum certificate} (QC). And they provide transferable, non-repudiable evidence that is later used for accountability and slashing~\cite{Forensics2021,Polygraph2021,Casper2017}. The third role is the most demanding: evidence must stay convincing to third parties long after it was produced.

All signatures deployed today are computationally secure. Elliptic-curve schemes fall to Shor's algorithm~\cite{Shor1997,Aggarwal2018}. The NIST post-quantum standards, ML-DSA~\cite{FIPS204,Dilithium2018}, Falcon~\cite{Falcon2020} and SLH-DSA~\cite{FIPS205,SPHINCSplus2019}, replace the hardness assumption but keep the computational nature. They also inflate signatures from $64$\,B to between $0.7$\,KB and $17$\,KB, and BFT protocols carry these signatures in every vote and certificate. Post-quantum migration efforts for ledgers~\cite{FernandezCarames2020,PQFabric2021,Zhaolu2025} report precisely this overhead. A structural problem remains that no computational scheme removes: evidence signed today must remain unforgeable for as long as the ledger is audited; consequently, any future cryptanalytic progress against the underlying assumption retroactively weakens slashing and dispute records~\cite{Mosca2018}.

Information-theoretically (unconditionally) secure signatures (USS) remove computational assumptions entirely~\cite{ChaumRoijakkers1990,Hanaoka2000,Shikata2002,SwansonStinson2016}. Unlike public-key signatures, a USS must give each verifier a \emph{different secret} verification key, and it must explicitly handle disagreement between verifiers~\cite{SwansonStinson2016}. Every known approach pays a price that is hard to reconcile with a production ledger:
polynomial constructions~\cite{Hanaoka2000,Hanaoka2002PKC,Shikata2002} require a \emph{trusted authority} that generates all keys and can therefore forge; the universal-hashing scheme of Amiri~\emph{et al.}~\cite{Amiri2018} and quantum digital signatures (QDS)~\cite{Wallden2015,Yin2023NSR,Li2023OTUH,Kiktenko2022PRA,Pelet2022} remove the authority but consume key material per message, need secret or quantum channels for every signature, and support only a bounded number of transfers through decreasing verification levels; pseudo-signatures~\cite{PfitzmannWaidner1992,FitziWolf2004,Narayanan2023} and information-checking (IC) signatures~\cite{RabinBenOr1989,CDN2015} are tailored to broadcast inside MPC and support few transfers or require interaction.
Our own earlier framework, \silm~\cite{SILMARILS}, combines a compact algebraic designated-verifier (DV) mode with a three-party IC mode. The DV mode lets the verifier simulate signatures and is therefore not transferable. The three-party mode needs an interactive consistency phase with authenticated broadcast for every signature, and it supports exactly one transfer.

\subsection{Our Approach}
This paper asks whether a signature can be information-theoretically secure and at the same time non-interactive, multi-use, transferable without bound, aggregatable, and free of any single trusted authority, using only an assumption a BFT committee already makes: an honest supermajority $t<n/3$. We answer affirmatively with \scheme.

The design starts from one observation about our earlier construction. In \silm{} the verifier reconstructs a Shamir sharing with \emph{public} weights; hence anyone who learns the per-message receipt can solve the public linear system and forge~\cite{SILMARILS}. \scheme{} reverses this. The signature is a public vector $\balpha=A\,\bfe(m)$, where $\bfe(m)=(1,m,\dots,m^{\psi})\T$. The \emph{reconstruction weights} are secret and differ per verifier: verifier $V_i$ accepts iff $\bc_i\T\balpha=\bb_i\T\bfe(m)$ with $\bb_i=A\T\bc_i$. A coalition of $\omega$ verifiers learns $\omega$ linear functionals of $A$, which is never enough to predict the functional of an honest verifier. Conversely, a dishonest signer who knows $A$ does not know where the honest functionals point and therefore cannot craft a signature that some honest verifiers accept and others reject.

\textbf{One property, three uses.} Everything else in this paper follows from a single structural fact: signing and verification are linear in the verifier key, in the signature and in the signing key. Each linearity yields one component, and each component inherits the security theorem of the core scheme (\cref{thm:main}) instead of needing a new assumption.
(i)~\emph{Linear in the verifier key: dealer-free issuance.} We set $\bc_i=\Gamma\T\blam(x_i)$, where $\blam(x)=(1,x,\dots,x^\omega)\T$, $x_i$ is a public identifier and the \emph{virtual anchors} $\Gamma$ are held in shares by the committee and never reconstructed. Every verifier key is then a public linear function of shared secrets, issuance needs multiplications only to form $\Gamma A$, independently of the number of verifiers, and the point $x_0=0$ supplies a \emph{committee arbiter} whose key exists only in shared form.
(ii)~\emph{Linear in the signature: aggregation.} Signatures of different signers on one message add up, and a quorum certificate (QC) is a single $(\omega+1)$-vector.
(iii)~\emph{Linear in the signing key: threshold notarization.} The committee can hold $A$ in shares and sign \emph{as one signer}: each member releases its share of $A\bfe(m)$ once it has decided $m$, and any $t+1$ correct shares interpolate a certificate. The committee is then a single signer and its verifiers are external; hence neither the issuance cost nor the collusion bound $\omega$ grows with the committee.

The deployment modes of \cref{sec:ledger,sec:scale} are therefore not separate schemes. They differ only in who holds the signing matrix (a validator, a client, or the committee in shares) and who holds verifier keys (all validators, external auditors, or the arbiter alone), and every unforgeability claim reduces to \cref{thm:main}. The paper follows this line: the core scheme and its proof (\cref{sec:construction,sec:security}), issuance (\cref{sec:dki}), deployment and scaling (\cref{sec:ledger,sec:scale}), and evaluation up to $1000$ validators (\cref{sec:eval}).

\subsection{Contributions}
\Cref{tab:contrib} summarizes our contributions.
\begin{table}[!t]
\centering
\caption{Contributions of this paper.}
\label{tab:contrib}
\scriptsize
\setlength{\tabcolsep}{3pt}
\renewcommand{\arraystretch}{1.15}
\begin{tabular}{@{}>{\raggedright\arraybackslash}p{0.27\columnwidth}>{\raggedright\arraybackslash}p{0.69\columnwidth}@{}}
\toprule
\textbf{Contribution} & \textbf{Result}\\
\midrule
Construction (\cref{sec:construction}) & Signing, verification, linear aggregation and dispute resolution; a signature is $\omega+1$ field elements, signing costs $(\omega+1)(\psi+1)$ and verification $\omega+\psi+2$ multiplications; no interaction, no broadcast.\\
\midrule[0.25pt]
IT security (\cref{sec:security}) & $\Pr[\Bad]\le Q/q+1/(q-1)$ against unbounded adversaries that corrupt $\omega$ verifiers and any signers and observe every decision (\cref{thm:main}). Strong unforgeability, non-repudiation, unbounded-hop transferability, aggregate security, arbitration; the leading term of the bound, $\omega$ and $\psi$ are tight (\cref{prop:tight}); simulation-based security in the Fitzi framework (\cref{cor:fds}).\\
\midrule[0.25pt]
Dealer-free issuance (\cref{sec:dki}) & Perfectly secure MPC $\Pki$ for $t<n/3$ without broadcast; multiplications independent of the number of verifiers; semi-honest cost validated by simulation, malicious cost estimated.\\
\midrule[0.25pt]
Ledger integration (\cref{sec:ledger}) & Votes, aggregated QCs, transaction relays, information-theoretic equivocation evidence (\cref{thm:account}), epochs; three deployment modes (\cref{tab:modes}).\\
\midrule[0.25pt]
Scaling (\cref{sec:scale}) & Threshold notarization with soundness, uniqueness and robustness (\cref{thm:tn}) and per-validator issuance independent of $n$; $32$-byte arbiter-verified keys with batch arbitration (\cref{lem:batch}); retroactive enrollment (\cref{prop:enrol}); robust redistribution (\cref{lem:reshare}); comparison with threshold signatures (\cref{tab:threshold}).\\
\midrule[0.25pt]
Evaluation (\cref{sec:eval}) & C core versus Ed25519, ML-DSA, Falcon and SLH-DSA on one host; ledger model up to $n=1000$; Rust HotStuff prototype up to $n=100$; simulations with Byzantine members; Monte-Carlo checks of the bounds and an exhaustive check of batch arbitration.\\
\bottomrule
\end{tabular}
\end{table}

We are explicit about the costs (\cref{sec:discussion}). Verification keys are secret and verifiers must be admitted by the committee; key material is consumed per signature and refreshed every epoch; issuance needs private channels and an honest committee supermajority within each epoch. In the individual-vote mode, dealer-free issuance costs more total bandwidth than ML-DSA for committees larger than about ten validators. Threshold notarization removes this growth: at $n=1000$ it needs an estimated $71$\,KiB per validator per block in the normal case, against $1.5$\,MiB for ML-DSA-44, but its certificates attest to the committee's decision rather than to individual votes unless the accountable variant is added.

\section{Related Work}\label{sec:related}

\Cref{tab:relwork} summarizes prior work by area, and \cref{tab:related} compares the resulting signature families property by property. Two lines of work matter most. Information-theoretic signatures either rely on a trusted authority that can impersonate any signer~\cite{Hanaoka2000,Shikata2002,SwansonStinson2016}, or consume keys per message and degrade with every transfer~\cite{Amiri2018,Wallden2015}. Scalable certificates for BFT rely on computational threshold signatures~\cite{CKS2005,SBFT2019,ThRaccoon2024,Ringtail2025}, which \cref{tab:threshold} compares with threshold notarization. Kniep \emph{et al.}~\cite{KSW2026DQS} avoid post-quantum aggregation altogether: their distributed quorum signatures turn a certificate into a local event of a Bracha-style approval broadcast over ordinary post-quantum signatures, with constant-size messages and public verification, but a certificate convinces a party outside the protocol only through the $O(n)$ underlying signatures. Threshold notarization instead gives a constant-size, information-theoretic certificate that verifiers can forward, at the price of secret verifier keys. We follow the security model of Swanson and Stinson~\cite{SwansonStinson2016}, which distinguishes signatures \emph{produced} by the signer from signatures that merely \emph{pass} verification and makes dispute resolution and trust assumptions explicit.

\begin{table*}[!t]
\centering
\caption{Related work by area, and how \scheme{} relates to it.}
\label{tab:relwork}
\scriptsize
\setlength{\tabcolsep}{3pt}
\renewcommand{\arraystretch}{1.12}
\begin{tabular}{@{}>{\raggedright\arraybackslash}p{0.12\textwidth}>{\raggedright\arraybackslash}p{0.31\textwidth}>{\raggedright\arraybackslash}p{0.19\textwidth}>{\raggedright\arraybackslash}p{0.34\textwidth}@{}}
\toprule
\textbf{Area} & \textbf{Representative work} & \textbf{What it achieves} & \textbf{Gap, and relation to \scheme{}}\\
\midrule
IT authentication and arbitration & Wegman--Carter~\cite{WegmanCarter1981,Stinson1994}; A$^2$/multi-receiver codes~\cite{Simmons1990,DesmedtYung1990,DesmedtFrankelYung1992,SafaviNainiWang1999} & Unconditional authentication, also when sender and receivers distrust each other & No transferable, non-repudiable evidence to third parties; \scheme{} adds secret per-verifier functionals with a global validity predicate.\\
\midrule[0.25pt]
USS with trusted authority & Chaum--Roijakkers~\cite{ChaumRoijakkers1990}; Hanaoka \emph{et al.}~\cite{Hanaoka2000,Hanaoka2002PKC}; Shikata \emph{et al.}~\cite{Shikata2002,Shikata2009}; Swanson--Stinson~\cite{SwansonStinson2016}; group variants~\cite{Seito2009} & Polynomial USS with transferability; formal models with dispute resolution & The authority generates all keys and can forge; in \scheme{}, keys are issued by MPC with virtual anchors, and the signer does not learn the key polynomial.\\
\midrule[0.25pt]
USS without authority & Amiri \emph{et al.}~\cite{Amiri2016eprint,Amiri2018} & Universal-hash keys exchanged over secret channels & One message per key distribution; transfer bounded by degrading thresholds. \scheme{} is multi-use per index, with unbounded transfer.\\
\midrule[0.25pt]
Quantum digital signatures and quantum-secured ledgers & QDS~\cite{GottesmanChuang2001,Dunjko2014,Wallden2015,Amiri2016PRA,Arrazola2016}; one-time universal hashing~\cite{Yin2023NSR,Li2023OTUH}; multi-user~\cite{Pelet2022,Kiktenko2022PRA}; ledgers~\cite{Kiktenko2018QSB,Sun2019Entropy,Li2022QDPoS} & IT security through quantum key distribution & Quantum links, per-message keys, bounded transfer. \scheme{} needs no quantum hardware and can use QKD keys for its issuance channels.\\
\midrule[0.25pt]
Information checking and pseudo-signatures & IC~\cite{RabinBenOr1989,CDN2015}; pseudo-signatures~\cite{PfitzmannWaidner1992,PfitzmannWaidner1996,FitziWolf2004,Narayanan2023}; Fitzi framework~\cite{Fitzi2002,TyagiWatanabe2023} & Broadcast and MPC from correlated randomness & Interactive or bounded transfers; the three-party mode of \silm~\cite{SILMARILS} needs broadcast and allows one transfer. \scheme{} meets the same framework non-interactively (\cref{cor:fds}).\\
\midrule[0.25pt]
Designated-verifier signatures & DVS~\cite{JSI96}; strong and universal DVS~\cite{SBWP03,ZhangSusiloGuo2025}; two-party \silm~\cite{SILMARILS} & Deniability: the verifier can simulate signatures & Non-transferable by design; \scheme{} targets the opposite goal, transferable evidence.\\
\midrule[0.25pt]
Post-quantum signatures in ledgers & Surveys and systems~\cite{FernandezCarames2020,Aggarwal2018,PQFabric2021}; NIST standards~\cite{FIPS204,FIPS205,SP800208}; Falcon SNARGs~\cite{Zhaolu2025}; BLS and Fabric multi-signatures~\cite{Boneh2018BLS,Xiao2021TIFS}; forensics~\cite{Forensics2021,Polygraph2021} & PQ replacements and aggregation for consensus; accountability needs transferable evidence & Large signatures, costly aggregation, computational security; BLS is not PQ. \scheme{} gives small, linearly aggregatable IT evidence.\\
\midrule[0.25pt]
Threshold certificates and MAC-based BFT & BLS threshold~\cite{CKS2005,SBFT2019}; lattice threshold~\cite{ThRaccoon2024,Ringtail2025}; PBFT with MACs~\cite{PBFT2002} & Compact committee certificates; signature-free consensus & Computational and publicly verifiable, or non-transferable (MACs). TN combines MAC-authenticated agreement with IT certificates (\cref{sec:tn,tab:threshold}).\\
\midrule[0.25pt]
Post-quantum quorum certificates & Distributed quorum signatures (DQS)~\cite{KSW2026DQS} & Constant-size messages from ordinary post-quantum (e.g., hash-based one-time) signatures and a Bracha-style approval broadcast; certificates become local events & Computational. A certificate convinces a party outside the protocol only through the $O(n)$ underlying signed votes and approvals, and only some correct node can attribute the signers. TN gives a constant-size, transferable, IT certificate, at the price of secret verifier keys (\cref{tab:threshold}).\\
\midrule[0.25pt]
Honest-majority MPC and resharing & BGW/CCD~\cite{BGW1988,AsharovLindell2017,CCD1988}; linear MPC~\cite{DN2007,BTH2008,ATLAS2021,TurboPack2022}; packing~\cite{FranklinYung1992}; composition~\cite{KLR2010,Canetti2000}; OEC~\cite{BCG1993}; proactive sharing and redistribution~\cite{Herzberg1995,WongWangWing2002} & Perfect security for $t<n/3$; robust reconstruction; committee hand-over & General-purpose tools. $\Pki$ instantiates them for a depth-one circuit; \cref{lem:reshare} gives a perfectly private batch redistribution.\\
\bottomrule
\end{tabular}
\end{table*}

\begin{table*}[!t]
\centering
\caption{Qualitative comparison of signature families relevant to permissioned ledgers. ``IT'' = information-theoretic; ``TA'' = trusted authority; ``Hops'' = number of transfers between verifiers that preserve validity.}
\label{tab:related}
\scriptsize
\setlength{\tabcolsep}{2.5pt}
\resizebox{\textwidth}{!}{%
\begin{tabular}{@{}lllllllll@{}}
\toprule
\textbf{Family} & \textbf{Security basis} & \textbf{Setup / trust} & \textbf{Key reuse} & \textbf{Hops} & \textbf{Verifier forges?} & \textbf{Per-sig.\ interaction} & \textbf{Aggregation} & \textbf{Special hardware}\\
\midrule
ML-DSA, Falcon, SLH-DSA~\cite{FIPS204,Falcon2020,FIPS205} & computational (PQ) & PKI & unlimited & unbounded & no & none & SNARG-based~\cite{Zhaolu2025} & none\\
XMSS/LMS~\cite{SP800208} & computational (hash) & PKI, stateful & bounded & unbounded & no & none & no & none\\
Ed25519 / BLS~\cite{Ed25519,Boneh2018BLS} & computational (not PQ) & PKI & unlimited & unbounded & no & none & BLS: yes & none\\
DVS / SDVS~\cite{JSI96,ZhangSusiloGuo2025} & computational & PKI & unlimited & none (by design) & yes (simulates) & none & no & none\\
\silm{} 2-party~\cite{SILMARILS} & comp.\ (PRF, ROM/QROM) & session channel & unlimited & none (DV) & yes (simulates) & none & no & none\\
\silm{} 3-party~\cite{SILMARILS} & IT (IC) & auth.\ channels + broadcast & per signature & 1 & no & IC challenge phase & no & none\\
IC / pseudo-signatures~\cite{RabinBenOr1989,PfitzmannWaidner1996,FitziWolf2004} & IT & correlated randomness & bounded & bounded & no & IC: yes & no & none\\
USS with TA~\cite{Hanaoka2000,Hanaoka2002PKC,Shikata2002,SwansonStinson2016} & IT & trusted authority & bounded & supported & no & none & not studied & none\\
Amiri~\emph{et al.}~\cite{Amiri2018} & IT & secret channels, no TA & per message & bounded (levels) & no & key distribution & no & none\\
QDS~\cite{Wallden2015,Yin2023NSR,Li2023OTUH,Kiktenko2022PRA} & IT (quantum) & quantum links / QKD & per message & bounded (levels) & no & quantum distribution & no & quantum\\
BLS / lattice threshold~\cite{CKS2005,ThRaccoon2024,Ringtail2025} & comp.\ (BLS: not PQ) & DKG, PKI & unlimited & unbounded & no & lattice: signing rounds & threshold & none\\
DQS~\cite{KSW2026DQS} & computational (PQ) & PKI & one-time keys (WOTS-TS) & unbounded, $O(n)$ proof & no & approval broadcast & none (local events) & none\\
\textbf{\scheme{} (this work)} & \textbf{IT}$^{\ast}$ & \textbf{committee, $t<n/3$, point-to-point only} & \textbf{$\psi N$ per epoch} & \textbf{unbounded} & \textbf{no} & \textbf{none} & \textbf{linear and threshold} & \textbf{none}$^{\ddagger}$\\
\bottomrule
\multicolumn{9}{@{}l@{}}{$^{\ast}$For messages in $\F$; hash-then-sign for long messages adds only collision resistance. $^{\ddagger}$Issuance needs private channels, which are information-theoretic with QKD or pre-shared pads; $t+1$ colluding committee members of an epoch could forge for that epoch.}
\end{tabular}}
\end{table*}

\section{Preliminaries and System Model}\label{sec:model}

\subsection{Notation and Algebraic Tools}
Let $q$ be a prime and $\F$ the field with $q$ elements. Vectors are columns and $\T$ denotes transposition. For integers $\omega,\psi\ge 1$ and $x,m\in\F$ we write
\begin{align*}
\blam(x)&=(1,x,x^2,\dots,x^{\omega})\T\in\F^{\omega+1},\\
\bfe(m)&=(1,m,\dots,m^{\psi})\T\in\F^{\psi+1}.
\end{align*}
For a matrix $K$, $\operatorname{row}(K)$ is its row space. $\Pr[\cdot]$ is taken over all coins of the experiment, and $x\xleftarrow{\$}S$ denotes uniform sampling. We use two elementary facts throughout.

\begin{lemma}[Vandermonde independence]\label{lem:vdm}
If $x_0,\dots,x_{\omega}\in\F$ are pairwise distinct, then $\blam(x_0),\dots,\blam(x_\omega)$ form a basis of $\F^{\omega+1}$. Likewise, any $j\le\psi+1$ vectors $\bfe(m_1),\dots,\bfe(m_j)$ with distinct $m_\ell$ are linearly independent.
\end{lemma}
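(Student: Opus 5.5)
The plan is to reduce both claims to the nonvanishing of a Vandermonde determinant. For the first claim, form the $(\omega+1)\times(\omega+1)$ matrix $V$ whose $j$-th column is $\blam(x_j)$, so that $V_{k j}=x_j^{k}$ for $k,j\in\{0,\dots,\omega\}$. The classical identity
\[
\det V=\prod_{0\le j<j'\le\omega}(x_{j'}-x_j)
\]
holds over any commutative ring, in particular over $\F$. Since the $x_j$ are pairwise distinct, every factor is a nonzero element of the field $\F$, and so is their product. Hence $V$ is invertible, its $\omega+1$ columns are linearly independent, and since $\dim\F^{\omega+1}=\omega+1$ they form a basis.

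If we prefer not to cite the determinant formula, we can argue through polynomial roots instead. Suppose $\sum_j \gamma_j\blam(x_j)=0$; this says $V\boldsymbol{\gamma}=0$. It suffices to show $V\T$ has trivial kernel. A vector $\mathbf{a}$ with $V\T\mathbf{a}=0$ gives a polynomial $a(X)=\sum_k a_k X^k$ of degree at most $\omega$ that vanishes at the $\omega+1$ distinct points $x_0,\dots,x_\omega$. Over a field this forces $a=0$, so $V\T$ is injective and therefore $V$ is invertible.

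For the second claim, take $j\le\psi+1$ distinct messages $m_1,\dots,m_j$. The case $j=\psi+1$ is exactly the first claim with $\omega$ replaced by $\psi$, the map $\blam$ replaced by $\bfe$, and the points $m_\ell$ in place of the $x$'s. For $j<\psi+1$, we can pad the list with $\psi+1-j$ further field elements distinct from the $m_\ell$ and from each other. This is possible provided $q\ge\psi+1$, which is the implicit standing assumption that $\F$ has enough elements. The padded family is then a basis, and any subfamily of a basis is linearly independent.

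Alternatively, we can avoid padding and the field-size condition altogether by applying the root argument directly. Suppose $\sum_\ell\gamma_\ell\bfe(m_\ell)=0$. Then for every polynomial $p$ of degree at most $\psi$ we have $\sum_\ell\gamma_\ell p(m_\ell)=0$. Choosing $p$ to be the Lagrange polynomial $p(X)=\prod_{\ell'\ne\ell}(X-m_{\ell'})$, which has degree $j-1\le\psi$, isolates $\gamma_\ell\,p(m_\ell)=0$. Since $p(m_\ell)\ne0$, this gives $\gamma_\ell=0$. I would present this Lagrange argument, since it is uniform in $j$. There is no real obstacle here. The only point needing care is that $\F$ is a field, so that products of nonzero elements are nonzero and a nonzero polynomial has no more roots than its degree.
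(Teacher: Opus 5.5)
Your proof is correct. The paper gives no proof of this lemma: it lists it as one of ``two elementary facts'' and proves only the affine-uniformity lemma, so your Vandermonde-determinant and root-counting arguments are simply the standard justification the paper takes for granted.

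Your caveat that padding needs $q\ge\psi+1$ is covered by the paper's standing assumption $\psi<q$. The Lagrange argument you prefer removes the caveat entirely. Applied to $\blam$, the same argument also shows that any $j\le\omega+1$ vectors $\blam(x_s)$ with distinct $x_s$ are independent. That subfamily form, with $|C|+1\le\omega+1$ vectors, is what the paper actually invokes in \eqref{eq:indep}.
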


\begin{lemma}[Affine uniformity]\label{lem:affine}
Let $\mathcal{S}\subseteq\F^d$ be a non-empty affine subspace with linear part $\mathcal{L}$, let $X$ be uniform on $\mathcal{S}$, and let $f:\F^d\to\F$ be affine. If there is $\Delta\in\mathcal{L}$ with $f(X+\Delta)\ne f(X)$, then $f(X)$ is uniform on $\F$; in particular $\Pr[f(X)=y]=1/q$ for every fixed $y$.
\end{lemma}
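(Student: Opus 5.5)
The plan is to reduce the statement to the fibres of the affine map $f$ over $\mathcal{S}$. Write $\mathcal{S}=s_0+\mathcal{L}$ for some fixed $s_0\in\mathcal{S}$, and write $f(x)=\ell(x)+c$ with $\ell:\F^d\to\F$ linear and $c\in\F$. Then $f(X+\Delta)-f(X)=\ell(\Delta)$ does not depend on $X$. The hypothesis therefore says exactly that the restriction $\ell|_{\mathcal{L}}$ is a nonzero linear functional on $\mathcal{L}$, and hence surjective onto $\F$. I read the hypothesis as holding for some (equivalently, every) point, since the difference is constant; this is the only thing the statement needs.

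\textbf{Key steps.}
\begin{enumerate}
\item Since $\ell|_{\mathcal{L}}$ is surjective, its kernel $\mathcal{K}=\mathcal{L}\cap\ker\ell$ has dimension $\dim\mathcal{L}-1$.
\item For each $y\in\F$, the fibre $\{x\in\mathcal{S}: f(x)=y\}$ equals $\{s_0+v: v\in\mathcal{L},\ \ell(v)=y-f(s_0)\}$. This is a coset of $\mathcal{K}$ inside $s_0+\mathcal{L}$, and it is non-empty by surjectivity.
\item Every fibre therefore has exactly $q^{\dim\mathcal{L}-1}$ elements. Since $|\mathcal{S}|=q^{\dim\mathcal{L}}$, uniformity of $X$ gives
\[
\Pr[f(X)=y]=q^{\dim\mathcal{L}-1}/q^{\dim\mathcal{L}}=1/q
\]
for every $y$. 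That is, $f(X)$ is uniform on $\F$.
\end{enumerate}

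An alternative route to the same conclusion uses translation. The map $x\mapsto x+\lambda\Delta$ is a bijection of $\mathcal{S}$ that preserves the uniform distribution, and it shifts $f$ by $\lambda\,\ell(\Delta)$. Letting $\lambda$ range over $\F$ shows that the law of $f(X)$ is invariant under all translations of $\F$, and hence uniform.

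There is no real obstacle. The only point requiring care is the first one: recognising that the hypothesis is a statement about the linear part $\ell$ and is independent of the random point $X$.
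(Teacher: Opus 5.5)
Your proof is correct. Your main argument uses a different partition of $\mathcal{S}$ from the paper's. You partition $\mathcal{S}$ into the level sets of $f$, identify each one as a non-empty coset of $\mathcal{K}=\mathcal{L}\cap\ker\ell$, and count its points via $\dim\mathcal{K}=\dim\mathcal{L}-1$. The paper instead partitions $\mathcal{S}$ into the lines $X_0+\{s\Delta: s\in\F\}$ parallel to $\Delta$, which cut across those level sets. On each such line, $f(X_0)+s\,\ell(\Delta)$ takes every value of $\F$ exactly once, and averaging over the lines gives $1/q$.

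Your alternative translation argument is essentially the paper's proof, phrased as invariance of the law of $f(X)$ under $x\mapsto x+\lambda\Delta$. The paper's route needs no dimension count. It uses only that $\mathcal{S}$ is a finite set closed under translations by multiples of $\Delta$, so that it is a union of such lines, and that $f$ is affine along $\Delta$. Your fibre count uses the full subspace structure but yields a little more: every fibre is an affine subspace with exactly $q^{\dim\mathcal{L}-1}$ points. Hence $X$ conditioned on $f(X)=y$ is again uniform on an affine subspace.
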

\begin{proof}
Write $f(\mathbf{z})=\ell(\mathbf{z})+f_0$ with $\ell$ linear; then $\ell(\Delta)=f(X+\Delta)-f(X)\neq 0$. The sets $X_0+\{s\Delta: s\in\F\}$ partition $\mathcal{S}$ into lines of equal size. Along each line $f$ takes the values $f(X_0)+s\,\ell(\Delta)$, which is every element of $\F$ exactly once. Averaging over the lines gives the claim.
\end{proof}

\subsection{Participants, Channels and Adversary}
\textbf{Participants.} A \emph{committee} $\mathcal{P}=\{P_1,\dots,P_n\}$, the validators of a permissioned ledger, tolerates $t<n/3$ Byzantine members. A set of \emph{verifiers} $\mathcal{V}=\{V_1,\dots,V_\nu\}$, normally $\mathcal{V}=\mathcal{P}$, is indexed by public, pairwise distinct, non-zero identifiers $x_1,\dots,x_\nu\in\F$; hence $\nu\le q-1$. Committee share points (\cref{sec:tn,sec:reshare}) are likewise distinct and non-zero, which requires $n,n'\le q-1$, and we assume $\psi<q$. These conditions hold trivially for the field $q=2^{255}-19$ used throughout. A set of \emph{signers} $\mathcal{U}$ contains validators and clients. The \emph{committee arbiter} $V_0$ is a virtual verifier at point $x_0=0$: its key exists only in shared form inside the committee.

\textbf{Channels.} Key issuance (\cref{sec:dki}) assumes synchronous, private, authenticated point-to-point channels between committee members and from the committee to each signer and verifier. This is the standard model of information-theoretic MPC~\cite{BGW1988,CDN2015}; \cref{sec:discussion} discusses how to instantiate it. Signatures themselves travel over arbitrary public channels, and their verification is non-interactive.

\textbf{No broadcast channel.} We do \emph{not} assume a broadcast channel. Signing, aggregation and verification involve no broadcast at all: a signature is a single message that may be forwarded point-to-point, or through untrusted relays, any number of times. Key issuance and resolution use point-to-point channels only. Where the underlying MPC sub-protocols are stated with a broadcast channel (in their complaint and fault-detection phases), broadcast is emulated perfectly for $t<n/3$ by Byzantine agreement over point-to-point channels~\cite{PSL1980}, which the committee runs anyway as part of consensus. This contrasts with the three-party mode of \silm{} and with information checking~\cite{RabinBenOr1989,TyagiWatanabe2023}, where every signature requires an authenticated broadcast of the holder's challenge and of the consistency verdicts.

\textbf{Adversary.} The adversary $\mathcal{A}$ is computationally unbounded, malicious and static. It corrupts a set $T\subseteq\mathcal{P}$ with $|T|\le t$, a set of verifiers $C\subseteq\mathcal{V}$ with $|C|\le\omega$ and $T\cap\mathcal{V}\subseteq C$, and an \emph{arbitrary} subset $\mathcal{U}_C\subseteq\mathcal{U}$ of signers. When $\mathcal{V}=\mathcal{P}$ we choose $\omega\ge t$. Because security is information-theoretic and all honest messages are classical, the guarantees hold verbatim against quantum adversaries (\cref{rem:quantum}).

\subsection{Syntax}
\begin{definition}[Multi-verifier USS]
A multi-verifier unconditionally secure signature (MV-USS) scheme with parameters $(q,\omega,\psi,N)$ consists of:
$\mathsf{KeyIssue}$, an interactive protocol run by $\mathcal{P}$ (or, in the trusted-dealer variant, a dealer). It gives each signer $U\in\mathcal{U}$ a stateful signing key with $N$ \emph{indices} $k\in[N]:=\{0,\dots,N-1\}$, each usable for $\psi$ distinct messages, and gives each $V_i$ a secret verification key. $\Sign(U,k,m)\to\sigma$, which signs $m\in\F$ under index $k\in[N]$, or returns $\perp$ if index $k$ has already signed $\psi$ distinct messages. $\Ver_i(\sigma)\to\{0,1\}$, run by $V_i$. $\Agg$ and $\AggVer_i$, which combine and verify signatures of distinct signers on a common message. $\Res(\sigma)\to\{0,1\}$, the committee arbiter's verdict.
Arbitrary messages $M\in\{0,1\}^*$ are mapped to $m=H(M)\in\F$ by a collision-resistant hash (\cref{rem:hash}).
\end{definition}

\subsection{Security Model}
We follow the philosophy of Swanson and Stinson~\cite{SwansonStinson2016} and distinguish signatures that the signer \emph{produced} from signatures that merely \emph{pass} a verification algorithm. To quantify over adaptive behavior, the adversary observes every decision of every honest verifier.

\begin{definition}[Validity, authenticity, acceptability]\label{def:valid}
Each signer $U$ and index $k$ determine a secret \emph{signing matrix} $A_{U,k}$ (\cref{sec:construction}). An individual signature $\sigma=(U,k,m,\balpha)$ is \emph{valid} if $\balpha=A_{U,k}\bfe(m)$. It is \emph{authentic} if it was output by the signing oracle of an honest $U$. An aggregate $\sigma_S=(m,(U,k_U)_{U\in S},\balpha_S)$ is \emph{valid} if $\balpha_S=\sum_{U\in S}A_{U,k_U}\bfe(m)$, and \emph{authentic} if it is valid and every honest $U\in S$ has signed $(k_U,m)$. A signature or aggregate is \emph{acceptable} if it is authentic whenever it involves an honest signer, and valid otherwise.
\end{definition}

\begin{definition}[Experiment $\mathsf{Exp}^{\mathsf{mv\text{-}uss}}_{\mathcal{A}}$]\label{def:exp}
(i)~$\mathcal{A}$ fixes $T,C,\mathcal{U}_C$. Keys are issued, and $\mathcal{A}$ receives the keys of all corrupted signers and verifiers together with the view of $T$ during issuance. (ii)~$\mathcal{A}$ adaptively queries: \textsc{Sign}$(U,k,m)$ for honest $U$; \textsc{Ver}$(i,\sigma)$ and \textsc{AggVer}$(i,\sigma_S)$ for honest $V_i$, $i\in\{0,\dots,\nu\}\setminus C$; and \textsc{Res}$(\sigma)$. Each query returns the honest decision bit. Every decision honest parties take in a higher-level protocol is modeled as such a query. (iii)~$\Bad$ is the event that some query to an honest verifier (or to \textsc{Res}) returns $1$ on a signature or aggregate that is \emph{not acceptable}.
\end{definition}

\begin{definition}[$(Q,\varepsilon)$-security]\label{def:sec}
An MV-USS scheme is \emph{perfectly correct} if every valid signature and aggregate is accepted by every honest verifier and by $\Res$ with probability $1$. It is \emph{$(Q,\varepsilon)$-secure} if, for every adversary making at most $Q$ verification, aggregate-verification and resolution queries in total, $\Pr[\Bad]\le\varepsilon$.
\end{definition}

The single event $\Bad$ captures every requirement of~\cite{SwansonStinson2016} at once:
\emph{strong unforgeability}: an honest verifier never accepts a non-authentic signature of an honest signer, even one that modifies a genuine signature; \emph{non-repudiation}: a dishonest signer cannot get an \emph{invalid} signature accepted by an honest verifier; hence any accepted signature is valid and will also be upheld by the arbiter; \emph{transferability}: validity is a global predicate; if one honest verifier accepts, all honest verifiers accept (\cref{cor:transfer}); \emph{sound dispute resolution}: $\Res$ is an honest verifier.

\section{The \scheme{} Construction}\label{sec:construction}

\subsection{Design Rationale}\label{sec:rationale}
Three failure modes guided the design. We observed each of them in earlier algebraic attempts, including ours~\cite{SILMARILS}.
(i)~\emph{Public reconstruction is forgeable.} If acceptance is a \emph{public} linear predicate of the signature, as in the Shamir reconstruction of the \silm{} core once its receipt is known, then anyone can solve the predicate for a fresh message. Acceptance must therefore depend on verifier-specific \emph{secret} coefficients. (ii)~\emph{Shared verification secrets collapse to a MAC.} If a verifier's secret determines the signer's secret, the verifier can forge, and evidence is not transferable. A coalition of up to $\omega$ verifiers must learn too little to predict the signer's response to a new message. (iii)~\emph{Signer-dealt keys break agreement.} If the signer generates the verifiers' keys, it can give them inconsistent keys, and its signatures are accepted by some honest verifiers and rejected by others. Verification keys must be generated consistently by a party the signer does not control.
\scheme{} meets the three requirements with secret per-verifier linear functionals, uniformly random signing matrices, and committee-generated keys. No rank condition on the signing matrix is needed: the security proof uses only its uniform distribution.

\subsection{Ideal Key Issuance}
We first describe key issuance as an ideal functionality $\Fki$. It can be realized by a trusted dealer or, without any trusted party, by the committee protocol $\Pki$ of \cref{sec:dki}.

\begin{figure}[!t]
\fbox{\begin{minipage}{0.96\columnwidth}\small
\textbf{Functionality} $\Fki(q,\omega,\psi,N;\,x_0=0,x_1,\dots,x_\nu)$
\begin{enumerate}
\item \emph{Epoch anchors.} Sample $\Gamma\xleftarrow{\$}\F^{(\omega+1)\times(\omega+1)}$. For $i\in[\nu]$, send $\bc_i:=\Gamma\T\blam(x_i)$ to $V_i$. Keep $\bc_0:=\Gamma\T\blam(0)$ internally (the arbiter key).
\item \emph{Signer indices.} For every $U\in\mathcal{U}$ and $k\in[N]$ sample $A_{U,k}\xleftarrow{\$}\F^{(\omega+1)\times(\psi+1)}$ and send it to $U$. For $i\in[\nu]$, send $\bb_{i,U,k}:=A_{U,k}\T\bc_i\in\F^{\psi+1}$ to $V_i$, and keep $\bb_{0,U,k}$ internally.
\item \emph{Arbitration.} On input $\sigma=(U,k,m,\balpha)$ from the committee, return $1$ iff $\bc_0\T\balpha=\bb_{0,U,k}\T\bfe(m)$; aggregates are handled analogously.
\end{enumerate}
\end{minipage}}
\caption{Ideal key-issuance functionality. $\Gamma$ (the \emph{virtual anchors}) and the arbiter key are never output to anyone.}
\label{fig:fki}
\end{figure}

Write $M_{U,k}:=\Gamma A_{U,k}$. Then
\begin{equation}\label{eq:keypoly}
\bb_{i,U,k}\T=\bc_i\T A_{U,k}=\blam(x_i)\T\,\Gamma A_{U,k}=\blam(x_i)\T M_{U,k}.
\end{equation}
Verifier keys are therefore evaluations, at the public identifiers $x_i$, of the bivariate polynomial
\[
F_{U,k}(x,z)=\blam(x)\T M_{U,k}\,\bfe(z),
\]
of degree $\omega$ in $x$ and $\psi$ in $z$. The key of $V_i$ is $z\mapsto F_{U,k}(x_i,z)$. What distinguishes this from simply emulating the trusted authority of a polynomial USS~\cite{Hanaoka2000,Shikata2002} by MPC is that the \emph{signer} does not know $F_{U,k}$. If the signer's key determined the bivariate polynomial at public verifier identifiers, a dishonest signer could tailor signatures that pass at chosen identifiers and fail at others. It knows only $A_{U,k}$, which is related to $M_{U,k}=\Gamma A_{U,k}$ through the secret anchors (when $\Gamma$ is invertible, $A_{U,k}=\Gamma^{-1}M_{U,k}$ is a version of $M_{U,k}$ rotated by them). This is what prevents a dishonest signer from tailoring signatures to the public identifiers.

\subsection{Signing, Verification, Aggregation and Resolution}
\begin{algorithm}[!t]
\caption{\scheme{} algorithms}\label{alg:core}
\begin{algorithmic}[1]
\Statex \textbf{State of signer $U$:} $(A_{U,k})_{k\in[N]}$ and sets $\mathcal{M}_k\subseteq\F$ of messages signed under index $k$, initially empty.
\Procedure{Sign}{$U,k,m$}
  \If{$m\notin\mathcal{M}_k$ \textbf{and} $|\mathcal{M}_k|=\psi$} \Return $\perp$ \EndIf
  \State $\mathcal{M}_k\gets\mathcal{M}_k\cup\{m\}$; \quad $\balpha\gets A_{U,k}\,\bfe(m)$ \Comment{$(\omega+1)(\psi+1)$ mult.}
  \State \Return $\sigma=(U,k,m,\balpha)$
\EndProcedure
\Procedure{Verify$_i$}{$\sigma=(U,k,m,\balpha)$}
  \State \Return $[\,\bc_i\T\balpha=\bb_{i,U,k}\T\bfe(m)\,]$ \Comment{$\omega+\psi+2$ mult.}
\EndProcedure
\Procedure{Aggregate}{$m,(\sigma_U)_{U\in S}$} \Comment{all $\sigma_U$ on $m$}
  \State \Return $\sigma_S=(m,(U,k_U)_{U\in S},\textstyle\sum_{U\in S}\balpha_U)$
\EndProcedure
\Procedure{AggVerify$_i$}{$\sigma_S=(m,(U,k_U)_{U\in S},\balpha_S)$}
  \State \Return $[\,\bc_i\T\balpha_S=\sum_{U\in S}\bb_{i,U,k_U}\T\bfe(m)\,]$
\EndProcedure
\Procedure{Resolve}{$\sigma$} \Comment{committee MPC, \cref{sec:dki}}
  \State compute $\sh{\delta}=\sh{\bc_0}\T\balpha-\sh{\bb_{0,U,k}}\T\bfe(m)$; take a fresh preprocessed $\sh{\rho}$, $\rho\in\F^{*}$
  \State open $\delta\rho$; \Return $[\,\delta\rho=0\,]$
\EndProcedure
\end{algorithmic}
\end{algorithm}

\Cref{alg:core} lists the algorithms. Correctness is immediate: for a valid $\balpha=A_{U,k}\bfe(m)$, $\bc_i\T\balpha=\bc_i\T A_{U,k}\bfe(m)=\bb_{i,U,k}\T\bfe(m)$, and the aggregate check follows by linearity. In $\Res$, a sharing of a uniform non-zero mask $\rho$ is obtained in the standard way: generate random sharings $\sh{\rho},\sh{s}$, open $\rho s$, and retry if it is $0$~\cite{CDN2015}. Each attempt costs two $\mathsf{Rand}$ outputs, one multiplication and one opening, and succeeds with probability $(1-1/q)^2$. Masks are preprocessed in batches, and every mask is used for exactly one test: reusing $\rho$ on two non-zero residuals would reveal their ratio $\delta_1/\delta_2$. Then $\delta\rho$ is $0$ iff $\delta=0$, and otherwise it is uniform on $\F^*$; consequently, the arbiter key is never exposed. The online cost of a resolution is one multiplication and one opening; preprocessing the mask costs one more of each, plus two $\mathsf{Rand}$ outputs.

\textbf{Input validation.} The security analysis quantifies over well-formed objects, and implementations must enforce the same domain. $\Ver_i$, $\AggVer_i$ and $\Res$ reject an object whose field elements are not canonically encoded in $\{0,\dots,q-1\}$, whose index lies outside $[N]$ or was not issued to the stated signer for the object's \emph{declared signing epoch}, or whose message is not bound to that epoch and to the protocol domain. Keys are selected by the declared epoch. Whether an object from an earlier epoch is admissible is a separate policy of the operation: consensus messages must carry the current epoch, whereas historical verification and dispute resolution accept any epoch whose keys are retained. $\AggVer_i$ additionally rejects aggregates whose signer identities are not pairwise distinct; otherwise a repeated identity could inflate a quorum.

\begin{proposition}[Correctness]\label{prop:correct}
\scheme{} is perfectly correct: every valid, well-formed signature or aggregate that is admissible for the operation (see the input validation above) is accepted by every verifier holding keys output by $\Fki$ for its declared epoch, and by $\Res$.
\end{proposition}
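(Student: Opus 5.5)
The plan is to verify the acceptance predicate of each procedure directly from the definitions in $\Fki$ and \cref{alg:core}, separating the local checks $\Ver_i$, $\AggVer_i$ from the MPC check $\Res$. Throughout we fix a declared epoch and use only the keys $\Fki$ issued for that epoch. The well-formedness and admissibility hypotheses guarantee that the verifier selects exactly the matrices $\bb_{i,U,k}$ (or $\bb_{i,U,k_U}$) that $\Fki$ computed from the same $A_{U,k}$ that defines validity in \cref{def:valid}. This point is what ties ``valid'' to the stored keys, and I would state it first.

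\textbf{Individual signatures.} Let $\sigma=(U,k,m,\balpha)$ be valid, so $\balpha=A_{U,k}\bfe(m)$. For any $i\in\{0,\dots,\nu\}$, $\Fki$ sets $\bb_{i,U,k}=A_{U,k}\T\bc_i$. Hence
\[
\bc_i\T\balpha=\bc_i\T A_{U,k}\bfe(m)=(A_{U,k}\T\bc_i)\T\bfe(m)=\bb_{i,U,k}\T\bfe(m).
\]
This is an identity over $\F$ that holds for every realization of $\Gamma$ and $A_{U,k}$, so $\Ver_i$ returns $1$ with probability $1$. No assumption on $\Gamma$ or on the rank of $A_{U,k}$ is needed, which is consistent with the remark in \cref{sec:rationale}.

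\textbf{Aggregates.} Let $\sigma_S$ be valid with pairwise distinct signer identities, as input validation requires. Then $\balpha_S=\sum_{U\in S}A_{U,k_U}\bfe(m)$. By linearity of $\bc_i\T(\cdot)$ and the per-signer identity above,
\[
\bc_i\T\balpha_S=\sum_{U\in S}\bc_i\T A_{U,k_U}\bfe(m)=\sum_{U\in S}\bb_{i,U,k_U}\T\bfe(m),
\]
which is exactly the test in $\AggVer_i$. Distinctness only ensures the object is admissible; it plays no role in the algebra.

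\textbf{Resolution.} The same identity with $i=0$ gives $\delta=\bc_0\T\balpha-\bb_{0,U,k}\T\bfe(m)=0$ (and the analogous identity for aggregates). The committee computes $\sh{\delta}$ by linear operations on the shares of $\bc_0$ and $\bb_{0,U,k}$ with the public coefficients $\balpha$ and $\bfe(m)$. Linear operations on sharings yield a correct sharing of the linear combination, so $\sh{\delta}$ is a sharing of $0$. Multiplying it by the mask sharing $\sh{\rho}$ and opening yields $\delta\rho=0$, and $\Res$ outputs $1$.

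I expect the main obstacle to be this resolution step. Its correctness with probability $1$ rests on two facts about the MPC layer. First, the multiplication and opening must be perfectly correct for $t<n/3$ despite Byzantine members; this follows from robust reconstruction via error correction, deferred to the sub-protocols of \cref{sec:dki}. Second, the mask-generation loop must terminate. It succeeds with probability $(1-1/q)^2$ per attempt and is preprocessed, so its failure affects only running time and never the verdict. Indeed the verdict is $0$ on a valid object whenever the opened value is $0$, and that opened value is $\delta\rho=0$ regardless of $\rho$. So I would state that correctness of $\Res$ holds given the perfect correctness of the underlying perfectly secure MPC, or in the ideal model where it holds by the definition of $\Fki$ step~3.
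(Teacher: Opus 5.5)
Your proposal is correct and follows the paper's own argument: the identity $\bc_i\T A_{U,k}\bfe(m)=\bb_{i,U,k}\T\bfe(m)$ from $\bb_{i,U,k}=A_{U,k}\T\bc_i$, linearity for aggregates, and $\delta=0\Rightarrow\delta\rho=0$ for $\Res$. One slip in your last paragraph: $\Res$ outputs $[\delta\rho=0]$, so the verdict on a valid object is $1$, not $0$.
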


\subsection{Parameters and Costs}
Typical choices are $q=2^{255}-19$, $\omega=t$ (the committee's fault bound), $\psi\in\{1,16,64\}$ messages per index, and $N$ indices per epoch. Per signer-index the signing key holds $(\omega+1)(\psi+1)$ field elements and each verifier stores $\psi+1$. A verifier additionally stores $\bc_i$ once per epoch. A signature carries its index (we budget $4$\,bytes) and $\omega+1$ field elements; an aggregate carries a signer bitmap, the list $(k_U)_{U\in S}$ of signer indices and $\omega+1$ field elements. Signing costs $(\omega+1)(\psi+1)$ multiplications plus $\psi-1$ to compute $\bfe(m)$, and verification costs $\omega+1$ plus $\psi+1$ multiplications. With $\psi=1$, a signature is \emph{one} matrix--vector product with a $2$-column matrix.

\textbf{Parameter selection.}\label{app:params}
\emph{Field.} With $q=2^{255}-19$ and, generously, $Q=2^{64}$ verification decisions over the system's lifetime, spread over at most $E=2^{64}$ epochs with independently sampled anchors, \cref{thm:main} and \cref{rem:epochs} give $\Pr[\Bad]\le2^{64}/(2^{255}-19)+2^{64}/(2^{255}-20)<2^{-189}$ for the core authentication budget; the MAC, decoder and redistribution terms of \cref{rem:budget} are separate. This bound holds against unbounded, including quantum, adversaries. \emph{Hashing.} For hashed long messages, the relevant assumption is collision resistance of the composite map $M\mapsto\mathrm{int}(H(M))\bmod q$, including distinct digests with the same residue. Because its range has $q\approx2^{255}$ elements, generic attacks find collisions with about $q^{1/2}\approx2^{127.5}$ classical or $q^{1/3}\approx2^{85}$ quantum queries~\cite{BHT1998}, whatever the digest length of $H$; SHA3-512 improves the uniformity of $m$ but not this margin. A larger margin requires a larger range: splitting a SHA3-512 digest injectively into three field elements (e.g., $254+254+4$ bits, each chunk below $q$) and signing them with the vector encoding of \cref{rem:hash} ($L=3$, one message per index, encoding dimension $4$) reduces the assumption to collision resistance of SHA3-512 itself. \emph{Collusion bound.} $\omega$ equals the number of verifiers that may be corrupted, i.e., $t$ for a BFT committee of $n=3t+1$. \emph{Capacity.} $\psi\in[1,64]$ balances signing time, which is linear in $\psi$, against verifier storage of $(\psi+1)/\psi$ elements per signature. \emph{Epochs.} $N$ indices per signer cover the epoch's expected signatures; for consensus, $N\approx(\text{blocks per epoch})/\psi$.
For example, at $n=31$, $\omega=10$, $\psi=16$ and a one-hour epoch of $3600$ blocks, each validator stores $3.7$\,MiB of verifier keys and $1.3$\,MiB of signing keys.

\begin{remark}[Message space]\label{rem:hash}
Messages are elements of $\F$. Longer messages are hashed, $m=H(M)$, with a collision-resistant $H$ such as SHA3-256 reduced modulo $q$. A forger then either breaks collision resistance or triggers $\Bad$; hence security degrades additively by $\mathsf{Adv}^{\mathrm{CR}}_H$, and \emph{only} collision resistance is used, never a random oracle. This term is a computational caveat: an unbounded adversary that chooses the hashed messages can find collisions, and the guarantee is then information-theoretic only for the injective encoding below. In BFT protocols, votes already sign a block digest, and this adds no new assumption. A purely information-theoretic alternative for messages in a fixed finite domain $M\in\F^L$ replaces $\bfe(m)$ by $\bfe(M)=(1,M_1,\dots,M_L)\T$ and allows one distinct message per index: any second signature under the same index is refused. The security proof of \cref{thm:main} applies with $\psi=1$ messages per index, because $(1,M)\T$ and $(1,M')\T$ are independent for $M\neq M'$. The encoding dimension, however, is $L+1$ rather than $\psi+1=2$: then $A_{U,k}\in\F^{(\omega+1)\times(L+1)}$ and $\bb_{i,U,k}\in\F^{L+1}$, and signing, verification, storage and issuance costs scale with $L+1$. The encoding must bind the protocol domain, epoch, slot and value within the declared message space; arbitrarily long messages cannot be encoded injectively into a fixed dimension.
\end{remark}

\begin{remark}[State]\label{rem:state}
As in XMSS/LMS~\cite{SP800208}, the signer keeps a counter. In a BFT ledger, each validator keeps one counter per message type and uses index $k=\lfloor j/\psi\rfloor$ for its $j$-th signature; signing two different messages for the same consensus slot is equivocation (\cref{sec:ledger}). \Cref{prop:tight} shows that exceeding $\psi$ messages on one index reveals that index's key. The damage is confined to that index and to the misbehaving signer.
\end{remark}

\section{Security Analysis}\label{sec:security}

We analyze \scheme{} in the $\Fki$-hybrid model. \Cref{sec:dki} replaces $\Fki$ by the committee protocol $\Pki$ with no loss. Throughout, $C$ is the set of corrupted verifiers ($|C|\le\omega$), and we write $\Lambda_C=[\blam(x_s)]_{s\in C}\in\F^{(\omega+1)\times|C|}$ and $K=\Lambda_C\T\Gamma$. The rows of $K$ are exactly the corrupted functionals $\bc_s\T$, $s\in C$. Since the arbiter point $x_0=0$ differs from every $x_s$, \cref{lem:vdm} gives
\begin{equation}\label{eq:indep}
\blam(x_i)\notin\spn\{\blam(x_s):s\in C\},\quad\forall\, i\in\{0,\dots,\nu\}\setminus C.
\end{equation}

\subsection{Main Theorem}
\begin{theorem}[Security of \scheme]\label{thm:main}
Let $\mathcal{A}$ be any computationally unbounded adversary that corrupts at most $\omega$ verifiers, an arbitrary set of signers, and makes at most $Q$ queries to honest verifiers (individual, aggregate and resolution queries together). In the experiment of \cref{def:exp}, in the $\Fki$-hybrid model,
\[
\Pr[\Bad]\;\le\;\frac{Q}{q}+\frac{1}{q-1}.
\]
Hence \scheme{} is $(Q,\,Q/q+1/(q-1))$-secure and perfectly correct.
\end{theorem}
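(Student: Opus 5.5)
The plan is to reduce the adaptive experiment to a non-adaptive one, and then to bound a single fixed query using \cref{lem:affine}.

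\textbf{Step 1 (non-adaptive reduction).} I would consider the modified experiment $\mathsf{Exp}'$. It is identical to \cref{def:exp}, except that every verification, aggregate-verification or resolution query on a non-acceptable object is answered $0$. Acceptable objects are valid, so by \cref{prop:correct} they are answered $1$ in both experiments. Hence the two experiments coincide up to the first query on which the real experiment would return $1$ on a non-acceptable object, that is, up to $\Bad$. It therefore suffices to bound $\Pr[\Bad']$, where $\Bad'$ means that some query in $\mathsf{Exp}'$ is non-acceptable but \emph{would} be accepted by the honest key.

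In $\mathsf{Exp}'$ every answer is a deterministic function of information the adversary already has. We may also take $\mathcal{A}$ to be deterministic. Its $j$-th query is then a fixed function of the \emph{static view}
\[
W=\bigl(K,\;(A_{U,k})_{U\in\mathcal{U}_C},\;(KA_{U,k})_{U\notin\mathcal{U}_C},\;(A_{U,k}\bfe(m))_{\text{signed }(U,k,m)}\bigr).
\]
A union bound over the $Q$ queries then reduces the theorem to the following claim. For a single query that is a function of $W$ and is addressed to an honest $V_i$, $i\notin C$ (with $i=0$ covering $\Res$), the probability of acceptance of a non-acceptable object is at most $1/q$, except on a global event $G$ with $\Pr[G]\le1/(q-1)$. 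The event $G$ is defined once and is shared by all queries.

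\textbf{Step 2 (the exceptional event).} I would set $G=\{\Gamma\text{ singular}\}$. A standard count gives
\[
\Pr[G]=1-\prod_{j\ge1}(1-q^{-j})\le\sum_{j\ge1}q^{-j}=\frac{1}{q-1}.
\]
Off $G$, the vector $\Gamma\T$ is injective, so \eqref{eq:indep} transfers to the functionals: $\bc_i\notin\operatorname{row}(K)$ for every honest $i$.

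\textbf{Step 3 (one query).} The acceptance test is $\bc_i\T\bD=0$ with residual
\[
\bD=\balpha-\textstyle\sum_{U\in S}A_{U,k_U}\bfe(m).
\]
I would split into two cases.

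\emph{Case (a): $\bD\ne0$ is determined by $W$.} This covers an invalid object from any signer, and also a modified signature of an honest signer on an already-signed message. I would perturb $\Gamma\mapsto\Gamma+\mathbf{u}\mathbf{v}\T$, choosing $\mathbf{u}$ with $\Lambda_C\T\mathbf{u}=0$ and $\blam(x_i)\T\mathbf{u}\neq0$; such $\mathbf{u}$ exists by \eqref{eq:indep}. I would choose $\mathbf{v}$ with $\mathbf{v}\T\bD\ne0$. This perturbation leaves $K$, and hence all of $W$, unchanged, while $\bc_i\T\bD$ shifts by a nonzero amount. Applying \cref{lem:affine} to $\Gamma$ conditioned on $W$ gives acceptance probability exactly $1/q$.

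\emph{Case (b): an honest $U\in S$ never signed $(k_U,m)$.} Then $\bD$ involves the unobserved value $A_{U,k_U}\bfe(m)$. I would instead perturb $A_{U,k_U}\mapsto A_{U,k_U}+\mathbf{w}\mathbf{f}\T$, with the following choices.
\begin{itemize}
\item $\mathbf{f}\T\bfe(m_\ell)=0$ on the at most $\psi$ messages already signed under that index, and $\mathbf{f}\T\bfe(m)\neq0$. This is possible by \cref{lem:vdm}, since at most $\psi+1$ vectors are involved.
\item $K\mathbf{w}=0$ and $\bc_i\T\mathbf{w}\ne0$. This is possible off $G$, because then $\bc_i\notin\operatorname{row}(K)$.
\end{itemize}
Here $\Gamma$ is fixed and only $A_{U,k_U}$ is uniform given $W$. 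The perturbation preserves $W$ and shifts the test value, so \cref{lem:affine} again gives $1/q$.

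Aggregates are handled by the same argument because the residual is linear in each signing matrix.

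\textbf{Step 4 (assembly).} The union bound over the $Q$ queries, plus $\Pr[G]$, yields $\Pr[\Bad]\le Q/q+1/(q-1)$. Correctness is \cref{prop:correct}.

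\textbf{Main obstacle.} I expect Step 1 to be the delicate part. Conditioning on earlier \emph{rejections} does skew the posterior of $(\Gamma,A)$, so the proof must be careful that each query is bounded in the non-adaptive game and that the two games agree until the first bad event. In Case (b), one must also check that $W$ genuinely leaves $A_{U,k_U}$ uniform on the affine set cut out by $KA_{U,k_U}$ and the observed signatures. Otherwise the perturbation direction would not lie in the linear part required by \cref{lem:affine}.
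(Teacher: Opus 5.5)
Your proposal is correct and follows the paper's proof essentially step for step: the same ideal-answer game hop, the same product-of-affine-subspaces posterior (your $W$, read as the view just before query $j$), the same $\Gamma$-shift and $A$-shift via \cref{lem:affine}, and the same one-time singular-$\Gamma$ term of at most $1/(q-1)$. The uniformity check you flag as the main obstacle is exactly what the paper's Step~2 and its ``posterior uniformity in detail'' paragraph supply: ideal answers add no constraints on the honest secrets, so the consistent set stays a product of affine subspaces.
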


The additive term $1/(q-1)$ bounds the probability that the uniformly sampled anchor matrix $\Gamma$ is singular. It is paid once per anchor matrix, not per query (\cref{rem:epochs}); \cref{rem:invertible} explains how it could be removed.

The proof has three steps. \emph{(i)}~We replace every verification answer by an \emph{ideal} answer that the adversary could compute itself; this does not change $\Pr[\Bad]$. \emph{(ii)}~In the resulting game, the posterior distribution of all secrets is uniform on a product of affine subspaces. \emph{(iii)}~Every individual query triggers $\Bad$ with probability at most $1/q$, by \cref{lem:affine} applied along an explicit direction in either the anchor matrix $\Gamma$ or a signing matrix $A_{U,k}$.

\begin{proof}
\emph{Step 1 (ideal answers).} Let $G_0$ be the real experiment. In $G_1$, every query $j$ to an honest verifier is answered with
$\Ideal_j:=1$ iff the queried object is acceptable (\cref{def:valid}) at the time of the query,
instead of the real decision $\Real_j$. Both games use the same coins. Acceptable objects are valid, and by \cref{prop:correct} valid objects are accepted; hence $\Ideal_j=1\Rightarrow\Real_j=1$. Therefore $\Real_j\neq\Ideal_j$ iff $\mathsf{bad}_j:=(\Real_j=1\wedge\Ideal_j=0)$. The games proceed identically until the first $j$ with $\mathsf{bad}_j$, and that event is precisely $\Bad$ in $G_0$. By the fundamental lemma of game playing, $\Pr_{G_0}[\Bad]=\Pr_{G_1}[\bigcup_j\mathsf{bad}_j]$, where in $G_1$ each $\mathsf{bad}_j$ is evaluated on the actual keys but never influences the adversary.

\emph{Step 2 (posterior).} Let $\tau_j$ be the adversary's view in $G_1$ just before query $j$. It consists of its random tape; $K$; the corrupted key polynomials $B_{U,k}:=K A_{U,k}$ for every $(U,k)$; the matrices $A_{U,k}$ of corrupted signers; and the answers to earlier queries. In $G_1$, a signing answer equals $A_{U,k}\bfe(m)$. A verification answer $\Ideal$ is a deterministic function of earlier answers (for honest signers, acceptability depends only on the signing log) and of the corrupted signers' matrices, both of which are part of $\tau_j$. The matrices of corrupted signers are sampled independently of those of honest signers. Hence the likelihood of $\tau_j$, given the honest secrets $\mathbf{s}=(\Gamma,(A_{U,k})_{U\notin\mathcal{U}_C})$, equals a factor independent of $\mathbf{s}$ times the indicator $\ind[\mathbf{s}\in\mathcal{S}(\tau_j)]$, where
\begin{align*}
\mathcal{S}(\tau_j)=&\;\{\Gamma:\Lambda_C\T\Gamma=K\}\;\times\\
&\prod_{U\notin\mathcal{U}_C,\,k}\{A:\;KA=B_{U,k},\;A\,E_{U,k}=R_{U,k}\}.
\end{align*}
Here $E_{U,k}=[\bfe(m)]_{m\in\mathcal{M}_k}$ and $R_{U,k}$ is the matrix of returned signatures. The fixed matrix $K$ read from $\tau_j$ appears in the second factor; therefore $\mathcal{S}(\tau_j)$ is a \emph{product of affine subspaces}. With a uniform prior, the posterior of $\mathbf{s}$ given $\tau_j$ is uniform on $\mathcal{S}(\tau_j)$, and the factors are independent.

\emph{Step 3 (per-query bound).} Fix $\tau_j$ and let query $j$ go to an honest verifier $V_i$ (with $i=0$ for $\Res$; there the mask $\rho\neq 0$ makes the decision exact). We use two kinds of shifts.

\emph{(A) $\Gamma$-shift.} Suppose the real decision is $[\blam(x_i)\T\Gamma\bD=0]$ for some non-zero $\bD$ determined by $\tau_j$. By \eqref{eq:indep} there is $\bu$ with $\Lambda_C\T\bu=\mathbf{0}$ and $\blam(x_i)\T\bu=1$. Choose $\bv$ with $\bv\T\bD=1$. The direction $\Delta=\bu\bv\T$ preserves $\Lambda_C\T\Gamma=K$ and leaves every $A$-factor unchanged; hence it lies in the linear part of $\mathcal{S}(\tau_j)$. The affine function $f(\Gamma)=\blam(x_i)\T\Gamma\bD$ satisfies $f(\Gamma+\Delta)=f(\Gamma)+1$. By \cref{lem:affine}, $\Pr[\mathsf{bad}_j\mid\tau_j]=1/q$.

\emph{(B) $A$-shift.} Suppose the real decision is $[\bc_i\T A_{U^\star,k^\star}\bfe(m)=y]$, where $U^\star$ is honest, $m\notin\mathcal{M}_{k^\star}$, and $y$ is determined by $\tau_j$, $\Gamma$ and the other signing matrices. Condition additionally on $\Gamma$ and on all $A_{U,k}\neq A_{U^\star,k^\star}$. By Step~2, $A_{U^\star,k^\star}$ remains uniform on its affine factor. If $\Gamma$ is invertible, then \eqref{eq:indep} gives $\bc_i=\Gamma\T\blam(x_i)\notin\operatorname{row}(K)$; hence some $\bg$ satisfies $K\bg=\mathbf{0}$ and $\bc_i\T\bg=1$. Since $|\mathcal{M}_{k^\star}|\le\psi$ and $m\notin\mathcal{M}_{k^\star}$, \cref{lem:vdm} gives $\bfe(m)\notin\spn(E_{U^\star,k^\star})$; hence some $\bh$ satisfies $E_{U^\star,k^\star}\T\bh=\mathbf{0}$ and $\bh\T\bfe(m)=1$. The direction $\bg\bh\T$ preserves both $KA=B$ and $AE=R$, and changes $\bc_i\T A\bfe(m)$ by $1$. By \cref{lem:affine}, $\Pr[\mathsf{bad}_j\mid\tau_j,\Gamma]=1/q$ whenever $\Gamma$ is invertible.

\emph{Case analysis.} Let query $j$ be non-acceptable, since otherwise $\mathsf{bad}_j$ cannot occur. Malformed objects are rejected in both games.
(i)~\emph{Individual, corrupted signer.} The object is invalid; hence $\bD=\balpha-A_{U,k}\bfe(m)\neq\mathbf{0}$ is known from $\tau_j$, and the real decision is $[\blam(x_i)\T\Gamma\bD=0]$. Shift (A) applies. (ii)~\emph{Individual, honest signer, $m\in\mathcal{M}_k$, $\balpha$ differs from the returned signature $\balpha_{k,m}$.} Set $\bD=\balpha-\balpha_{k,m}\neq\mathbf{0}$. Shift (A) applies. (iii)~\emph{Individual, honest signer, $m\notin\mathcal{M}_k$.} Shift (B) applies with $y=\bc_i\T\balpha$. (iv)~\emph{Aggregate in which every honest $U\in S$ has signed $(k_U,m)$.} Then $\sum_{U\in S}A_{U,k_U}\bfe(m)$ is known from $\tau_j$, and non-acceptability means $\bD:=\balpha_S-\sum_U A_{U,k_U}\bfe(m)\neq\mathbf{0}$. Shift (A) applies. (v)~\emph{Aggregate in which some honest $U^\star\in S$ has not signed $(k_{U^\star},m)$.} Shift (B) applies with $y=\bc_i\T\balpha_S-\sum_{U\ne U^\star}\bc_i\T A_{U,k_U}\bfe(m)$.
In every case $\Pr[\mathsf{bad}_j\wedge\Gamma\text{ invertible}]\le 1/q$. A uniform $(\omega+1)\times(\omega+1)$ matrix is singular with probability $1-\prod_{\ell=1}^{\omega+1}(1-q^{-\ell})\le\sum_{\ell\ge1}q^{-\ell}=1/(q-1)$. A union bound gives
\begin{align*}
\Pr[\Bad]&\le\Pr[\Gamma\text{ singular}]+\textstyle\sum_{j=1}^{Q}\Pr[\mathsf{bad}_j\wedge\Gamma\text{ invertible}]\\
&\le\frac{1}{q-1}+\frac{Q}{q}.\qedhere
\end{align*}
\end{proof}

\textbf{Posterior uniformity in detail.}\label{app:posterior} We make Step~2 of the proof of \cref{thm:main} explicit. Fix the adversary's random tape $\rho$ and the corrupted signers' matrices $\mathbf{a}=(A_{U,k})_{U\in\mathcal{U}_C}$. The honest secrets are $\mathbf{s}=(\Gamma,(A_{U,k})_{U\notin\mathcal{U}_C})$, uniform and independent of $(\rho,\mathbf{a})$. In $G_1$, the adversary's next query is a deterministic function $\mathsf{Q}_j(\rho,\tau_{j})$ of its tape and the transcript up to that point. Each answer is either a signature $A_{U,k}\bfe(m)$ or an ideal bit $\Ideal_j$, which is a function of $(\tau_j,\mathbf{a})$ alone.

By induction on $j$, the transcript $\tau_{j}$ is a deterministic function $\mathsf{T}_j(\rho,\mathbf{a},\mathbf{s})$. Moreover, for every value $\hat\tau$,
\[
\{\mathbf{s}:\mathsf{T}_j(\rho,\mathbf{a},\mathbf{s})=\hat\tau\}=\mathcal{S}(\hat\tau),
\]
the product of affine subspaces of Step~2, provided $\hat\tau$ is consistent with $(\rho,\mathbf{a})$. Indeed, $\hat\tau$ fixes the sequence of queries. The initial keys contribute the constraints $\Lambda_C\T\Gamma=K$ and $KA_{U,k}=B_{U,k}$. Signing answers contribute $A_{U,k}\bfe(m)=\balpha$. Ideal answers contribute no constraint on $\mathbf{s}$. Hence $\Pr[\mathbf{s}=\hat{\mathbf{s}}\mid\tau_j=\hat\tau,\rho,\mathbf{a}]=\ind[\hat{\mathbf{s}}\in\mathcal{S}(\hat\tau)]/|\mathcal{S}(\hat\tau)|$. Averaging over $(\rho,\mathbf{a})$ consistent with $\hat\tau$ preserves uniformity on $\mathcal{S}(\hat\tau)$, since $\mathcal{S}(\hat\tau)$ does not depend on them.

\begin{remark}[What the proof does \emph{not} need]
The proof uses no computational assumption, no random oracle, and no rewinding. It holds even if the adversary chooses its queries with unbounded computation. It is also robust to how $\bc_s$ is \emph{distributed} among corrupted verifiers: Step~3 only needs $\blam(x_i)\notin\spn\{\blam(x_s)\}$, which holds deterministically. For the variant in which each verifier samples a uniformly random $\bc_i$ itself (no anchors), the same proof goes through with a different singular event, $\bc_i\in\operatorname{row}(K)$, which has probability at most $1/q$ per query, giving $\Pr[\Bad]\le 2Q/q$.
\end{remark}

\begin{remark}[Several epochs]\label{rem:epochs}
\Cref{thm:main} concerns one anchor matrix. Over $E$ epochs with independently sampled anchors, where epoch $e$ answers $Q_e$ queries, the same proof applied per epoch and a union bound give $\Pr[\Bad]\le\sum_e Q_e/q+\sum_e p_{\mathrm{sing}}\le Q_{\mathrm{total}}/q+E/(q-1)$, where $p_{\mathrm{sing}}=1-\prod_{\ell=1}^{\omega+1}(1-q^{-\ell})$. Only epochs in which a decision is taken need to be counted. Further independent components, such as the scalar anchors of arbiter-verified keys (\cref{sec:av}), redistribution (\cref{lem:reshare}), MAC forgeries and probabilistic decoding (\cref{thm:tn}), contribute their own terms to a system-level bound.
\end{remark}

\begin{remark}[Invertible anchors]\label{rem:invertible}
If $\Gamma$ is sampled uniformly from the invertible matrices, the bound becomes $\Pr[\Bad]\le Q/q$, but the proof changes, because $\{\Gamma:\Lambda_C\T\Gamma=K,\ \det\Gamma\neq0\}$ is not an affine subspace and shift (A) may leave it. In the ideal game, fix $K$ with row space $W$ of dimension $c=|C|$ (it has full rank, since $\Gamma$ is invertible). By \eqref{eq:indep}, the honest functional $\bc_i\T=\blam(x_i)\T\Gamma$ is then uniform on $\F^{\omega+1}\setminus W$. For a transcript-determined $\bD\neq\mathbf{0}$, $\Pr[\bc_i\T\bD=0\mid K]=(q^{\omega}-|W\cap\bD^{\perp}|)/(q^{\omega+1}-q^{c})\le1/q$, because $|W\cap\bD^{\perp}|\ge q^{c-1}$. The fresh-message case keeps shift (B), conditional on $\Gamma$. The distributed protocol $\Pki$ of \cref{fig:dki} does not implement this key distribution: doing so needs a secure invertibility test (e.g., opening $\Gamma R$ for a random shared $R$ and checking its rank, which reveals nothing when both are invertible), with its own multiplications, retries and leakage analysis. We therefore keep uniform anchors and the $1/(q-1)$ term.
\end{remark}

\subsection{Consequences}
\begin{corollary}[Strong unforgeability and non-repudiation]\label{cor:uf}
Except with probability $Q/q+1/(q-1)$:
no honest verifier, and not the arbiter, accepts a signature of an honest signer that the signer did not output, including modified versions of genuine signatures; no honest verifier accepts an invalid signature of a dishonest signer.
Consequently, a signer cannot repudiate a signature that any honest verifier has accepted: $\Res$ upholds it.
\end{corollary}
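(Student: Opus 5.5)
The corollary follows from \cref{thm:main} by unpacking the event $\Bad$ of \cref{def:exp}. We will show that each of the three claimed failures is a sub-event of $\Bad$; the probability bound $Q/q+1/(q-1)$ is then inherited with no further loss.

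\emph{Strong unforgeability.} Let $\sigma=(U,k,m,\balpha)$ be a signature of an honest signer $U$ that $U$ did not output. By \cref{def:valid}, $\sigma$ is not authentic. Since it involves an honest signer, acceptability requires authenticity, so $\sigma$ is not acceptable. Two situations are covered:
\begin{itemize}
\item $m$ was never signed under index $k$ (a fresh forgery);
\item $m\in\mathcal{M}_k$ but $\balpha\neq\balpha_{k,m}$ (a modified genuine signature).
\end{itemize}
The second situation relies on signing being deterministic. Because $U$ outputs exactly $A_{U,k}\bfe(m)$, the pair $(k,m)$ yields a unique authentic signature, so any other $\balpha$ is non-authentic. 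If an honest $V_i$, or $\Res$ as $V_0$, accepts such a $\sigma$, the corresponding query returns $1$ on a non-acceptable object, and this is $\Bad$. The arbiter is included because the experiment counts \textsc{Res} queries.

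\emph{Non-repudiation.} If $U$ is corrupted, acceptability is validity, so any accepted invalid signature is again a query answering $1$ on a non-acceptable object, hence $\Bad$. For the repudiation clause, condition on $\neg\Bad$ and suppose some honest verifier accepted $\sigma$.
\begin{itemize}
\item If $U$ is dishonest, $\sigma$ is valid.
\item If $U$ is honest, $\sigma$ is authentic, and in particular valid, since authentic signatures are outputs $A_{U,k}\bfe(m)$.
\end{itemize}
By \cref{prop:correct}, $\Res$ accepts every valid well-formed admissible object with probability $1$. Perfect correctness here uses that the mask satisfies $\rho\neq0$, so $\delta\rho=0$ iff $\delta=0$. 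Hence $\Res$ upholds $\sigma$ whenever it is asked.

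\emph{Scope of the ``except with probability'' clause.} The one delicate point is that the statement quantifies over all decisions the adversary can provoke. In particular, the later resolution query must itself be counted among the $Q$ queries. This is what lets a single application of \cref{thm:main} cover the whole run rather than each event separately. All three conclusions hold simultaneously on $\neg\Bad$, so no union bound over the items is needed. Finally, malformed or inadmissible objects are rejected by input validation and cannot contribute any acceptance.
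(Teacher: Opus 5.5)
Your proposal is correct and matches the paper's own (implicit) argument. Each claimed failure is a query answering $1$ on a non-acceptable object, hence a sub-event of $\Bad$ bounded by \cref{thm:main}; on $\neg\Bad$ every accepted signature is valid, so $\Res$ upholds it by \cref{prop:correct}. Two side remarks are unnecessary but harmless: a modified $\balpha$ is non-authentic directly by \cref{def:valid}, without appeal to determinism of signing, and correctness of $\Res$ uses only $\delta=0\Rightarrow\delta\rho=0$, since $\rho\neq0$ matters for soundness rather than correctness.
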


\begin{corollary}[Unbounded-hop transferability]\label{cor:transfer}
Let a fixed signature or aggregate be forwarded through honest verifiers $V_{i_1},V_{i_2},\dots,V_{i_L}$, for \emph{any} $L$. If some $V_{i_\ell}$ rejects it although $V_{i_1}$ accepted it, then the acceptance by $V_{i_1}$ was a $\Bad$ event. Hence the probability of such a disagreement is at most $Q/q+1/(q-1)$, where $Q$ counts the verification decisions taken up to and including that first acceptance; the $L-1$ later verifications of the same object add nothing.
\end{corollary}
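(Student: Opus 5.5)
The plan is to reduce \cref{cor:transfer} to \cref{thm:main} by noting that validity and acceptability (\cref{def:valid}) are \emph{global} predicates. They depend only on the object and on the secret signing matrices and signing log, never on which verifier examines it. I will model each of the $L$ verifications as a query in the experiment of \cref{def:exp}; this is legitimate because every decision of an honest party is modeled as such a query. I then split on whether the object is acceptable at the moment $V_{i_1}$ accepts it.

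\emph{Case 1: the object is not acceptable when $V_{i_1}$ decides.} Then $V_{i_1}$'s acceptance is itself an occurrence of $\Bad$ at that query. Stop the experiment right after this acceptance. The adversary has then made at most $Q$ queries, counting every decision up to and including the first acceptance, so \cref{thm:main} bounds the probability of this case by $Q/q+1/(q-1)$.

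\emph{Case 2: the object is acceptable when $V_{i_1}$ decides.} Then it is valid, either because it is authentic or because it is valid by definition. The object is fixed as it is forwarded, and validity ($\balpha=A_{U,k}\bfe(m)$, or the summed version for aggregates) does not depend on time or on the verifier. Hence it stays valid for every later $V_{i_\ell}$. By \cref{prop:correct}, every honest verifier holding $\Fki$ keys for the declared epoch accepts a valid, well-formed, admissible object with probability $1$, deterministically. So no later rejection can occur in this case. This is also why the $L-1$ later checks contribute nothing to the bound: they are never the $\Bad$ query, and they cannot reject.

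The main subtlety is the well-formedness and admissibility clause. A later verifier might reject for policy reasons, such as a stale epoch in consensus, rather than cryptographic ones. I would handle this by stating that the same admissibility policy applies along the chain (historical verification, as in the input-validation paragraph), so the premise of \cref{prop:correct} holds at each hop. A second point to check is authenticity for an honest signer. Once the object was output by the signing oracle it remains authentic, because the signing log only grows, so acceptability cannot be lost after $V_{i_1}$'s decision. Combining the two cases, a disagreement implies $\Bad$ at $V_{i_1}$'s acceptance, which gives the stated bound.
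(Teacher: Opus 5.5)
Your proposal is correct and follows the paper's proof. A disagreement forces either a $\Bad$ acceptance at $V_{i_1}$, which \cref{thm:main} bounds using the queries made so far, or a valid object, which \cref{prop:correct} makes every later honest verifier accept deterministically. Your remarks on keeping the same admissibility policy along the chain and on the signing log only growing are sound refinements that the paper leaves implicit.
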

\begin{proof}
If $V_{i_1}$ accepts a non-acceptable object, then $\Bad$ occurs, which \cref{thm:main} bounds with the queries made so far. Otherwise the object is valid and, by perfect correctness, every honest verifier accepts it deterministically.
\end{proof}
Transferability is thus a structural property: once an object is valid, every honest verifier accepts it, at every hop. It is not a guarantee against unboundedly many \emph{fresh} verification attempts, which \cref{thm:main} charges $1/q$ each.
In threshold-based USS and QDS~\cite{Amiri2018,Wallden2015}, each hop lowers the acceptance level, which bounds the number of transfers. In \scheme{}, validity is a single global predicate, and there is nothing to degrade.

\begin{corollary}[Information-theoretic aggregate signatures]\label{cor:agg}
Except with probability $Q/q+1/(q-1)$, an honest verifier accepts an aggregate $\sigma_S$ on $m$ only if every honest signer in $S$ actually signed $m$ under the stated index, and $\balpha_S$ equals the sum of the corresponding valid signatures. Rogue-key attacks do not apply, because no participant chooses its own key.
\end{corollary}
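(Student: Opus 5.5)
The corollary will follow by unpacking the event $\Bad$ of \cref{def:exp} for aggregate-verification queries and then invoking \cref{thm:main}. Fix an aggregate $\sigma_S=(m,(U,k_U)_{U\in S},\balpha_S)$ that is well formed, and in particular has pairwise distinct signer identities, since malformed aggregates are rejected by input validation. Suppose some honest $V_i$ (or $\Res$) accepts it. By \cref{def:valid}, if $S$ contains an honest signer, acceptability of $\sigma_S$ means authenticity: $\balpha_S=\sum_{U\in S}A_{U,k_U}\bfe(m)$, and every honest $U\in S$ has signed $(k_U,m)$. If $S$ contains only dishonest signers, acceptability means validity, which is the same equation. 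In both cases, an acceptable aggregate satisfies exactly the two conclusions of the corollary. The first is that honest members signed $m$ under the stated index. The second is that $\balpha_S$ is the sum of the valid individual signatures $A_{U,k_U}\bfe(m)$, because each summand is by definition the valid signature of $U$ on $m$ under $k_U$.

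It then follows that if the conclusion fails while an honest verifier accepts, the accepted object is not acceptable. This is precisely $\Bad$, so its probability is at most $Q/q+1/(q-1)$ by \cref{thm:main}. I would note that cases (iv) and (v) of the case analysis in that proof already cover aggregates, so no new probabilistic argument is needed. Case (iv) covers an aggregate whose honest members all signed but whose sum is wrong, handled by shift (A). Case (v) covers an aggregate with an honest member that did not sign, handled by shift (B).

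For the rogue-key remark, I would argue structurally rather than probabilistically. In $\Fki$, every $A_{U,k}$ is sampled by the functionality, and corrupted signers merely receive theirs. The adversary therefore cannot choose $A_{U,k}$ as a function of honest keys, for example to cancel an honest contribution in the sum. The theorem's experiment already lets corrupted signers know their matrices and submit arbitrary $\balpha_S$, so any cancellation attempt is just some invalid or unauthentic aggregate, again covered by $\Bad$.

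The only delicate point is the duplicate-identity issue. Without the distinctness check, an aggregate listing an honest signer twice could still be valid in the formal sense while inflating a quorum. I would appeal to input validation to exclude it, so that the statement is about well-formed aggregates.
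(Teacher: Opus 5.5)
Your proposal is correct and takes the route the paper intends: the corollary is the event $\Bad$ of \cref{def:exp} unpacked through the aggregate acceptability notion of \cref{def:valid}, with cases (iv) and (v) of the proof of \cref{thm:main} supplying the bound, and the rogue-key remark following from $\Fki$ sampling every $A_{U,k}$ and $\Gamma$. Your treatment of duplicate identities matches the paper's input-validation rule for $\AggVer_i$.
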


\begin{corollary}[Simulation-based security in the Fitzi framework]\label{cor:fds}
Let $\omega\ge1$, and let $P_1\in\mathcal{U}$, $P_2=V_i$ and $P_3=V_j$, with keys issued by $\Fki$ independently of the input $x_1\in\F$. Consider the following non-interactive protocol. In the signing phase, $P_1$ sends $\sigma\gets\Sign(P_1,k,x_1)$ to $P_2$ over a private authenticated channel, which outputs $z_2=x_1$ if $\Ver_i(\sigma)=1$ and $\perp$ otherwise. In the transfer phase, $P_2$ forwards $\sigma$ to $P_3$, which outputs $z_3$ analogously. This protocol satisfies the correctness, unforgeability, transferability and secrecy conditions of the characterization in~\cite[Lemma~18.15]{TyagiWatanabe2023} with error $\varepsilon\le 1/q+1/(q-1)$. Under the hypotheses of that lemma (three parties, one static active corruption, pairwise private authenticated channels, classical inputs), it $2\varepsilon$-securely implements the three-party signature functionality $f_{\mathsf{DS}}$ of Fitzi~\emph{et al.}~\cite{Fitzi2002,FitziWolf2004,TyagiWatanabe2023}. The four game-based properties also hold against quantum adversaries that interact classically (\cref{rem:quantum}). A simulation-based statement against quantum adversaries additionally relies on the quantum version of the characterization in~\cite{SILMARILS}, whose hypotheses must be matched to this protocol; we state it only in that conditional form, and we make no claim for adversaries with entangled auxiliary inputs or for a particular composable quantum framework.
\end{corollary}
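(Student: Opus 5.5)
The plan is to check the four game-based conditions of \cite[Lemma~18.15]{TyagiWatanabe2023} one at a time, each with error at most $\varepsilon=1/q+1/(q-1)$, and then invoke that lemma to obtain $2\varepsilon$-security for $f_{\mathsf{DS}}$. The protocol has only two verification decisions, one by $P_2$ and one by $P_3$. Under one static corruption, at most one of them is taken by a party that is both honest and relevant to the condition being checked. So each condition involves a single query to an honest verifier, and \cref{thm:main} with $Q=1$ gives the bound $1/q+1/(q-1)$. Since $\omega\ge1$, a single corrupted verifier is within the collusion bound of \cref{thm:main}.

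\emph{Correctness} (all parties honest): $\sigma=A\bfe(x_1)$ is valid, so by \cref{prop:correct} both $P_2$ and $P_3$ output $x_1$ with probability $1$.

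\emph{Unforgeability} ($P_2$ corrupted, $P_1,P_3$ honest): if $P_3$ outputs some $z_3\notin\{x_1,\perp\}$, then $P_3$ accepted a non-authentic signature of the honest signer $P_1$. Here the adversary holds $\bc_i$, the matching $\bb$ key and the genuine $\sigma$, which is exactly the view of the experiment of \cref{def:exp} with $C=\{i\}$ and one \textsc{Sign} query. That acceptance is a $\Bad$ event, so it has probability at most $\varepsilon$; this is \cref{cor:uf}.

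\emph{Transferability} ($P_1$ corrupted, $P_2,P_3$ honest): the bad event is that $P_2$ accepts and $P_3$ rejects. By \cref{cor:transfer}, this forces $P_2$'s acceptance to be a $\Bad$ event for an invalid signature of a dishonest signer, which happens with probability at most $\varepsilon$ ($Q=1$, $C=\emptyset$). If instead $P_2$ rejects, then neither honest party outputs a value, which is consistent.

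\emph{Secrecy} ($P_3$ corrupted, before the transfer phase): the view of $P_3$ consists of $\bc_j$ and $\bb_{j,P_1,k}$. These are issued by $\Fki$ independently of $x_1$, and $\sigma$ travels over a private channel, so the view is independent of $x_1$ with error $0$.

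The main obstacle is matching the lemma's exact formulation, in particular its notion of the transfer phase and what an unforgeability break means. For example, the lemma may also require that a corrupted $P_2$ cannot make $P_3$ accept a value even when $P_1$ signed nothing. In this protocol that case is excluded: $P_1$ always signs its input, and a forged $x\ne x_1$ under the same index falls under case (iii) or case (ii) of the proof of \cref{thm:main}. I would state each condition explicitly as an instance of $\mathsf{Exp}^{\mathsf{mv\text{-}uss}}$ so that the reduction is immediate.

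The quantum remark needs no new argument beyond \cref{rem:quantum}, because \cref{thm:main} bounds unbounded adversaries that interact classically.
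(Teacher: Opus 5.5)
Your proposal is correct and follows the paper's proof essentially step by step. Correctness comes from \cref{prop:correct}; unforgeability is a single-query $\Bad$ event with $|C|=1\le\omega$; transferability reduces to $P_2$ accepting an invalid signature; secrecy follows from the private signing channel, and then \cite[Lemma~18.15]{TyagiWatanabe2023} applies. The only difference is that for transferability the paper applies shift~(A) directly and gets the sharper bound $1/q$, with no singular-anchor term since the signer is corrupt, whereas your route through \cref{cor:transfer} gives $\varepsilon$, which still suffices.
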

\begin{proof}
Correctness is perfect (\cref{prop:correct}).
\emph{Unforgeability} ($P_1,P_3$ honest, $P_2$ corrupt, whence $|C|=1\le\omega$): $z_3\notin\{x_1,\perp\}$ requires $P_3$ to accept a non-authentic signature. Since $Q=1$, this has probability at most $1/q+1/(q-1)$. \emph{Transferability} ($P_2,P_3$ honest, $P_1$ corrupt): $z_2\ne\perp$ and $z_3\ne z_2$ require that $P_2$ accept an invalid signature, because a valid one is accepted by $P_3$. By shift~(A), this has probability at most $1/q$. \emph{Secrecy} ($P_3$ corrupt): the signing message travels over the private channel from $P_1$ to $P_2$, and $P_3$ therefore receives nothing during signing; its view consists of its keys, which are independent of $x_1$, and $\varepsilon_{\mathrm{sec}}=0$. Without a private signing channel, a corrupt $P_3$ could read $x_1$ from $\sigma$, and this condition would fail.
The classical implication then follows from~\cite[Lemma~18.15]{TyagiWatanabe2023}; the quantum implication is conditional on the quantum version of that lemma in~\cite{SILMARILS}.
\end{proof}
\Cref{cor:fds} recovers, non-interactively, without broadcast, and for an unbounded number of transfers, what the three-party mode of \silm{} achieves with an interactive IC phase and a single transfer.

\begin{remark}[Quantum adversaries]\label{rem:quantum}
All keys and all messages exchanged with honest parties are classical. In the game of \cref{def:exp}, any quantum adversary therefore induces a (possibly inefficiently computable) randomized strategy for choosing its classical queries, and the proof of \cref{thm:main} holds for every such strategy. This is a statement about the game-based bound; simulation-based or composable quantum security needs the separate argument discussed after \cref{cor:fds}. The proof does not rely on any computational restriction on the adversary. Consequently, the information-theoretic component of the security analysis remains unchanged even if the adversary possesses quantum computational capabilities, and \scheme{} has no computational assumption for a quantum computer to break. The only exception is collision resistance when long messages are hashed (\cref{rem:hash}). Because hashed messages are reduced into $\F$, the generic quantum collision attack~\cite{BHT1998} needs about $q^{1/3}\approx2^{85}$ queries whatever the digest length; a larger margin uses the vector encoding of \cref{rem:hash} (\cref{sec:construction}, parameter selection).
\end{remark}

\subsection{Tightness}
\begin{proposition}[Tightness]\label{prop:tight}
(a)~\emph{The bound of \cref{thm:main} is tight up to lower-order terms.} An adversary without any corruption makes $Q\le q$ queries on a fixed unsigned message $m$, using $\balpha_\ell=\balpha_1+(\ell-1)\mathbf{w}$ for a uniform $\mathbf{w}$. It succeeds with probability at least $(Q/q)(1-1/q)(1-1/(q-1))$. (b)~\emph{The collusion threshold is tight.} If $\omega+1$ verifiers collude, they recover $\Gamma$ and can compute every honest verifier's response to every message, i.e., they forge universally with probability $1$. (c)~\emph{The capacity is tight.} Given $\psi+1$ signatures on distinct messages under one index, anyone recovers $A_{U,k}$ and forges on that index with probability $1$. (d)~\emph{The non-repudiation term is exact.} For a first query with an invalid $\balpha$, each individual honest verifier accepts with probability exactly $1/q$.
\end{proposition}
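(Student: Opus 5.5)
The four parts are independent, so I will prove each one directly from the key structure $\bc_i=\Gamma\T\blam(x_i)$, $\bb_{i,U,k}=A_{U,k}\T\bc_i$. I use \cref{lem:vdm} and \cref{lem:affine} throughout, reasoning conditional on $\Gamma$ being invertible wherever this is needed.

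For (a), fix an honest signer, an index $k$ and a message $m\notin\mathcal{M}_k$, and query one honest verifier $V_i$ with $\balpha_\ell=\balpha_1+(\ell-1)\mathbf{w}$ for $\ell=1,\dots,Q$. Let $\balpha^\ast=A_{U,k}\bfe(m)$. Verifier $V_i$ accepts $\balpha_\ell$ iff $\bc_i\T(\balpha_\ell-\balpha^\ast)=0$, that is, iff $\bc_i\T(\balpha_1-\balpha^\ast)+(\ell-1)\bc_i\T\mathbf{w}=0$.

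1. I condition on $\Gamma$ invertible (probability at least $1-1/(q-1)$), so $\bc_i\neq\mathbf{0}$. Then $\bc_i\T\mathbf{w}\neq0$ with probability $1-1/q$ over the uniform $\mathbf{w}$.
2. In that case the values $\bc_i\T(\balpha_1-\balpha^\ast)+(\ell-1)\bc_i\T\mathbf{w}$ for $\ell=1,\dots,Q$ are $Q\le q$ distinct field elements.
3. With no corruption and $m$ unsigned, $A_{U,k}$ is uniform, so $\bc_i\T\balpha^\ast$ is uniform and independent of $\mathbf{w}$ and $\Gamma$. By \cref{lem:affine}, the row-shift direction $\bg\bh\T$ changes it.
4. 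Hence the uniform offset $\bc_i\T(\balpha_1-\balpha^\ast)$ hits one of $Q$ distinct targets $-(\ell-1)\bc_i\T\mathbf{w}$ with probability exactly $Q/q$.

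Multiplying the three factors gives the stated bound.

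For (b), $\omega+1$ colluders hold $\Lambda\T\Gamma$ with $\Lambda$ an invertible Vandermonde matrix by \cref{lem:vdm}, so they solve for $\Gamma$ and compute every $\bc_i$. To forge, they need $\bb_{i,U,k}$ for an honest verifier $i$. Their own keys give $\blam(x_s)\T M_{U,k}$ for $\omega+1$ distinct points, which interpolates $M_{U,k}=\Gamma A_{U,k}$. They then obtain $\bb_{i,U,k}\T=\blam(x_i)\T M_{U,k}$ for every $i$, and so can answer every honest verifier's check on every message, including $V_0$.

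The subtle point in (b) is that a universal forgery must produce $\balpha$ with $\bc_i\T\balpha=\bb_{i}\T\bfe(m)$ simultaneously for all honest $i$. I will exhibit $\balpha=\Gamma^{-1}M_{U,k}\bfe(m)=A_{U,k}\bfe(m)$, which is the valid signature itself, when $\Gamma$ is invertible. When $\Gamma$ is singular, the pooled keys are still consistent: the equations $\bc_i\T\balpha=\blam(x_i)\T M\bfe(m)$ for all $i$ are implied by $\Gamma\balpha=M\bfe(m)$, and this system is solvable because $M\bfe(m)=\Gamma A\bfe(m)$ lies in the column space of $\Gamma$. The colluders solve it by linear algebra, so the forgery succeeds with probability $1$ in either case.

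For (c), the $\psi+1$ signatures give $A_{U,k}E=R$ with $E$ an invertible $(\psi+1)\times(\psi+1)$ Vandermonde matrix, so $A_{U,k}=RE^{-1}$, and $A_{U,k}\bfe(m)$ is then a valid signature for any $m$.

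For (d), take the first query to an honest $V_i$ with an invalid $\balpha$, so $\bD=\balpha-A\bfe(m)\neq\mathbf{0}$. Acceptance means $\blam(x_i)\T\Gamma\bD=0$. Since this is the first query, the posterior of $\Gamma$ is uniform on the affine set $\{\Gamma:\Lambda_C\T\Gamma=K\}$, and shift (A) supplies a direction that changes the value. \cref{lem:affine} then gives probability exactly $1/q$, with no invertibility conditioning needed.
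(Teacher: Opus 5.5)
Your proposal is correct and follows the paper's proof part by part: in (a) you condition on an invertible $\Gamma$ and on $\bc_i\T\mathbf{w}\neq0$, then count $Q$ distinct candidate values for the uniform $\bc_i\T A\bfe(m)$; in (c) you invert the Vandermonde matrix $E_{U,k}$; and (d) is shift~(A) applied to the first-query posterior. In (b), your system $\Gamma\balpha=M_{U,k}\bfe(m)$ is the paper's system $K\balpha=B_{U,k}\bfe(m)$ multiplied on the left by the invertible $\Lambda_C^{-\mathsf T}$, so your case split on whether $\Gamma$ is invertible is harmless but not needed.
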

\begin{proof}
\emph{(a)} Let $\Gamma$ be invertible; then $\bc_i\neq\mathbf{0}$. Let $\theta=\bc_i\T A\bfe(m)$, which is uniform by shift~(B). If $\bc_i\T\mathbf{w}\neq0$, which happens with probability $1-1/q$, the $Q$ values $\bc_i\T\balpha_\ell$ are distinct. Each rejection excludes one candidate value of $\theta$; hence some $\balpha_\ell$ hits $\theta$ with probability $Q/q$.
\emph{(b)} By \cref{lem:vdm}, $\Lambda_C$ is invertible; hence $\Gamma=\Lambda_C^{-\mathsf T}K$ and every $\bc_i=\Gamma\T\blam(x_i)=K\T\Lambda_C^{-1}\blam(x_i)$ is a known combination of the rows of $K$. Write $\bc_i=K\T\boldsymbol{\mu}_i$ with $\boldsymbol{\mu}_i=\Lambda_C^{-1}\blam(x_i)$ public. For any $m$, the coalition picks any solution $\balpha$ of $K\balpha=B_{U,k}\bfe(m)$. The system is consistent because $A_{U,k}\bfe(m)$ solves it. Then $\bc_i\T\balpha=\boldsymbol{\mu}_i\T K\balpha=\boldsymbol{\mu}_i\T B_{U,k}\bfe(m)=\bb_{i,U,k}\T\bfe(m)$ for every $i$, and the forgery is accepted by everyone.
\emph{(c)} $E_{U,k}$ is an invertible Vandermonde matrix; hence $A_{U,k}=R_{U,k}E_{U,k}^{-1}$.
\emph{(d)} This is shift (A) with equality.
\end{proof}

Part~(a) establishes that the leading term $Q/q$ is tight; it does not claim that the whole bound $Q/q+1/(q-1)$ is attained for every parameter set. Our Monte-Carlo experiments (\cref{sec:eval:security}) are consistent with (a) for $Q\in\{1,4,16\}$, where success grows as $Q/q$, and with (b), (c) and (d); they support the analysis but do not replace it.

\section{Dealer-Free Distributed Key Issuance}\label{sec:dki}

We now realize $\Fki$ without a trusted dealer. The committee $\mathcal{P}$ evaluates the arithmetic circuit of \cref{fig:fki} using perfectly secure honest-majority MPC. Our contribution is to exhibit a circuit, enabled by virtual anchors, whose \emph{multiplicative} work is independent of the number $\nu$ of verifiers, and to count its cost: exact circuit counts, a semi-honest communication cost validated by simulation, and an estimate for the malicious setting.

\subsection{Building Blocks}
We use degree-$t$ Shamir sharings $\sh{s}$ over $\F$ among $n\ge 3t+1$ parties~\cite{Shamir1979}, and the following standard sub-protocols, each perfectly secure against $t<n/3$ malicious parties~\cite{BGW1988,AsharovLindell2017,BTH2008,CDN2015}:
$\mathsf{Rand}$, which outputs $\sh{r}$ for a uniform $r$, generated in batches by Vandermonde (hyper-invertible) extraction~\cite{DN2007,BTH2008}; $\mathsf{IP}$, which on input $\sh{\mathbf{a}},\sh{\mathbf{b}}$ outputs $\sh{\mathbf{a}\T\mathbf{b}}$. Each party multiplies and sums its local shares, obtaining a degree-$2t$ sharing, and one degree reduction follows. An inner product therefore costs one degree reduction, i.e., the communication of one multiplication; the local coordinate products cost no communication~\cite{DN2007,BTH2008}; $\mathsf{Open}_{\to P}$, which reconstructs a shared value towards a single party. All $n$ shares are sent, and the receiver decodes the degree-$t$ Reed--Solomon codeword with up to $t$ errors (e.g., Berlekamp--Welch~\cite{BerlekampWelch}).
Linear combinations with public coefficients are computed locally. These sub-protocols use broadcast only for complaints and for short fault-detection messages, amortized over large batches; for $t<n/3$ we realize it by Byzantine agreement over point-to-point channels~\cite{PSL1980}. Signing, aggregation and verification never use it.

\subsection{The Protocol}
\begin{figure}[!t]
\fbox{\begin{minipage}{0.96\columnwidth}\small
\textbf{Protocol} $\Pki$ (committee $\mathcal{P}$, $t<n/3$)\\[2pt]
\emph{Epoch set-up.}
\begin{enumerate}
\item $\sh{\Gamma}\gets\mathsf{Rand}^{(\omega+1)^2}$.
\item For each verifier $V_i$: locally compute $\sh{\bc_i}=\sh{\Gamma}\T\blam(x_i)$ and run $\mathsf{Open}_{\to V_i}$ on its $\omega+1$ entries. Keep $\sh{\bc_0}=\sh{\Gamma}\T\blam(0)$, i.e., the first row of $\sh{\Gamma}$, as the arbiter key.
\end{enumerate}
\emph{Per signer-index $(U,k)$} (batched and pipelined):
\begin{enumerate}\setcounter{enumi}{2}
\item $\sh{A}\gets\mathsf{Rand}^{(\omega+1)(\psi+1)}$.
\item $\sh{M}\gets\sh{\Gamma}\sh{A}$, using $(\omega+1)(\psi+1)$ calls to $\mathsf{IP}$ of length $\omega+1$.
\item $\mathsf{Open}_{\to U}$ on every entry of $\sh{A}$.
\item For each $V_i$: locally compute $\sh{\bb_{i}}=\sh{M}\T\blam(x_i)$, by \eqref{eq:keypoly}, and run $\mathsf{Open}_{\to V_i}$ on its $\psi+1$ entries. Keep $\sh{\bb_0}=\sh{M}\T\blam(0)$.
\end{enumerate}
\emph{Resolution.} On $\sigma$: $\sh{\delta}=\sh{\bc_0}\T\balpha-\sh{\bb_{0,U,k}}\T\bfe(m)$ (local). Take a fresh preprocessed uniform non-zero $\sh{\rho}$ (\cref{sec:construction}), compute $\sh{\delta\rho}$ with one multiplication, send it to all parties point-to-point, decode it robustly, and output $[\delta\rho=0]$. Robust decoding gives every honest party the same verdict; hence no broadcast is needed.
\end{minipage}}
\caption{Distributed key issuance with virtual anchors. Only step 4 and resolution use multiplications; verifier keys (steps 2 and 6) are local linear maps followed by private openings.}
\label{fig:dki}
\end{figure}

\Cref{fig:dki} gives $\Pki$. It computes exactly the outputs of $\Fki$: $\bc_i=\Gamma\T\blam(x_i)$, $A_{U,k}$, and $\bb_{i,U,k}=M_{U,k}\T\blam(x_i)=A_{U,k}\T\Gamma\T\blam(x_i)=A_{U,k}\T\bc_i$. $\Gamma$, $M_{U,k}$ and the arbiter's key are never opened.

\begin{theorem}[Security of $\Pki$]\label{thm:dki}
Let $\Pki$ be instantiated with perfectly secure malicious sub-protocols for $\mathsf{Rand}$, $\mathsf{IP}$ and $\mathsf{Open}$ (e.g., BGW~\cite{BGW1988,AsharovLindell2017}, or the linear-communication protocols of~\cite{BTH2008} with fault detection and player elimination). Then $\Pki$ securely computes $\Fki$ with perfect security and guaranteed output delivery against a static malicious adversary that corrupts at most $t<n/3$ committee members, together with any set of signers and verifiers, given synchronous private channels. Consequently, \cref{thm:main} and its corollaries hold in the real world, with $\Pki$ in place of $\Fki$, with the same bounds.
\end{theorem}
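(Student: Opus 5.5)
We will prove \cref{thm:dki} by viewing $\Pki$ as an instance of perfectly secure arithmetic-circuit evaluation and then invoking composition. First we write $\Fki$ as a reactive arithmetic circuit over $\F$ with private outputs. The inputs are the random gates $\Gamma$ and $A_{U,k}$. The circuit computes the linear maps $\bc_i=\Gamma\T\blam(x_i)$ and $\bb_{i,U,k}=M_{U,k}\T\blam(x_i)$. Its only multiplications are $M_{U,k}=\Gamma A_{U,k}$ and, at resolution time, the products $\delta\rho$ and $\rho s$ that generate the mask. Private outputs go to $U$ and $V_i$, and the resolution output $[\delta\rho=0]$ goes to every committee member. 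We check identity \eqref{eq:keypoly}, already noted after \cref{fig:dki}, to confirm that the values $\Pki$ outputs coincide with those of $\Fki$ for every choice of the random gates.

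Next we build a simulator in the hybrid model where $\mathsf{Rand}$, $\mathsf{IP}$ and $\mathsf{Open}_{\to P}$ are ideal functionalities. The simulator receives from $\Fki$ the outputs of the corrupted signers and verifiers, namely their $A_{U,k}$, $\bc_s$ and $\bb_{s,U,k}$. It then has to produce the corrupted committee members' view, which consists of shares. Since $|T|\le t$ and sharings have degree $t$, the shares held by $T$ of $\Gamma$, of $A$ and of every $\mathsf{IP}$ output are jointly uniform and independent of the secrets. The simulator therefore samples them uniformly. For each opening towards a corrupted party with known output $y$, it completes the honest parties' shares by interpolating the degree-$t$ polynomial through $y$ and the corrupted shares. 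Openings towards honest parties reveal nothing to $T$. For resolution, the opened value $\delta\rho$ is $0$ when the verdict is $1$ and uniform on $\F^*$ otherwise, as argued in \cref{sec:construction}, so the simulator samples it from the verdict bit it obtains from $\Fki$. Guaranteed output delivery follows from the robustness of the sub-protocols for $t<n/3$ and from Berlekamp--Welch decoding with at most $t$ errors. Broadcast used in complaints is replaced by perfect Byzantine agreement~\cite{PSL1980}.

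We then apply perfect sequential/universal composition (\cite{Canetti2000,KLR2010}) to replace the hybrid sub-protocols by their BGW or BTH realizations, which preserves perfect security. This yields that $\Pki$ perfectly securely computes $\Fki$. For the consequence, the adversary's view in the real experiment of \cref{def:exp} is identically distributed to its view in the $\Fki$-hybrid experiment composed with the simulator. $\Bad$ is a predicate of the keys and the queries, so $\Pr[\Bad]$ is unchanged, and \cref{thm:main} transfers with the same bound.

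We expect the main obstacle to be reactivity and adaptivity. Resolution queries are interleaved with signing and verification queries chosen adaptively, so the simulation must treat $\Fki$ as a reactive functionality. The simulator must also make sure that every mask $\rho$ is used exactly once and that the mask-generation openings of $\rho s$ are simulated as uniform nonzero values. The fact that verifiers may be corrupted without being committee members does not complicate anything, because their view is exactly their $\Fki$ output. We will handle the interleaving by stating $\Fki$ reactively and arguing that the simulation goes through one query at a time.
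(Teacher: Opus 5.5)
Your proposal is correct and follows essentially the same route as the paper's proof sketch: treat $\Fki$ as a randomized arithmetic circuit evaluated with perfectly secure BGW-style sub-protocols, simulate the corrupted members' shares by Shamir privacy and the opened $\delta\rho$ from the verdict alone, obtain guaranteed output from Reed--Solomon decoding, and conclude by perfect composition. Your explicit treatment of reactivity is, if anything, more careful than the paper's, but two points need tightening: a corrupted verifier's view is all $n$ shares of its opened values rather than just its $\Fki$ output (your interpolation step covers this once $T$ is padded to $t$ parties), and transferring \cref{thm:main} requires the \emph{joint} distribution of the adversary's view and the honest outputs to coincide, since $\Bad$ depends on the honest keys, which is exactly what perfect security provides.
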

\begin{proof}[Proof sketch]
$\Fki$ is a randomized arithmetic circuit of multiplicative depth one, with outputs to designated parties. Corrupted verifiers and signers have no inputs. By the BGW theorem with its full proof~\cite{BGW1988,AsharovLindell2017}, every such circuit has a perfectly secure protocol for $t<n/3$ with guaranteed output. The sub-protocols above are standard perfectly secure instantiations whose composition is perfectly secure~\cite{BTH2008,CDN2015}. Robust opening succeeds because a degree-$t$ codeword of length $n\ge3t+1$ corrects $t$ errors.

Privacy of the unopened values follows from the simulator. Corrupted parties hold at most $t$ shares of each degree-$t$ sharing; hence their view of $\Gamma$, $M_{U,k}$, $\bc_0$ and $\bb_{0,\cdot}$ is simulatable from their outputs alone. The opened value $\delta\rho$ is $0$ or uniform on $\F^{*}$, hence simulatable from the verdict.

Perfectly secure protocols with straight-line black-box simulators remain secure under composition~\cite{KLR2010,Canetti2000}. The joint distribution of the adversary's view and all honest outputs is therefore \emph{identical} in the real and ideal executions, and any event, $\Bad$ in particular, has the same probability in both.
\end{proof}

\subsection{Communication Cost}
Let $d$ be the number of random sharings extracted from $n$ dealt sharings: $d=n-t$ with plain Vandermonde extraction (semi-honest dealers), and $d=n-2t$ with hyper-invertible extraction, which also verifies the dealt sharings against malicious dealers~\cite{BTH2008}. Let $r_c=n(n-1)/d$ be the amortized cost, in field elements sent in total, of one $\mathsf{Rand}$ output. Let $\mu_c=2(n-1)+2n(n-1)/d$ be the cost of one degree reduction with double sharings and a king~\cite{DN2007}, and let $n$ (an upper bound) be the cost of one $\mathsf{Open}$. With $E=(\omega+1)(\psi+1)$, the communication per signer-index is
\begin{equation}\label{eq:dkicost}
C_{\mathrm{idx}}=\underbrace{E\,r_c}_{\text{random }A}+\underbrace{E\,\mu_c}_{\Gamma A}+\underbrace{E\,n}_{A\to U}+\underbrace{\nu(\psi+1)\,n}_{\bb_i\to V_i}
\end{equation}
field elements. The epoch set-up costs a further $(\omega+1)^2r_c+\nu(\omega+1)n$. Our simulator implements the semi-honest instantiation ($d=n-t$, king-based degree reduction), with Berlekamp--Welch decoding of all outputs against $t$ parties that send corrupted shares. It agrees with \eqref{eq:dkicost} to within $0.4\%$, e.g., exactly $908$ elements per index for $n=7$, $\omega=2$, $\psi=4$, and it confirms that all issued keys satisfy $\bb_i=A\T\bc_i$. It does \emph{not} model malicious dealers or a malicious king. We therefore separate four statements. (i)~The circuit counts ($(\omega+1)(\psi+1)$ multiplications per index, none per verifier) are exact. (ii)~\Cref{eq:dkicost} with $d=n-t$ is the semi-honest amortized communication, validated by simulation. (iii)~\Cref{eq:dkicost} with $d=n-2t$ is an \emph{estimate} of the malicious communication: perfect malicious security requires the fault detection, complaints and player elimination of~\cite{BTH2008} and the emulation of their broadcasts, which keep the communication linear but add terms that \eqref{eq:dkicost} omits, as do batch rounding, set-up and failure paths. (iv)~\Cref{thm:dki} establishes that a perfectly secure malicious realization exists. \Cref{sec:eval} reports (ii) and (iii) and labels (iii) as an estimate.

\textbf{Derivation.}\label{app:cost} $\mathsf{Rand}$ is batched: every party deals one degree-$t$ sharing, sending $n-1$ elements, and the $n$ dealt sharings are multiplied by a $d\times n$ extraction matrix. With $d=n-t$ and a Vandermonde matrix, this yields $n-t$ uniformly random sharings if dealers follow the protocol~\cite{DN2007}. With $d=n-2t$ and a hyper-invertible matrix, the remaining $t$ outputs are opened to verify consistency against malicious dealers~\cite{BTH2008}. Hence $r_c=n(n-1)/d$. A degree reduction consumes one double sharing $(\sh{r}_t,\sh{r}_{2t})$, generated the same way at $2n(n-1)/(n-t)$ elements, and one masked opening to a king and back, at $2(n-1)$ elements. $\mathsf{Open}_{\to P}$ costs $n$ elements. Summing over steps 3--6 of \cref{fig:dki} gives \eqref{eq:dkicost}. The residual between simulation and \eqref{eq:dkicost} comes from rounding up partially consumed batches.

\begin{remark}[Why virtual anchors matter]
The naive alternative lets each verifier input a secret random $\bc_i$ and computes $\bc_i\T A$ by MPC. This costs $\nu(\psi+1)$ multiplications per index instead of $(\omega+1)(\psi+1)$, a factor $\nu/(\omega+1)=(3t+1)/(t+1)$ more multiplications when $\nu=n=3t+1$ and $\omega=t$, which approaches $3$ for large committees. Our simulators measure a total per-index cost $1.5\times$, $1.65\times$ and $1.72\times$ higher for $n=4,7,10$ ($\psi=4$); the gap widens with $n$ because the multiplication term dominates.
\end{remark}

\begin{remark}[Further optimizations]
The workload is a large SIMD batch of identical depth-one circuits, the ideal case for packed secret sharing~\cite{FranklinYung1992} and for constant-online-communication honest-majority protocols~\cite{TurboPack2022}. If the committee accepts a slightly lower corruption threshold, e.g.\ $n\ge4t+1$, packing reduces the dominant opening term $\nu(\psi+1)n$ by a factor of $\Theta(n/t)$. We leave engineering these variants to future work.
\end{remark}

\section{Integration into Permissioned BFT Ledgers}\label{sec:ledger}

\begin{figure*}[!t]
\centering
\begin{tikzpicture}[font=\footnotesize, >=Stealth,
  box/.style={draw, rounded corners=2pt, align=center, inner sep=4pt, fill=white},
  comm/.style={draw, rounded corners=3pt, dashed, inner sep=6pt},
  lbl/.style={font=\scriptsize, align=center}]
\node[box, minimum width=3.6cm] (dki) {Committee $\mathcal{P}=\{P_1,\dots,P_n\}$, $t<n/3$\\ shares $\sh{\Gamma}$, $\sh{A_{U,k}}$, $\sh{M_{U,k}}$\\ $\Pki$ (offline, per epoch)};
\node[box, right=2.6cm of dki, yshift=0.9cm] (sig) {Signer $U$ (validator / client)\\ $A_{U,k}$, counters $\mathcal{M}_k$};
\node[box, right=2.6cm of dki, yshift=-0.9cm] (ver) {Verifier $V_i$ (validator)\\ $\bc_i=\Gamma\T\blam(x_i)$, $\bb_{i,U,k}$};
\node[box, right=2.3cm of sig, yshift=-0.9cm, minimum width=3.3cm] (cons) {Consensus round $h$:\\ votes $\balpha_U=A_{U,k}\bfe(m_h)$\\ QC $=\sum_{U\in S}\balpha_U$ ($\omega{+}1$ elts)};
\node[box, below=0.9cm of cons] (arb) {Committee arbiter $V_0$ ($x_0=0$)\\ $\Res$: slashing / disputes};
\draw[->] (dki.east) -- node[lbl, above, sloped]{$A_{U,k}$ (private)} (sig.west);
\draw[->] (dki.east) -- node[lbl, below, sloped]{$\bc_i,\bb_{i,U,k}$ (private)} (ver.west);
\draw[->] (sig.east) -- node[lbl, above, sloped]{$\sigma$ (public)} (cons.west);
\draw[->] (cons.west) -- node[lbl, below, sloped]{verify / aggregate} (ver.east);
\draw[->] (cons.south) -- node[lbl, right]{evidence} (arb.north);
\draw[->, dashed] (dki.south) |- node[lbl, below, pos=0.75]{$\sh{\bc_0},\sh{\bb_{0,\cdot}}$ never opened} (arb.west);
\end{tikzpicture}
\caption{\scheme{} in a permissioned BFT ledger. Key issuance is offline and private; signing, aggregation and verification are online, non-interactive and public; dispute resolution reuses the shared state.}
\label{fig:arch}
\end{figure*}

\Cref{fig:arch} shows the deployment. We consider a leader-based BFT protocol in the style of PBFT/HotStuff~\cite{PBFT2002,HotStuff2019,FastHotStuff2024}: $n=3t+1$ validators, a leader per view, and QCs formed from $2t+1$ votes. We first describe the \emph{individual-vote} (IV) mode, in which all validators are verifiers ($\mathcal{V}=\mathcal{P}$) and $\omega=t$; \cref{sec:modes} compares it with the modes of \cref{sec:scale}.

\subsection{Epochs and Key Refresh}
Time is divided into epochs, delimited by validator-set reconfiguration or by key exhaustion. During epoch $e-1$, the committee runs $\Pki$ for epoch $e$ in the background, sized for the expected number of votes and transactions plus headroom. Signing keys and verifier keys of past epochs can be erased once their epoch is final. The arbiter's shares $\sh{\bc_0},\sh{\bb_{0,\cdot}}$ are kept for a dispute window and handed over to new committees by redistribution (\cref{sec:reshare}). Since every epoch uses fresh $\Gamma$ and $A$, a verifier compromised in one epoch learns nothing about the next. Within an epoch, however, the collusion bound $\omega$ counts every verifier credential for that epoch's $\Gamma$ that is ever corrupted, also at different times (\cref{prop:enrol}). The per-epoch cost is quantified in \cref{sec:eval}.

\subsection{Message Format, Slots and Index Streams}
Every consensus message has an explicit \emph{slot} $s=(e,h,r,\phi)$: epoch, height, round and phase (e.g., prepare, commit, proposal). A vote is the structured message
\[
M=(\textsf{VOTE},\,s,\,\mathrm{digest}(B)),\qquad m=H(M).
\]
Every honest validator signs at most one message per slot. Each validator keeps a separate \emph{index stream} per message type. Its $j$-th signature in a stream uses index $k=\lfloor j/\psi\rfloor$; hence an honest validator never exceeds $\psi$ distinct messages per index. Extra rounds caused by view changes consume more capacity. Epochs are therefore provisioned with headroom, and the committee runs a small top-up issuance if a stream nears exhaustion. Exhaustion can only delay a validator (liveness); it never affects safety.

\subsection{Votes and Aggregated Quorum Certificates}
The leader verifies each incoming vote individually, at a cost of $\omega+\psi+2$ multiplications, and then forms
\[
\mathsf{QC}=\big(M,\ \mathrm{bitmap}(S),\ (k_U)_{U\in S},\ \textstyle\sum_{U\in S}\balpha_U\big)
\]
with $|S|=2t+1$.
The leader includes this QC in the next proposal. Each replica runs $\AggVer$ in $\omega+1+|S|(\psi+1)$ multiplications. By \cref{cor:agg}, an accepted QC certifies that every honest member of $S$ voted for $M$, except with probability $Q/q+1/(q-1)$ (plus collision resistance of $H$). Quorum intersection therefore carries over unchanged, as do safety and liveness, which in such protocols rely only on unforgeability and transferability of votes~\cite{PBFT2002,HotStuff2019}. Since the QC is forwarded and re-verified by every replica, unbounded transferability (\cref{cor:transfer}) is exactly the property consensus requires. The explicit index list $(k_U)$ is needed because stream counters advance independently at each validator, e.g., across view changes. We encode each index in $4$ bytes; all QC sizes and wire sizes reported in \cref{sec:eval}, and those of the prototype, include this list ($4|S|$ bytes per QC). $\AggVer$ rejects a QC whose signer identities are not distinct (\cref{sec:construction}).

\subsection{Transactions and Relays}
In IV mode, a client $U$ obtains $N$ indices per epoch. It signs a transaction once, and the transaction is gossiped through any number of untrusted relays or mempools, which need no keys. Every validator verifies it with its own key. This resolves the relay problem that motivated the three-party mode of \silm~\cite{SILMARILS}, without interaction and without bounding the number of hops. For large committees, clients use arbiter-verified keys instead (\cref{sec:av}): their signatures are $32$ bytes and are checked once per block by the committee.

\subsection{Accountability and Slashing}
Forensics~\cite{Forensics2021,Polygraph2021} and slashing~\cite{Casper2017} punish a validator that signs two conflicting messages for the same slot.

\begin{definition}[Equivocation evidence]
Evidence against $U$ consists of two individual signatures $\sigma=(U,k,m,\balpha)$ and $\sigma'=(U,k',m',\balpha')$ together with preimages $M\neq M'$ such that $m=H(M)$, $m'=H(M')$, $M$ and $M'$ carry the same slot, and $\Res(\sigma)=\Res(\sigma')=1$.
\end{definition}

\begin{theorem}[Accountability]\label{thm:account}
(i)~If $U$ equivocates, its evidence is accepted with probability $1$. (ii)~For an honest $U$, the probability that \emph{any} evidence against $U$ is accepted is at most $Q/q+1/(q-1)+\mathsf{Adv}^{\mathrm{CR}}_H$.
With the purely information-theoretic encoding of \cref{rem:hash}, the last term vanishes.
\end{theorem}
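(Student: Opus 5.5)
Part (i) follows from perfect correctness, and part (ii) reduces to the event $\Bad$ of \cref{thm:main} plus a hash collision.

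For (i), suppose $U$ equivocates. Then it has output two signatures $\sigma=(U,k,m,\balpha)$ and $\sigma'=(U,k',m',\balpha')$ on structured messages $M\neq M'$ that carry the same slot, with $m=H(M)$ and $m'=H(M')$. Both signatures were produced by $\Sign$, so they are valid: $\balpha=A_{U,k}\bfe(m)$ and $\balpha'=A_{U,k'}\bfe(m')$. They are also well formed and bound to their declared epoch, so they are admissible for resolution. By \cref{prop:correct}, $\Res$ accepts every valid admissible signature with probability $1$. This holds for each of the two calls. The residual-masking step of $\Res$ is exact: $\delta=0$ gives $\delta\rho=0$ for every $\rho\in\F^{*}$. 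The preimages $M,M'$ are supplied by whoever assembles the evidence, and they satisfy the syntactic conditions. Hence the evidence is accepted with probability $1$.

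For (ii), let $U$ be honest and suppose some evidence $(\sigma,\sigma',M,M')$ against $U$ is accepted. I would split into two cases.

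\emph{Case 1: both $\sigma$ and $\sigma'$ are authentic,} i.e., both were output by $U$'s signing oracle. The honest $U$ signs at most one message per slot. So the messages it actually signed, call them $\hat M$ and $\hat M'$ with $H(\hat M)=m$ and $H(\hat M')=m'$, must coincide, or else carry different slots. In either sub-case the submitted preimages $M\neq M'$ cannot both be the messages $U$ signed. Suppose $\hat M=\hat M'$. Then $m=m'$, and hence $H(M)=H(M')$ with $M\neq M'$, which is a collision. Suppose instead the slots differ. Then at least one submitted preimage, say $M$, differs from $\hat M$ while $H(M)=H(\hat M)$, again a collision.

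\emph{Case 2: one of $\sigma,\sigma'$ is not authentic.} It involves the honest signer $U$, so it is not acceptable in the sense of \cref{def:valid}. The call to $\Res$ that returned $1$ on it is therefore a query to an honest verifier ($V_0$) returning $1$ on a non-acceptable object, which is exactly $\Bad$. Modelling both resolution calls as queries counted in $Q$, \cref{thm:main} bounds this case by $Q/q+1/(q-1)$.

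Formally, I would build a collision finder that runs the whole experiment and outputs the relevant pair of preimages in Case 1. It can do this because it knows the honest signing log. Case 1 then has probability at most $\mathsf{Adv}^{\mathrm{CR}}_H$, and a union bound over the two cases gives (ii). With the injective encoding of \cref{rem:hash}, collisions do not exist, so Case 1 is impossible and the last term vanishes.

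The main obstacle I expect is the reduction in Case 1. I must argue carefully that the signed digests, not the claimed preimages, determine what $U$ signed. I must also handle the case where one signature is reused with two different preimages, where $k=k'$ and $m=m'$ are allowed. The bad outcome needs a genuine collision, not merely a second preimage that the adversary found adaptively. Stating the reduction over the full experiment handles this, because the reduction can simply output whichever pair $(M,\hat M)$ collides.
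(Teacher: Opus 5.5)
Your proof is correct and follows the paper's argument. Part (i) is perfect correctness of $\Res$. Part (ii) splits into the case where a submitted preimage collides under $H$ with a message the honest $U$ signed, and the case where one of $\sigma,\sigma'$ is non-authentic, whose acceptance by the arbiter is exactly the event $\Bad$ of \cref{thm:main}. Your explicit sub-case analysis spells out what the paper states in one line, and the worry about adaptively found second preimages is moot, since any such pair is already a collision.
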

\begin{proof}
(i) follows from perfect correctness of $\Res$. For (ii), an honest $U$ signs at most one message per slot. Unless $H(M)$ or $H(M')$ collides with a message that $U$ signed, at least one of $\sigma,\sigma'$ is non-authentic, and its acceptance by the arbiter is the event $\Bad$ of \cref{thm:main}.
\end{proof}

If the two conflicting messages are signed under the same index and push it beyond $\psi$ messages (e.g., $\psi=1$), equivocation reveals $A_{U,k}$ (\cref{prop:tight}(c)), and anyone can then forge on that index. Evidence nevertheless stays sound: an honest signer never produces the two signatures that reveal its key, and the leak is confined to the guilty signer's single index. Aggregated QCs certify the honest members of $S$. For per-validator attribution of corrupted members, which forensics needs, replicas retain the individual votes, or the leader forwards them on demand.

\subsection{Light Clients and External Auditors}
In IV mode, external parties do not hold verifier keys. They can query $\Res$, rely on a QC after enrolling as verifiers (\cref{prop:enrol}), in which case they count towards $\omega=t$, or use the TN certificates of \cref{sec:tn}, which are designed for them: constant size, a few microseconds to verify, and a collusion bound $\omega$ chosen independently of $n$.

\subsection{Choosing a Mode}\label{sec:modes}
\begin{table}[!t]
\centering
\caption{Deployment modes of \scheme{} ($n=3t+1$).}
\label{tab:modes}
\scriptsize
\setlength{\tabcolsep}{1.6pt}
\begin{tabular}{@{}llll@{}}
\toprule
 & \textbf{IV} (\S\ref{sec:ledger}) & \textbf{TN} (\S\ref{sec:tn}) & \textbf{AV} (\S\ref{sec:av})\\
\midrule
Signer / verifiers & validator / validators & committee / external & client / arbiter\\
Collusion bound $\omega$ & $t$ & independent of $n$ & $0$\\
Object size & $32(t{+}1){+}4|S|$\,B${}^{\dagger}$ & $32(\omega{+}1)$\,B & $32$\,B\\
Signer state & counter & slot number & counter\\
Attribution & per validator & committee & per signer\\
Issuance / val. / block & $\Theta(n^2)$ & $O(\omega+\nu)$ & $O(1)$ per sig.\\
Consensus auth. & signatures & pairwise MACs & --\\
Suited for & $n\lesssim30$ & to $n=1000$ & clients, slashing\\
\bottomrule
\multicolumn{4}{@{}l@{}}{$^{\dagger}$Plus an $n$-bit signer bitmap; $|S|$ signer indices of $4$\,B.}
\end{tabular}
\end{table}
\Cref{tab:modes} summarizes the three modes. They share the construction, the issuance protocol and \cref{thm:main}, and they can be combined. A consortium of a few dozen validators can run IV mode and obtain individually attributable, aggregatable votes. A large committee runs consensus over pairwise MACs~\cite{PBFT2002}, information-theoretic with Wegman--Carter pads, issues TN certificates to external verifiers, gives clients arbiter-verified keys, and, if slashing is required, adds arbiter-verified commit votes (accountable TN, \cref{sec:av}).

\section{Scaling to Large Committees}\label{sec:scale}

\subsection{Where the Quadratic Cost Comes From}\label{sec:scale:why}
In the deployment of \cref{sec:ledger}, three choices interact. Every validator is a signer, every validator is a verifier, and $\omega=t$ because corrupted validators hold verifier keys. Consequently, per block, $n$ signer-indices are consumed, each needs a key for each of the $n$ verifiers, and every key is opened by all $n$ committee members. The resulting $\Theta(n^2)$ field elements per validator per block, together with signatures of $\omega+1=t+1$ elements, are what limit the individual-vote mode to committees of a few dozen validators (\cref{sec:eval}).

None of the three choices is forced by the scheme. This section removes them with three mechanisms that reuse the construction and the proofs unchanged:
(i)~\emph{Threshold notarization} (TN, \cref{sec:tn}). The committee itself is the signer. Its signing matrices are never opened, a certificate is a single $(\omega+1)$-vector, and the verifiers are external parties (auditors, light clients, bridges, other committees) whose collusion bound $\omega$ is independent of $n$. (ii)~\emph{Arbiter-verified keys} (\cref{sec:av}). With $\omega=0$, the only verifier is the committee arbiter. Signatures are a single field element, issuance needs no verifier keys, and the committee checks a whole block of signatures with one online multiplication. This serves clients and, optionally, validator accountability. (iii)~\emph{Open verifier sets and evolving committees} (\cref{sec:enrol,sec:reshare}). Verifiers can be enrolled at any time, also retroactively, by local linear maps, and the shared state can be handed over to a new committee of a different size, which yields membership changes and proactive refresh.

\subsection{Threshold Notarization}\label{sec:tn}
\textbf{Setting.} Committee members $P_1,\dots,P_n$, $n\ge3t+1$, have public, pairwise distinct, non-zero share points $\eta_1,\dots,\eta_n$. External verifiers $V_1,\dots,V_\nu$ have identifiers $x_i$ as in \cref{sec:model}, and at most $\omega$ of them are corrupted. Validators do not hold verifier keys. A validator that wants to check certificates can be enrolled as a verifier, and it then counts towards $\omega$. The committee runs any BFT protocol $\Pi_{\mathsf{BFT}}$ over authenticated point-to-point channels that decides one value per \emph{certified slot} $s=0,1,2,\dots$ (e.g., per height) and satisfies \emph{agreement}: no two honest members decide different values for the same slot. We do not construct a new agreement protocol: \cref{thm:tn} is conditional on agreement of $\Pi_{\mathsf{BFT}}$. A protocol whose safety with MAC authenticators is proved for $t<n/3$, without signatures, is PBFT in the version of Castro and Liskov~\cite{PBFT2002}, including its view change. We instantiate its MACs information-theoretically in the Wegman--Carter style~\cite{WegmanCarter1981,Stinson1994}, with a polynomial hash so that no digest is needed: a message is encoded injectively, with a length prefix, as a vector $\boldsymbol{\mu}=(\mu_1,\dots,\mu_\ell)\in\F^{\ell}$, $\ell\le L_{\max}$, that binds (sender, receiver, epoch, counter, message type, full message), and its tag is $\tau=\kappa_2+\sum_{i=1}^{\ell}\kappa_1^{\,i}\mu_i$. Here $\kappa_1$ is a pre-shared key of the ordered pair (sender, receiver) for the epoch, and $\kappa_2$ is a fresh pre-shared $32$-byte pad for each message, indexed by a per-pair message counter that covers every message type, including view-change, new-view and checkpoint messages. Each pad authenticates at most one message, and a receiver accepts at most one message per counter. A failed check does \emph{not} consume the pad; hence injected candidates cannot prevent a genuine, delivered message from being accepted. Because every honest tag is masked by a fresh pad, $\kappa_1$ is uniform given the honest tags alone; rejections then exclude candidate values. A check of a message other than the one authenticated under that counter succeeds only if $\kappa_1$ is a root of a non-zero polynomial of degree at most $L_{\max}$ (or, for a counter not yet used, if the adversary hits the uniform pad), and each failed check excludes at most $L_{\max}$ candidates. Hence, if honest members perform at most $N_{\mathrm{chk}}$ tag checks in the epoch, the probability that some check accepts a forged message is
\[
\varepsilon_{\mathsf{MAC}}\le\min\Big\{1,\ \frac{N_{\mathrm{chk}}L_{\max}}{q-N_{\mathrm{chk}}L_{\max}}\Big\}\quad(N_{\mathrm{chk}}L_{\max}<q),
\]
and $\varepsilon_{\mathsf{MAC}}\le1$ otherwise. $N_{\mathrm{chk}}$ is finite because a receiver processes at most its link capacity during an epoch; with $q\approx2^{255}$ the bound is negligible for any physical traffic. Fixing $N_{\mathrm{chk}}$ at the link capacity makes rekeying unnecessary within an epoch; $\kappa_1$ is fresh in every epoch, and a lifetime budget sums $\varepsilon_{\mathsf{MAC}}$ over pairs and epochs. Agreement may still rely on hashing elsewhere: PBFT agrees on request digests~\cite{PBFT2002}, which adds the collision advantage $\mathsf{Adv}^{\mathrm{CR}}_{H_{\mathsf{BFT}}}$ unless replicas agree on full values or on an injective encoding. We write $\varepsilon_{\mathsf{agr}}\le\varepsilon_{\mathsf{MAC}}+\mathsf{Adv}^{\mathrm{CR}}_{H_{\mathsf{BFT}}}+\varepsilon_{\mathsf{BFT}}$, where $\varepsilon_{\mathsf{BFT}}$ is the failure probability of the agreement protocol itself with ideal authenticated channels ($0$ for PBFT). In the normal case, each receiver needs only its own entry of a PBFT authenticator, which is the one-tag-per-delivered-message traffic we count in \cref{sec:eval:scale}; view-change and checkpoint traffic are outside that count. Slot $s$ uses the committee index $k(s)=\lfloor s/\psi\rfloor$, a fixed public map that crashes, recoveries and view changes do not alter. The value $v$ decided in slot $s$ is certified on the message $m_s(v)=H(\textsf{CERT},e,s,v)$, which separates certificates from all other message types.

\begin{figure}[!t]
\fbox{\begin{minipage}{0.96\columnwidth}\small
\textbf{Protocol} $\Pi_{\mathsf{TN}}$ (committee $\mathcal{P}$, $t<n/3$)\\[2pt]
\emph{Issuance} (offline, as in $\Pki$ without the opening to a signer).
\begin{enumerate}
\item $\sh{\Gamma}\gets\mathsf{Rand}^{(\omega+1)^2}$; open $\bc_i=\Gamma\T\blam(x_i)$ to each $V_i$.
\item Per committee index $k$: $\sh{A_k}\gets\mathsf{Rand}^{E}$, $\sh{M_k}\gets\sh{\Gamma}\sh{A_k}$, and open $\bb_{i,k}=M_k\T\blam(x_i)$ to each $V_i$. $A_k$ and $M_k$ are never opened. Member $P_j$ holds $A_k(\eta_j)$, where $A_k(y)$ is the degree-$t$ sharing polynomial with $A_k(0)=A_k$.
\end{enumerate}
\emph{Release rule.} When $P_j$ decides $v$ in slot $s$, and only then, it sends the share $\balpha^{(j)}=A_{k(s)}(\eta_j)\,\bfe(m_s(v))$ to the combiners. It releases at most once per slot.\\[2pt]
\emph{Combine.} Given shares for $(s,v)$:
(i) \emph{optimistic}: interpolate $\balpha$ at $0$ from $t+1$ shares of non-suspected members and, if the combiner is an enrolled verifier, check $\balpha$ with its own key;
(ii) \emph{otherwise}, or if the check fails: decode the degree-$t$ Reed--Solomon codeword. The \emph{fast decoder} freezes the set of shares received, then draws a fresh uniform $\boldsymbol{\rho}$, decodes the scalar word $(\boldsymbol{\rho}\T\balpha^{(j)})_j$ (Gao or Berlekamp--Welch), marks members whose scalars disagree as suspects, and interpolates every coordinate from $t+1$ non-suspects. A retry with additional shares freezes a new set and draws a new $\boldsymbol{\rho}$. The \emph{exact decoder} decodes each coordinate separately. Output the certificate $(s,v,\balpha)$.\\[2pt]
\emph{Verify.} $V_i$ accepts iff $\bc_i\T\balpha=\bb_{i,k(s)}\T\bfe(m_s(v))$.
\end{minipage}}
\caption{Threshold notarization. A certificate is $\omega+1$ field elements for every $n$; each member sends one share per certified slot.}
\label{fig:tn}
\end{figure}

\Cref{fig:tn} gives the protocol. A certificate is an ordinary \scheme{} signature of the \emph{committee}, viewed as a single signer whose signing matrices exist only in shared form. Its validity is the same global predicate as before; certificates are therefore transferable without bound and can be resolved by the arbiter. Shares are linear in the shared signing matrix: each member's work is one matrix--vector product, exactly the cost of a signature.

\begin{lemma}[Share simulation]\label{lem:sharesim}
Let $T$ be a set of $t$ members, let $A(y)$ be a uniformly random degree-$t$ matrix polynomial with $A(0)=A$, and let $m_1,\dots,m_L\in\F$. Then $(A(\eta_j))_{j\in T}$ is uniform and independent of $A$, and for every $\ell$ the honest shares $(A(\eta_j)\bfe(m_\ell))_{j\notin T}$ are a fixed linear function of $\balpha_\ell=A\bfe(m_\ell)$ and of $(A(\eta_j)\bfe(m_\ell))_{j\in T}$.
\end{lemma}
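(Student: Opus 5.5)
The plan is to reduce everything to standard facts about Shamir sharing applied entrywise, and then to use the linearity of $A(y)\mapsto A(y)\bfe(m)$. Write the sharing polynomial as $A(y)=A+\sum_{r=1}^{t}R_r y^r$ with independent uniform coefficient matrices $R_1,\dots,R_t\in\F^{(\omega+1)\times(\psi+1)}$. This is what ``uniformly random degree-$t$ matrix polynomial with $A(0)=A$'' means.

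\emph{Step 1 (corrupted shares).} For $j\in T$ we have $A(\eta_j)=A+\sum_r R_r\eta_j^r$. The map $(R_1,\dots,R_t)\mapsto(\sum_r R_r\eta_j^r)_{j\in T}$ acts coordinatewise through the $t\times t$ matrix $[\eta_j^r]_{j\in T,\,1\le r\le t}$. This matrix equals $\operatorname{diag}(\eta_j)$ times a Vandermonde matrix on the distinct points $\eta_j$. It is invertible because the $\eta_j$ are pairwise distinct and non-zero; this is where non-zero share points are used, and it follows from \cref{lem:vdm} with $t-1$ in place of $\omega$. Hence, for each fixed $A$, the tuple $(A(\eta_j)-A)_{j\in T}$ is a bijective image of the uniform $(R_r)$, so it is uniform. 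Translating by the fixed $A$ keeps it uniform. The conditional distribution given $A$ therefore does not depend on $A$, which gives uniformity and independence.

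\emph{Step 2 (honest shares).} Fix $\ell$ and set $\bfe=\bfe(m_\ell)$. The vector polynomial $p(y):=A(y)\bfe$ has degree at most $t$, with $p(0)=\balpha_\ell$ and $p(\eta_j)=A(\eta_j)\bfe$. The $t+1$ points $\{0\}\cup\{\eta_j\}_{j\in T}$ are pairwise distinct, again because share points are non-zero. So $p$ is determined by its values there through Lagrange interpolation. For each $j\notin T$ this gives
\[
A(\eta_j)\bfe=L_0(\eta_j)\,\balpha_\ell+\textstyle\sum_{i\in T}L_i(\eta_j)\,A(\eta_i)\bfe .
\]
The Lagrange coefficients depend only on the public points, not on $\ell$ or on any secret, so this is a fixed linear function, as claimed.

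I do not expect a real obstacle. The only point to state carefully is that the statement takes $|T|=t$ exactly. For $|T|<t$, one pads $T$ with arbitrary extra points, or notes that Step 1 holds for any subset of size at most $t$ while Step 2 then needs the extra simulated shares. The case $|T|=t$ is what the later simulation uses, so I will prove that case.
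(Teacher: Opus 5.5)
Your proposal is correct and follows the paper's proof. The paper cites perfect privacy of Shamir sharing applied entrywise for the first claim, and you verify it directly through the invertibility of $[\eta_j^r]_{j\in T,\,1\le r\le t}$, using that share points are distinct and non-zero; for the second claim both arguments apply Lagrange interpolation to $y\mapsto A(y)\bfe(m_\ell)$ at the $t+1$ distinct points $\{0\}\cup\{\eta_j:j\in T\}$.
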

\begin{proof}
The first claim is the perfect privacy of Shamir sharing, applied entrywise. For the second, $y\mapsto A(y)\bfe(m_\ell)$ is a vector polynomial of degree at most $t$ whose values at the $t+1$ distinct points $\{0\}\cup\{\eta_j:j\in T\}$ are $\balpha_\ell$ and the corrupted shares. Lagrange interpolation at these points determines its value at every $\eta_j$.
\end{proof}

\begin{theorem}[Security of threshold notarization]\label{thm:tn}
Let $\Pi_{\mathsf{BFT}}$ satisfy agreement, and let $\Pi_{\mathsf{TN}}$ be instantiated with the sub-protocols of \cref{thm:dki}. Consider any computationally unbounded adversary that corrupts at most $t$ members and at most $\omega$ verifiers, and makes at most $Q$ verification and resolution queries.
(i)~\emph{Soundness.} The probability that some honest verifier, or the arbiter, accepts a certificate $(s,v,\balpha)$ although no honest member decided $v$ in slot $s$ is at most $Q/q+1/(q-1)+\varepsilon_{\mathsf{agr}}+\mathsf{Adv}^{\mathrm{CR}}_H$. The last term is the collision advantage of the adversary's strategy against the certificate hash $H$; it vanishes with the injective encoding of \cref{rem:hash}, which here means one distinct encoded message per index, one index per slot, and encoding dimension $L+1$. Hashing used inside agreement is accounted separately, in $\varepsilon_{\mathsf{agr}}$. (ii)~\emph{Uniqueness.} Except with the same probability, no two certificates for the same slot and different values are accepted by honest verifiers. (iii)~\emph{Robustness.} If all honest members decide $v$ in slot $s$, then every combiner that receives the shares of the honest members outputs $\balpha=A_{k(s)}\bfe(m_s(v))$, whatever the corrupted members send: with probability $1$ if it runs the exact decoder directly, and with probability at least $1-t/q$ for each invocation of the fast decoder on a frozen set of shares ($1-Rt/q$ over $R$ invocations). The optimistic path of \cref{fig:tn} returns early when the combiner's own check accepts an interpolated vector; it can return a wrong vector only if that check accepts an invalid object, which is a $\Bad$ event of \cref{thm:main} for the combiner's checks (probability $\varepsilon_{\mathsf{opt}}$). The optimistic protocol is therefore correct except with probability $\varepsilon_{\mathsf{opt}}$ with exact fallback, and $\varepsilon_{\mathsf{opt}}+Rt/q$ with $R$ fast-decoder calls. In a synchronous round, missing shares are erasures, and decoding succeeds from the at most $n$ shares received. Asynchronously, the combiner uses online error correction~\cite{BCG1993}: with $r\ge2t+1$ shares received, it decodes up to $\lfloor(r-t-1)/2\rfloor$ errors and outputs only if the decoded polynomial agrees with at least $2t+1$ received shares, and otherwise waits for another share; it terminates once $2t+1+e$ shares have arrived, where $e\le t$ is the number of incorrect shares among them. The optimistic checks count towards $Q$. A combiner that is not an enrolled verifier may output a wrong vector with the probability above; honest verifiers then reject it unless $\Bad$ occurs. (iv)~\emph{Capacity and transfer.} Conditioned on agreement, honest members release shares on at most $\psi$ distinct messages per index; hence the attack of \cref{prop:tight}(c) cannot be mounted from honest releases. Shares released by corrupted members are functions of their own shares and add nothing (\cref{lem:sharesim}). Agreement fails with probability at most $\varepsilon_{\mathsf{agr}}$, which (i) already accounts for. Accepted certificates are accepted by every honest verifier, after any number of hops (\cref{cor:transfer}).
\end{theorem}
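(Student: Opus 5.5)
The plan is to reduce every part to \cref{thm:main} by viewing the committee as one signer $U_{\mathsf{TN}}$ with signing matrices $A_k$, and to handle the shares released by honest members with \cref{lem:sharesim}. I will build a hybrid experiment in which the adversary interacts with $\Fki$ extended by a signing oracle for $U_{\mathsf{TN}}$. By \cref{thm:dki}, the issuance steps of $\Pi_{\mathsf{TN}}$ are replaced by this ideal functionality with no loss. The view of the corrupted members then contains their shares $A_k(\eta_j)$, $j\in T$, together with the honest shares released on decided values. By \cref{lem:sharesim}, the corrupted shares are uniform and independent of $A_k$. The honest shares on a message $m$ are a fixed linear function of $A_k\bfe(m)$ and of the corrupted shares. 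A simulator can therefore produce the entire release traffic from the outputs of a \textsc{Sign}$(U_{\mathsf{TN}},k(s),m_s(v))$ query, issued once some honest member decides $v$ in slot $s$. Padding $T$ up to exactly $t$ members costs nothing.

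For soundness (i), I will condition on the event that agreement holds. Its failure is charged to $\varepsilon_{\mathsf{agr}}$, which already includes the MAC and agreement-hash terms. Under agreement, honest members decide at most one value per slot. Each index serves at most $\psi$ slots, so the simulated signing oracle is queried on at most $\psi$ distinct messages per index and never refuses. This also gives the first half of (iv). Now suppose an honest verifier accepts $(s,v,\balpha)$ although no honest member decided $v$. There are two cases. If $m_s(v)$ equals a signed message $m_{s'}(v')$ with $(s',v')\neq(s,v)$, this is a collision of $H$, charged to $\mathsf{Adv}^{\mathrm{CR}}_H$. With the injective encoding this case cannot occur. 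Otherwise $m_s(v)\notin\mathcal{M}_{k(s)}$, the object is not authentic, and its acceptance is $\Bad$, bounded by $Q/q+1/(q-1)$ through \cref{thm:main}, case (iii). Uniqueness (ii) follows directly: two accepted certificates for one slot with different values would need two decided values, which agreement forbids, or at least one non-authentic acceptance, which is covered by (i). Transfer in (iv) is \cref{cor:transfer} applied to the signer $U_{\mathsf{TN}}$.

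For robustness (iii), I will use the fact that the honest shares lie on the degree-$t$ vector polynomial $y\mapsto A_{k(s)}(y)\bfe(m_s(v))$, whose value at $0$ is the correct certificate. The exact decoder is coordinatewise Berlekamp--Welch on a length-$n$ codeword with at most $t$ errors and $n\ge3t+1$. It always succeeds, and erasures of missing corrupted shares only help. For the fast decoder, the frozen set of shares is fixed before $\boldsymbol{\rho}$ is drawn. Each corrupted error vector $\bfe_j\neq\mathbf{0}$ is then masked to zero, $\boldsymbol{\rho}\T\bfe_j=0$, with probability $1/q$. A union bound over at most $t$ corrupted members gives failure at most $t/q$ per invocation, and $Rt/q$ over $R$ invocations. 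If every nonzero error survives the projection, the scalar decoder marks all faulty members as suspects. Any $t+1$ non-suspects are then honest, and interpolating from them yields the correct $\balpha$. On the optimistic path, a wrong interpolated vector is returned only if the combiner's own check accepts an invalid object, which is a $\Bad$ event of \cref{thm:main} counted in $Q$. In the asynchronous case, I will cite the online error correction guarantee of \cite{BCG1993} directly.

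The step I expect to be the main obstacle is the simulation argument in the first paragraph. It must show that conditioning on the released honest shares leaves the posterior of $(\Gamma,A_k)$ exactly the affine product set of Step 2 in \cref{thm:main}, including the case where corrupted members deviate and the case where a corrupted combiner picks which shares to use. I would handle it by arguing that every released honest share is a deterministic function of the transcript and the signing answers, so it adds no constraint beyond $A_k\bfe(m)=\balpha$. Step 3 of \cref{thm:main} then applies verbatim.
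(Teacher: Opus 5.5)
Your proposal is correct and follows the paper's proof essentially step for step. The committee becomes a single honest signer in the experiment of \cref{def:exp}, and honest releases are simulated from \textsc{Sign} answers via \cref{lem:sharesim}, with $|T|$ padded to $t$ and capacity supplied by agreement. Soundness then reduces to \cref{thm:main} up to the collision and agreement terms, uniqueness follows from agreement, and robustness comes from Reed--Solomon decoding plus a union bound over the frozen-set projections.

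The differences are minor. First, the paper realizes an explicit reactive functionality $\mathcal{F}_{\mathsf{TN}}$ that keeps $A_k$ shared, because \cref{thm:dki} itself concerns $\Pki$, where $A$ is opened to the signer. Second, the reduction lets \cref{thm:main} apply to the simulating adversary as a black box, so the posterior re-derivation you flag as the main obstacle is not needed.
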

\begin{proof}
\emph{(iv)}, capacity: by agreement, all honest members that release in slot $s$ release on the same value; each index serves $\psi$ slots; hence the honest members release shares on at most $\psi$ distinct messages per index, unless agreement fails, which has probability $\varepsilon_{\mathsf{agr}}$.

\emph{(i).} Let $\mathcal{F}_{\mathsf{TN}}$ be the reactive functionality that samples $\Gamma$ and the $A_k$, keeps them, outputs the verifier keys of $\Fki$ (also on later enrollment requests, \cref{prop:enrol}), and, when an honest member $P_j$ releases on $(k,m)$, hands $A_k(\eta_j)\bfe(m)$ to the adversary. Throughout $\Pi_{\mathsf{TN}}$, all state is kept as degree-$t$ sharings, and every step is $\mathsf{Rand}$, $\mathsf{IP}$, a local linear map, a robust opening, or the release of a local linear map of a member's own shares. The perfectly secure reactive composition used for \cref{thm:dki}~\cite{BGW1988,AsharovLindell2017,KLR2010} therefore shows that $\Pi_{\mathsf{TN}}$ securely realizes $\mathcal{F}_{\mathsf{TN}}$; in the ideal world, the corrupted members' shares of each $A_k$ are uniform and independent of all secrets. Given an adversary $\mathcal{B}$ against $\Pi_{\mathsf{TN}}$, we build an adversary $\mathcal{A}$ against the experiment of \cref{def:exp} with a single honest signer $U^\star$ (the committee) whose index-$k$ signing matrix is $A_k$. Giving $\mathcal{B}$ the shares of further honest members only increases its information; we therefore assume $|T|=t$. $\mathcal{A}$ corrupts the same verifiers as $\mathcal{B}$, samples the corrupted members' shares uniformly, and runs $\mathcal{B}$ together with all honest members of $\Pi_{\mathsf{BFT}}$; consensus uses no secret of \scheme{}, and $\mathcal{A}$ can execute it in full. Whenever a simulated honest member decides $v$ in slot $s$ for the first time, $\mathcal{A}$ queries \textsc{Sign}$(U^\star,k(s),m_s(v))$ and computes all honest shares from the answer by \cref{lem:sharesim}. By capacity, these queries respect the bound of $\psi$ messages per index. $\mathcal{A}$ forwards verification and resolution queries unchanged. The view of $\mathcal{B}$ is distributed exactly as in the real protocol. A certificate on $(s,v)$ that no honest member decided is non-authentic unless $m_s(v)$ collides with a signed message, which happens with probability $\mathsf{Adv}^{\mathrm{CR}}_H$. Its acceptance is therefore the event $\Bad$, which \cref{thm:main} bounds by $Q/q+1/(q-1)$.

\emph{(ii)} follows from (i) and agreement: two accepted certificates on different values for one slot would require an honest decision on each of them.

\emph{(iii).} Honest shares are evaluations of the degree-$t$ vector polynomial $y\mapsto A(y)\bfe(m)$, and at most $t$ shares are wrong or missing. With $r$ received shares, $e$ of them wrong, and $s'$ missing, decoding succeeds when $r\ge t+1+2e$. Since $r=n-s'$ and $e+s'\le t$, this holds for $n\ge3t+1$; the exact decoder applies it to each coordinate. In the fast decoder, a wrong share with error $\mathbf{d}_j\neq\mathbf{0}$ escapes detection only if $\boldsymbol{\rho}\T\mathbf{d}_j=0$, which has probability $1/q$ because $\boldsymbol{\rho}$ is drawn after the set of shares is frozen. A share that arrives later is not part of this decoding; a retry draws a new $\boldsymbol{\rho}$ and is accounted separately. Unless one of the at most $t$ wrong shares escapes, every non-suspect share is correct, and interpolation from $t+1$ of them returns $\balpha$. The stopping rule of online error correction is the standard one~\cite{BCG1993}: an output that agrees with $2t+1$ received shares agrees with at least $t+1$ honest shares and is therefore the honest polynomial.
\end{proof}

\begin{remark}[System-level error budget]\label{rem:budget}
The components fail on separate events, which a deployment must add up. Over the lifetime of a system with $E$ anchor matrices (including those of arbiter-verified keys),
\begin{align*}
\varepsilon_{\mathsf{sys}}\le{}&\underbrace{\tfrac{Q_{\mathrm{total}}}{q}+\tfrac{E}{q-1}}_{\text{USS, incl.\ batches}}+\varepsilon_{\mathsf{MAC}}+\mathsf{Adv}^{\mathrm{CR}}_{H}\\
&+\mathsf{Adv}^{\mathrm{CR}}_{H_{\mathsf{BFT}}}+\varepsilon_{\mathsf{BFT}}+\tfrac{R\,t}{q}+\tfrac{tL_{\mathsf{RS}}}{q},
\end{align*}
where $Q_{\mathrm{total}}$ counts all verification, resolution and optimistic-combiner checks and charges each tested batch by its size (\cref{lem:batch}), $\mathsf{Adv}^{\mathrm{CR}}_{H}$ covers the hashes of signed and certified messages (zero with the injective encoding), $\mathsf{Adv}^{\mathrm{CR}}_{H_{\mathsf{BFT}}}$ the digests used inside agreement, $Rt/q$ the fast-decoder calls, and $tL_{\mathsf{RS}}/q$ the redistributions (\cref{lem:reshare}), with $L_{\mathsf{RS}}$ summed over all of them; $\varepsilon_{\mathsf{MAC}}$ is likewise summed over pairs and epochs. With ideal private channels, the issuance protocols are perfectly secure and add no term; if those channels are built from pads and the same MAC, their checks are counted in $\varepsilon_{\mathsf{MAC}}$. Each term is counted once; in particular, $\varepsilon_{\mathsf{agr}}$ in \cref{thm:tn} stands for the agreement terms of this sum, not for additional events. A fully information-theoretic deployment needs the injective encoding for certificates, the polynomial MAC above, agreement on full values, and private channels from pads (\cref{sec:discussion}).
\end{remark}

\begin{remark}[What TN changes and what it does not]
TN removes signer-side state: indices follow the slot number fixed by consensus, and an honest committee cannot equivocate. The signing key is never in one place, and certificates have constant size. What TN does \emph{not} provide is per-validator attribution: a certificate shows that the committee decided, not who voted. \Cref{sec:av} restores attribution at extra cost.
\end{remark}

\begin{remark}[Linear communication for $t<n/5$]\label{rem:linear}
The following observation is specific to \scheme{}. If the committee accepts $n\ge5t+1$, the certificate can itself be the quorum certificate of a HotStuff-style protocol, without all-to-all authenticated voting. Members share $A_k$ with degree $d=n-2t-1$ and release a share as their vote; the vote message includes the view and the phase. A certificate needs $d+1=n-2t$ shares, and three counts make this work. All $n-t$ honest members suffice for a certificate (liveness). Each certificate contains at least $n-3t$ honest votes, and two certificates for one view and phase would need $2(n-3t)>n-t$ distinct honest voters, i.e., an honest member that voted twice in one view and phase (uniqueness). Decoding a degree-$d$ codeword from $n$ shares with $t$ errors needs $n\ge d+1+2t$ (robustness); a leader that waits only for $n-t$ shares corrects fewer errors and must wait for more on failure. Two conditions differ from TN. First, capacity must be budgeted per (view, phase) message, because honest members may vote in many views. Second, in the soundness argument, shift~(B) becomes $\bg\bh\T L(y)$, where $L$ has degree at most $d$, $L(0)=1$, and vanishes at the at most $d$ points whose shares on the target message the adversary has seen. The vector $\bh$ must be orthogonal to $\bfe(m')$ for \emph{every} message $m'$ on which some honest share was released; therefore all such messages, not only certified ones, count towards $\psi$. We leave a full treatment of this variant to future work.
\end{remark}

\textbf{Comparison with threshold signatures.} TN is structurally close to a threshold signature: the committee holds a degree-$t$ sharing of a linear signing map, and any $t+1$ correct shares interpolate a certificate. \Cref{tab:threshold} compares it with BLS~\cite{CKS2005,SBFT2019} and lattice~\cite{ThRaccoon2024,Ringtail2025} threshold signatures. TN rests on no hardness assumption, and certificates issued today cannot be forged by future cryptanalysis. The price is verification with secret keys of admitted verifiers, which is unavoidable because no publicly verifiable signature is information-theoretically unforgeable; transferability is preserved because validity is a global predicate (\cref{cor:transfer}). A TN share is one matrix--vector product released after the consensus decision, with no signing rounds among the signers and no per-share verification: wrong shares are removed by Reed--Solomon decoding. The new ingredient is that the virtual anchors make a secret-shared linear map transferable and non-repudiable; without them, a secret-shared MAC would let every verifier forge.

\begin{table}[!t]
\centering
\caption{Threshold notarization versus threshold signatures and distributed quorum signatures (DQS).}
\label{tab:threshold}
\scriptsize
\setlength{\tabcolsep}{2.5pt}
\resizebox{\columnwidth}{!}{%
\begin{tabular}{@{}lllll@{}}
\toprule
 & \textbf{BLS thr.}~\cite{CKS2005,SBFT2019} & \textbf{Lattice thr.}~\cite{ThRaccoon2024,Ringtail2025} & \textbf{DQS}~\cite{KSW2026DQS} & \textbf{\scheme{} TN}\\
\midrule
Security basis & pairings (not PQ) & lattices & PQ sig. & none (IT)\\
Verification & public key & public key & public key & secret keys\\
Signing rounds & 1 + checks & 2--3 & approvals & 0 (decided)\\
Faulty shares & verified & by protocol & n/a & RS decoding\\
Size & 96\,B & several KB & $O(n)$ sigs & $32(\omega{+}1)$\,B\\
Key lifetime & unlimited & unlimited & per scheme & $\psi$ msgs / index\\
Individual signers & no & no & some node & yes (IV, AV)\\
\bottomrule
\end{tabular}}
\end{table}

\subsection{Arbiter-Verified Keys and Batch Arbitration}\label{sec:av}
Setting $\omega=0$ gives a key whose only verifier is the arbiter. The anchor matrix becomes a scalar $\gamma$, a signer-index is a row vector $\mathbf{a}_{U,k}\in\F^{1\times(\psi+1)}$, and a signature is the single element $\alpha=\mathbf{a}_{U,k}\bfe(m)$. The committee keeps $\sh{\gamma}$ and $\sh{\mathbf{b}_{0,U,k}}=\sh{\gamma}\sh{\mathbf{a}_{U,k}}$ and opens $\mathbf{a}_{U,k}$ to the signer. No verifier key is issued. \Cref{thm:main} holds verbatim with $C=\emptyset$: shift (A) uses $\bu=1$, and shift (B) needs only $\gamma\neq0$.

Checking each signature with $\Res$ costs one multiplication and one opening online, plus a preprocessed non-zero mask (\cref{sec:construction}). A block of signatures can instead be checked at once. Note that $\gamma$ is uniform on $\F$ and is therefore zero with probability $1/q$; this is the singular-anchor term of \cref{thm:main} for $\omega=0$, and it is shared by all checks on the same $\gamma$.

\begin{lemma}[Batch arbitration]\label{lem:batch}
Let $\sigma_1,\dots,\sigma_L$ be arbiter-verified signatures, fixed before a uniform $r\in\F$ is opened. Let $\sh{\delta_j}=\sh{\gamma}\alpha_j-\sh{\mathbf{b}_{0,U_j,k_j}}\bfe(m_j)$ (local), $\sh{D}=\sum_{j=1}^{L} r^{j-1}\sh{\delta_j}$, and output $[\rho D=0]$ after one multiplication by a fresh uniform non-zero $\sh{\rho}$ and one opening. The output reveals only the verdict. If some $\sigma_j$ is not acceptable, then on any transcript of the ideal game of the proof of \cref{thm:main} (in particular, for a first query), the verdict is $1$ with probability at most $(L+1)/q$: at most $1/q$ for the event $\gamma=0$, and at most $L/q$ otherwise.
\end{lemma}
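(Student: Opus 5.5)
Two claims need proof: the output reveals only the verdict, and a non-acceptable object in the batch is accepted with probability at most $(L+1)/q$.

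\emph{Privacy.} I would argue as for $\Res$ in \cref{sec:construction}. The value opened is $\rho D$ with $\rho$ uniform on $\F^*$ and used once. If $D=0$ it is $0$; otherwise it is uniform on $\F^*$. So it can be simulated from the bit $[D=0]$ alone. The shares of the local combination $\sh{D}$ and of $\sh{\rho D}$ are simulatable as in the proof of \cref{thm:dki}. The value $r$ is public but independent of all secrets.

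\emph{Soundness.} Since $\rho\ne0$, the verdict equals $[D=0]$. Let $J$ be the set of non-acceptable $\sigma_j$ (non-empty by hypothesis). I would first pay $1/q$ for the event $\gamma=0$ and then work on $\gamma\ne0$, conditioning on the transcript $\tau$ of the ideal game from Step~1 of \cref{thm:main}. The plan is to show two things:
\begin{itemize}
\item the residual vector $(\delta_j)_j$ is not identically zero except with probability at most $(L-1)/q$, or rather at most $|J|/q\le L/q$ via a union bound over $j\in J$ using the shifts;
\item given a non-zero residual vector fixed before $r$, the polynomial $D(r)=\sum_j r^{j-1}\delta_j$ is non-zero of degree at most $L-1$, so a uniform $r$ is a root with probability at most $(L-1)/q$ (Schwartz--Zippel).
\end{itemize}
Added naively these give $|J|/q+(L-1)/q$, which is too loose. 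Instead I would combine the two steps into a single application of \cref{lem:affine}.

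\emph{Combined step.} Fix $r$ (independent of the secrets) and pick the largest index $j^\star\in J$. Write $D=\sum_{j\in J}r^{j-1}\delta_j$; acceptable $\sigma_j$ have $\delta_j=0$ by correctness. Treat $D$ as an affine function of the honest secrets $(\gamma,(\mathbf{a}_{U,k})_{U\notin\mathcal U_C})$, uniform on $\mathcal{S}(\tau)$ by Step~2. I would then look for a direction that changes $D$ by a non-zero amount, in one of two cases.
\begin{itemize}
\item \emph{Case (B) for $j^\star$} (honest signer, fresh message). The shift $\bg\bh\T$ on $\mathbf{a}_{U^\star,k^\star}$, with $\bg=1$ and $C=\emptyset$, changes $\delta_{j^\star}$ by $\gamma\ne0$. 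It also changes the other $\delta_j$ that use the same index and a fresh message, giving a change $\gamma\sum_j r^{j-1}\bh\T\bfe(m_j)$ over those $j$. This is a non-zero polynomial in $r$ of degree at most $L-1$ for a suitable choice of $\bh$ (choose $\bh$ via Vandermonde, \cref{lem:vdm}), so it vanishes for at most $L-1$ values of $r$.
\item \emph{Case (A) for all of $J$.} Each $\delta_j=\gamma d_j$ with transcript-known $d_j\ne0$, so $D=\gamma P(r)$ with $P(r)=\sum_{j\in J} r^{j-1}d_j$ non-zero. With $\gamma\ne0$, the verdict is $1$ only if $P(r)=0$, which happens with probability at most $(L-1)/q$.
\end{itemize}
In the non-degenerate case, \cref{lem:affine} gives $D$ uniform, hence zero with probability $1/q$. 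The total is then at most $(L-1)/q+1/q=L/q$ on $\gamma\ne0$, plus $1/q$ for $\gamma=0$.

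\emph{Main obstacle.} The hard part is the mixed case: several non-acceptable signatures of different types and signers, possibly sharing an index. There one must exhibit a single shift direction whose effect on $D$ is a non-zero polynomial in $r$. I would handle this by choosing the shift according to the type of $j^\star$, as above. The final step is an exhaustive check (as the paper's evaluation mentions) that the coefficient polynomial has degree at most $L-1$ and is non-zero.
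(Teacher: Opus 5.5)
Your combined argument can be made to work, but your reason for abandoning the two-step split is mistaken, and that split is exactly the paper's proof. The event that the whole residual vector $(\delta_j)_j$ vanishes is an intersection. It is therefore contained in $\{\delta_{j_0}=0\}$ for any \emph{single} non-acceptable $j_0$, so no union bound over $J$ is needed. On $\gamma\neq0$, there are two cases. If $\sigma_{j_0}$ deviates by a transcript-known $d_{j_0}\neq0$, then $\delta_{j_0}=\gamma d_{j_0}\neq0$ with certainty. If $\sigma_{j_0}$ is a fresh message of an honest signer, then $\delta_{j_0}=0$ has probability $1/q$ by shift (B).

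The paper therefore bounds ``all $\delta_j=0$'' by $1/q$, and ``some $\delta_j\neq0$ but $D(r)=0$'' by $(L-1)/q$, using only that $r$ is uniform and independent of the fixed residuals. With $\Pr[\gamma=0]=1/q$ this gives $(L+1)/q$. This route needs only the single-item shifts of \cref{thm:main} and never has to examine how several items on one index interact. Your route conditions on $r$ first and applies one joint shift of $\mathbf{a}_{U^\star,k^\star}$. That shift moves every fresh item on that index and changes $D$ by $\gamma$ times the polynomial $c(r)=\sum_j r^{j-1}\bh\T\bfe(m_j)$ (up to sign). It reaches the same bound $(L-1)/q+1/q$ on $\gamma\neq0$, but with more bookkeeping.

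Your route needs two repairs. First, splitting on the type of the \emph{largest} index $j^\star\in J$ is not exhaustive. Suppose $\sigma_{j^\star}$ has a known deviation while some other item of $J$ is a fresh message of an honest signer. Then $D$ is not of the form $\gamma P(r)$ with $P$ known, and neither of your cases applies. Take $j^\star$ to be any fresh honest item whenever one exists; your Case (B) never uses maximality. The shift leaves items of other indices unchanged, and also items on already signed messages of the same index, because $E_{U^\star,k^\star}\T\bh=\mathbf{0}$. Mixed batches are then covered.

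Second, drop the exhaustive check. It cannot prove a statement for general $q$ and $L$, and it is unnecessary. Distinct items carry distinct powers $r^{j-1}$, so the coefficient of $r^{j^\star-1}$ in $c(r)$ is $\bh\T\bfe(m_{j^\star})=1$. Hence $c$ is a non-zero polynomial of degree at most $L-1$. Your privacy argument agrees with the paper's.
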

\begin{proof}
$\rho D$ is $0$ iff $D=0$, and otherwise it is uniform on $\F^*$; the verdict therefore determines the distribution of the opened value, which can be simulated from the verdict. Suppose $\sigma_{j_0}$ is not acceptable. The event $\gamma=0$ has probability $1/q$; assume $\gamma\neq0$. \emph{Case 1: some $\delta_j\neq0$.} Then $D(r)=\sum_j r^{j-1}\delta_j$ is a non-zero polynomial in $r$ of degree at most $L-1$. The batch and all keys are fixed before $r$ is opened; hence $r$ is uniform and independent of the $\delta_j$, and $D(r)=0$ with probability at most $(L-1)/q$. \emph{Case 2: all $\delta_j=0$.} In particular, the arbiter's check accepts the non-acceptable $\sigma_{j_0}$. With $\omega=0$ there are no corrupted functionals. If $\sigma_{j_0}$ deviates by a known non-zero amount $\Delta$ from a valid or returned signature (a corrupted signer, or a modified signature of an honest one), then $\delta_{j_0}=\gamma\Delta\neq0$, which is impossible. If $\sigma_{j_0}$ is a fresh message of an honest signer, then $\delta_{j_0}=\gamma(\alpha_{j_0}-\mathbf{a}\bfe(m_{j_0}))$ vanishes iff $\mathbf{a}\bfe(m_{j_0})=\alpha_{j_0}$, which has probability $1/q$ by shift (B), because on such a transcript $\mathbf{a}$ is uniform on its affine posterior. Altogether, $1/q+(L-1)/q+1/q=(L+1)/q$.
\end{proof}
Weights starting at $r^{1}$, i.e., $r^{j}$ for $j=1,\dots,L$, do not give this bound: for $L=1$ and a fresh unsigned message submitted with $\alpha=0$, the verdict is $[r\gamma\mathbf{a}\bfe(m)=0]$, which holds with probability $3/q-3/q^2+1/q^3>2/q$. An exhaustive check over $\mathbb{F}_3,\mathbb{F}_5,\mathbb{F}_7$ (\texttt{batch\_check.py} in~\cite{SILMVcode}) confirms the counterexample, and the corrected bound for batches of $L\in\{1,2\}$ unsigned honest items and every choice of submitted values.

\textbf{Batches within a run.} As in the proof of \cref{thm:main}, we answer every batch in the ideal game with the bit ``every item is acceptable'', which the adversary can compute itself. Each batch draws its own $r$ after it is fixed and its own fresh $\rho$; $r$ is public and independent of all secrets; hence it does not change the posterior. With $Q_{\mathrm{ind}}$ individual resolution queries and $B$ batches of sizes $L_1,\dots,L_B$ on the same $\gamma$, the proof above gives
\[
\Pr[\Bad]\le\frac{1+Q_{\mathrm{ind}}+\sum_{b=1}^{B}L_b}{q}:
\]
the event $\gamma=0$ is counted once, each individual query contributes $1/q$, and each non-acceptable batch contributes $(L_b-1)/q$ for cancellation and $1/q$ for one non-acceptable item. A rejected batch is bisected to locate the offenders. Bisection identifies the items that fail the algebraic check; it identifies \emph{every} non-acceptable item only outside the event $\Bad$, since a sub-batch can pass by cancellation and an unsigned item can pass by a lucky guess. Locating $\beta$ offenders takes $O(\beta\log L)$ further tests, but the error accounting charges the sizes of all tested sub-batches: the bound above holds with $\sum_bL_b$ taken over every tested batch, which can reach $\Theta(L\log L)$ item appearances when most items are bad. Each test uses fresh $r$ and $\rho$.

\textbf{Clients.} A client receives arbiter-verified indices. Its transaction carries a $32$-byte signature, validators verify the transactions of a block collectively with \cref{lem:batch}, and the block's TN certificate attests to the outcome for external parties. Non-repudiation holds through $\Res$. One signer-index costs $(\psi+1)(r_c+\mu_c+n)$ field elements in total, i.e., a constant number per validator and transaction. The client receives $(\psi+1)n$ elements per index, about $32n$ bytes per transaction, which is the only client-side cost that grows with $n$ (\cref{sec:eval:scale}).

\textbf{Accountable TN.} Validators may additionally attach an arbiter-verified signature to their commit votes. The committee batch-checks the votes of each block. Two accepted signatures of one validator on conflicting votes for the same slot are equivocation evidence in the sense of \cref{thm:account}, whose proof applies unchanged with $\omega=0$. This restores per-validator accountability on top of TN, at the cost quantified in \cref{sec:eval:scale}.

\subsection{Open Verifier Sets}\label{sec:enrol}
\begin{proposition}[Dynamic and retroactive enrollment]\label{prop:enrol}
A verifier with a fresh identifier $x_{\mathrm{new}}\notin\{0,x_1,\dots,x_\nu\}$ can be enrolled at any time by opening $\bc_{\mathrm{new}}=\Gamma\T\blam(x_{\mathrm{new}})$ and $\bb_{\mathrm{new},k}=M_k\T\blam(x_{\mathrm{new}})$ for any set of indices $k$, including indices whose slots are long past. This uses only local linear maps and robust openings, costs $(\omega+1)n$ field elements once per epoch for $\bc_{\mathrm{new}}$ and $(\psi+1)n$ per index, and needs no multiplication. Consider the following extension of \cref{def:exp}: the committee corruption remains static, but the adversary may adaptively request enrollments of fresh identifiers of its choice and adaptively corrupt enrolled verifiers, receiving all keys the corrupted verifier holds, for past and future indices. \Cref{thm:main,thm:tn} continue to hold as long as the \emph{cumulative} number of verifiers ever corrupted for the same anchor matrix $\Gamma$, including verifiers whose credentials were later revoked, is at most $\omega$.
\end{proposition}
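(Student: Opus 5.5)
The plan has three parts: the mechanics of enrollment, the cost count, and the security claim. The first two are direct. By \eqref{eq:keypoly} and the shared state kept by $\Pki$ (and $\Pi_{\mathsf{TN}}$), the committee still holds $\sh{\Gamma}$ and $\sh{M_k}$ for every index of the epoch. So $\sh{\bc_{\mathrm{new}}}=\sh{\Gamma}\T\blam(x_{\mathrm{new}})$ and $\sh{\bb_{\mathrm{new},k}}=\sh{M_k}\T\blam(x_{\mathrm{new}})$ are local linear maps with public coefficients. The key material is then delivered with $\mathsf{Open}_{\to V_{\mathrm{new}}}$ on $\omega+1$ and $\psi+1$ entries respectively, which gives the stated costs $(\omega+1)n$ and $(\psi+1)n$ and uses no multiplication. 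Correctness is immediate: $\bb_{\mathrm{new},k}=A_k\T\bc_{\mathrm{new}}$, exactly as for original verifiers, so \cref{prop:correct} applies to the new verifier. Since $M_k$ is kept in shares for past indices too, nothing changes for indices whose slots are long past.

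For security I would first enlarge the ideal functionality: $\Fki$ (respectively $\mathcal{F}_{\mathsf{TN}}$) answers an enrollment request for $x_{\mathrm{new}}$ by outputting $\Gamma\T\blam(x_{\mathrm{new}})$ and the requested $\bb$'s. The realization argument of \cref{thm:dki} carries over unchanged. Enrollment consists only of local linear maps and robust openings to a designated party, which the straight-line simulator handles exactly as steps 2 and 6 of \cref{fig:dki}. Committee corruption is static, so the reactive composition argument is unaffected.

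In the ideal game, the key step is to revisit the proof of \cref{thm:main}, where $C$ was fixed in advance. Let $C^\ast$ be the set of identifiers ever corrupted for $\Gamma$, with $|C^\ast|\le\omega$ by hypothesis. At query $j$, let $C_j\subseteq C^\ast$ be the set corrupted so far. The adversary's view $\tau_j$ then contains $K_j=\Lambda_{C_j}\T\Gamma$ and $K_jA_{U,k}$ for every requested index. Keys of honest enrolled verifiers are never revealed, so they add no constraint. Step~2 goes through with $K_j$ in place of $K$: the posterior remains uniform on a product of affine subspaces, because each adaptive corruption only appends linear constraints of the same form. The induction over the transcript in the paragraph on posterior uniformity handles adaptivity, since the next enrollment or corruption request is a deterministic function of $\tau_j$. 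For Step~3, a query to an honest verifier $V_i$ at time $j$ has $x_i\notin C_j$, and $|C_j|\le\omega$. So \eqref{eq:indep} holds for $C_j$ by \cref{lem:vdm}, and shifts (A) and (B) apply verbatim. Each query therefore contributes at most $1/q$, and the singular-$\Gamma$ term is paid once, giving the same bound. Shift (B) uses $\bc_i\notin\operatorname{row}(K_j)$ only for the current $K_j$, so later corruptions do not matter.

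The main obstacle is a verifier that is honest when it accepts and is corrupted later. Here I would argue that $\Bad$ is determined at query time on the transcript $\tau_j$, and later corruption cannot change an answer already given. The per-query bound is computed conditionally on $\tau_j$, which is all that the union bound needs. The cumulative-count hypothesis is essential: revoked credentials remain in the view, so revoked verifiers stay in $C_j$. If $|C^\ast|$ exceeded $\omega$, \cref{prop:tight}(b) would apply and break the scheme. For \cref{thm:tn}, the same replacement of $K$ by $K_j$ inside the reduction of its part (i) suffices, because the reduction forwards enrollment requests to the extended functionality.
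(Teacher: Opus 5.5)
Your proposal is correct and follows essentially the paper's argument. The keys are exactly those $\Fki$ would issue, and the proof of \cref{thm:main} is rerun with the time-indexed corrupted set $C_j$: each adaptive corruption appends a row to $K$ and to every $B_{U,k}$, so the posterior stays uniform on a product of affine subspaces, \eqref{eq:indep} holds by distinctness of identifiers whatever order they were chosen in, and revoked rows persist, which is why the bound must be cumulative. Your explicit extension of the ideal functionality and your handling of a verifier that is corrupted after it accepts add detail, but they do not change the route.
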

\begin{proof}
The keys are exactly those $\Fki$ would have issued to $x_{\mathrm{new}}$. The bound of \cref{thm:main} does not depend on $\nu$, only on $|C|\le\omega$ and on the distinctness of identifiers used in \eqref{eq:indep}. For the adaptive extension, we follow the proof of \cref{thm:main} with $C_j$, the set of verifiers corrupted before query $j$. Step~1 is unchanged, because the ideal answers do not depend on keys. For Step~2, we argue by induction over the transcript: a corruption of $V_s$ reveals $\bc_s\T=\blam(x_s)\T\Gamma$ and $\bb_{s,U,k}\T=\bc_s\T A_{U,k}$ for all $(U,k)$, i.e., it appends one row to $K$ and the corresponding row to every $B_{U,k}$. These are linear constraints of the same form; hence the posterior remains uniform on a product of affine subspaces. Step~3 applies with $\Lambda_{C_j}$, which has at most $\omega$ columns; \eqref{eq:indep} holds for every honest $x_i$ because identifiers are distinct, whatever order they were chosen in. Revoking a credential does not remove its row from $K$, which is why the bound is cumulative.
\end{proof}
Enrollment must be admission-controlled, e.g., by the committee's membership rules. Since $\omega+1$ colluding verifiers forge (\cref{prop:tight}(b)), Sybil verifiers must not be admitted. With a fresh $\Gamma$ per epoch, $\omega$ bounds the verifiers corrupted \emph{per epoch}, cumulatively over the epoch and over any period in which that epoch's state is retained. This adaptive statement concerns verifiers only; the committee corruption stays static, as in \cref{thm:dki}.

\subsection{Evolving Committees and Proactive Refresh}\label{sec:reshare}
Let the shared state, i.e., $\sh{\Gamma}$, the live $\sh{A_k}$ and the retained $\sh{M_k}$, be held by a committee $\mathcal{P}$ ($n\ge3t+1$) and be handed over to a committee $\mathcal{P}'$ ($n'\ge3t'+1$), which may overlap with $\mathcal{P}$ or differ from it in size.

\textbf{Protocol $\Pi_{\mathsf{RS}}$.}
(i)~Every old member $P_j$ shares each of its shares with $\mathcal{P}'$ by a perfect VSS of degree $t'$~\cite{BGW1988,BTH2008}. The VSS guarantees that each dealing is a consistent degree-$t'$ sharing, but a corrupted dealer may share a wrong value. (ii)~$\mathcal{P}'$ opens a uniform $r$ after all dealings are complete, and forms, for each dealer $j$, the sharing $\sh{w_j}=\sum_{\ell=1}^{L}r^{\ell}\sh{s^{(\ell)}_j}$ of the combination of the values $s^{(\ell)}_j$ it dealt for the $L$ secrets. (iii)~With a parity-check matrix $H$ of the degree-$t$ Reed--Solomon code on the old points, $\mathcal{P}'$ computes $\sh{H\mathbf{w}}$ locally and opens it. (iv)~The syndrome $H\mathbf{w}$ is decoded to the support $\mathcal{D}$ of the error vector, i.e., the dealers that deviated. Silent dealers are erasures. (v)~The new share of every secret is $\sum_{j\in G}\lambda^G_j\sh{s^{(\ell)}_j}$ for a fixed set $G$ of $t+1$ dealers outside $\mathcal{D}$, with Lagrange coefficients $\lambda^G_j$ at $0$. Old members erase their shares.

\begin{lemma}[Robust redistribution]\label{lem:reshare}
Against a static adversary that corrupts at most $t$ members of $\mathcal{P}$ and at most $t'$ members of $\mathcal{P}'$, $\Pi_{\mathsf{RS}}$ outputs degree-$t'$ sharings of the \emph{same} secrets except with probability at most $tL/q$, where $L$ is the number of secrets. Privacy is perfect, also on that event: the adversary's complete view of one transition, i.e., the old shares of corrupted old members, all messages that corrupted parties receive in the dealings, the public $r$, the opened syndrome polynomials and the new shares of corrupted new members, has the same distribution for all values of the secrets, and a simulator that knows none of the secrets generates it exactly.
\end{lemma}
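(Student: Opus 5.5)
The plan splits into correctness, where the only error source is a polynomial-identity test in the random $r$, and perfect privacy, handled by an explicit simulator.

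\textbf{Correctness.} Fix the old sharings. For the $L$ secrets, honest old member $P_j$ holds shares $s^{(\ell)}_j=p_\ell(\eta_j)$ of degree-$t$ polynomials $p_\ell$ with $p_\ell(0)$ equal to the $\ell$-th secret. By the VSS guarantee, every dealer $j$, corrupted or not, has committed to some values $\tilde s^{(\ell)}_j$ that are consistently shared to degree $t'$. Write $\tilde s^{(\ell)}_j=s^{(\ell)}_j+\epsilon^{(\ell)}_j$, where $\epsilon^{(\ell)}_j=0$ for honest $j$. All dealings are complete before $r$ is opened, so the error vectors are fixed independently of $r$. For each corrupted dealer $j$, set $\varepsilon_j(r)=\sum_\ell r^\ell\epsilon^{(\ell)}_j$. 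This is a polynomial of degree at most $L$ with zero constant term, so it is non-zero or identically zero.

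Call the good event the one on which, for every dealer $j$ with a non-zero error vector, $\varepsilon_j(r)\neq0$. There are at most $t$ such dealers, and $\varepsilon_j(r)/r$ has degree at most $L-1$, so a union bound gives failure probability at most $tL/q$. On the good event, $\mathbf{w}=\sum_\ell r^\ell\mathbf{s}^{(\ell)}+\boldsymbol{\varepsilon}(r)$. The first term is a degree-$t$ Reed--Solomon codeword on the old points, so $H\mathbf{w}=H\boldsymbol{\varepsilon}(r)$. The support of $\boldsymbol{\varepsilon}(r)$ is exactly the set of deviating dealers, of size at most $t$, with silent dealers treated as erasures. Since $n\ge3t+1$, syndrome decoding recovers this support $\mathcal{D}$ uniquely.

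Every dealer in $G\subseteq\mathcal{P}\setminus\mathcal{D}$ then dealt correct values. Lagrange interpolation at $0$ from the $t+1$ points of $G$ therefore gives a degree-$t'$ sharing of $p_\ell(0)$ for every $\ell$ simultaneously, which is the claimed output.

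\textbf{Privacy.} The simulator, given no secrets, proceeds as follows:
\begin{enumerate}
\item Sample the corrupted old shares uniformly. By Shamir privacy of degree $t$ with at most $t$ corrupted old members, these are distributed identically to the real ones and independently of the secrets.
\item For each honest dealer, run the VSS simulator, giving the corrupted new members uniform degree-$t'$ shares. Perfect VSS privacy with at most $t'$ corrupted new members makes this view independent of the dealt value.
\item Sample $r$ uniformly.
\item Produce the opened syndrome sharings. This is the delicate step: the opened syndrome must be shown to be a function of the adversary's own data alone. Honest contributions to $\mathbf{w}$ form a Reed--Solomon codeword, which $H$ annihilates. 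Hence $H\mathbf{w}=H\boldsymbol{\varepsilon}(r)$ depends only on the corrupted dealers' errors. These errors are computable from what the corrupted dealers shared (extracted via the VSS) minus their true old shares, both known to the simulator, so the syndrome value is simulatable.
\item Complete the opened sharing of $H\mathbf{w}$ to a random degree-$t'$ polynomial consistent with the corrupted new members' shares and the known value. Standard opening simulation makes this exact.
\item Output the new shares of corrupted new members. These are fixed linear combinations of the shares they already hold from the dealings, so they add nothing.
\end{enumerate}
Every component is either identically distributed to the real one or a deterministic function of simulated data. The joint view is therefore identical for all secrets, including on the failure event, because no step of the simulation depends on correctness holding.

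The main obstacle I expect is the privacy argument for the opened syndrome. It must carefully justify that opening $\sh{H\mathbf{w}}$ reveals no information about the honest values. The remaining steps are routine uses of Shamir privacy, VSS privacy and Lagrange interpolation.
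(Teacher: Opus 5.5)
Your proposal is correct and follows the paper's route. For robustness you use the same per-dealer polynomial-identity test in $r$ (at most $L$ roots each, union bound over at most $t$ deviators) followed by syndrome decoding. For privacy you use the same observation that $H$ annihilates the codeword of correct old shares, so the opened syndrome values depend only on the adversary's deviations.

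Your step~5 is the paper's explicit $V^{-1}$ computation of the non-constant syndrome coefficients in compressed form. Make the reduction to exactly $t'$ corrupted new members explicit, as the paper does. Then each opened syndrome polynomial is determined by its constant term and the corrupted shares. With fewer corrupted members, an independent random completion would be wrong on the event $r=0$, where every syndrome polynomial is identically zero.
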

\begin{proof}
Write $\mathbf{w}=\mathbf{c}+\boldsymbol{\epsilon}$, where $\mathbf{c}$ is the combination of the correct shares, a codeword, and $\epsilon_j=\sum_\ell r^\ell\Delta^{(\ell)}_j$ collects the deviations $\Delta^{(\ell)}_j$ of dealer $j$. Then $H\mathbf{w}=H\boldsymbol{\epsilon}$ depends only on deviations that the adversary chose; opening it reveals nothing about the secrets. Since at most $t$ dealers deviate and the code has minimum distance $n-t\ge2t+1$, syndrome decoding recovers $\boldsymbol{\epsilon}$, and with it every dealer with $\epsilon_j\neq0$. A deviating dealer escapes only if $\epsilon_j=0$ although some $\Delta^{(\ell)}_j\neq0$; this is a non-zero polynomial of degree at most $L$ in $r$, which has probability at most $L/q$, and at most $t$ dealers deviate. All dealers in $G$ then shared their correct shares, and the Lagrange combination of $t+1$ correct shares is the secret.

\emph{Privacy.} Giving the adversary more corrupted new members only enlarges its view; we therefore assume $|T'|=t'$. An honest dealer $h$ shares $s^{(\ell)}_h$ with $P^{(\ell)}_h(z)=s^{(\ell)}_h+\sum_{\kappa=1}^{t'}a^{(\ell)}_{h,\kappa}z^{\kappa}$, whose coefficients $\mathbf{a}^{(\ell)}_h$ are fresh and uniform. Let $V=[\eta_i'^{\,\kappa}]_{i\in T',\kappa\in[t']}$, which is invertible because the new points are distinct and non-zero. The values that corrupted new members receive are $\mathbf{e}^{(\ell)}_h=s^{(\ell)}_h\mathbf{1}+V\mathbf{a}^{(\ell)}_h$, which are uniform on $\F^{t'}$ and independent of the secrets, of the old shares and of everything else, because $\mathbf{a}^{(\ell)}_h$ is fresh. By the privacy of perfect VSS, the other dealing messages that corrupted parties receive are simulatable from these values~\cite{BGW1988}. Consider the opened syndrome polynomial $S_i(z)=\sum_j H_{ij}\sum_\ell r^\ell P^{(\ell)}_j(z)$. Its constant term is $(H\boldsymbol{\epsilon})_i$, because the correct old shares form a codeword, and the adversary knows it. Its other coefficients have the honest part $\sum_h H_{ih}\sum_\ell r^\ell\mathbf{a}^{(\ell)}_h=V^{-1}\big(\sum_h H_{ih}\sum_\ell r^\ell\mathbf{e}^{(\ell)}_h-\mathbf{1}\sum_h H_{ih}\sum_\ell r^\ell s^{(\ell)}_h\big)$, and $\sum_h H_{ih}\sum_\ell r^\ell s^{(\ell)}_h=-\sum_{j\in T}H_{ij}\sum_\ell r^\ell s^{(\ell)}_j$, again because correct old shares form a codeword. The adversary knows the right-hand side, since it holds the corrupted old shares. Hence the syndrome polynomials are a fixed function of the $\mathbf{e}^{(\ell)}_h$, of $r$ and of the adversary's own data. The same holds for the set $G$, which syndrome decoding derives from the constant terms, and for the new shares $\sum_{j\in G}\lambda^G_jP^{(\ell)}_j(\eta'_i)$ of corrupted new members. The corrupted old shares are uniform and independent of the secrets, because there are at most $t$ of them. The complete view is therefore a fixed function of independent variables whose distribution does not depend on the secrets, and the simulator samples them and runs the adversary.
\end{proof}
The lemma does not claim that the adversary's old and new views are independent given the secrets. A corrupted dealer may reshare its own old share $Z$ with the constant polynomial $Z$; both views then contain $Z$. That correlation consists only of information the adversary already holds, and the joint view is still simulatable.

\textbf{Consequences.} (a) \emph{Membership changes} at epoch boundaries need no resharing of signing state: the new committee runs issuance for the next epoch with a fresh $\Gamma$, and only the retained state of (c) is reshared. Mid-epoch changes use $\Pi_{\mathsf{RS}}$. (b) \emph{Proactive refresh}~\cite{Herzberg1995}: refreshing with $\mathcal{P}'=\mathcal{P}$ at every epoch, and erasing old shares, is \emph{designed} to ensure that forging requires $t+1$ members corrupted \emph{within one epoch}, rather than $t+1$ members corrupted at any time. \Cref{lem:reshare} covers a single transition against a static adversary; it does not by itself give security against a mobile adversary, and a formal treatment of that model is future work. Two limits apply even then. Refreshing the sharings of a retained $\Gamma$ does not change the verifier functionals $\bc_i$ already issued; hence $\omega$ bounds the verifiers corrupted cumulatively for that $\Gamma$ (\cref{prop:enrol}), not those corrupted at one time. Erasure by honest members does not remove copies that a corrupted member kept. (c) \emph{Retained state}: verifying history needs no retained secrets when later certificates cover hashes of earlier blocks. This relies on collision resistance of the block hash and on an authenticated current checkpoint, and is therefore computational; it does not replace the information-theoretic verification of the old certificates. Only the dispute window and the indices that the committee chooses to keep open for retroactive enrollment are reshared.

\subsection{Cost of Threshold Notarization}\label{sec:tn:cost}
With $E=(\omega+1)(\psi+1)$ and $r_c,\mu_c$ as in \eqref{eq:dkicost}, one committee index costs
\begin{equation}\label{eq:tncost}
C^{\mathsf{TN}}_{\mathrm{idx}}=E\,r_c+E\,\mu_c+\nu(\psi+1)\,n
\end{equation}
field elements, and each certified slot adds $n(\omega+1)$ elements of shares per combiner. Per member and slot, this is
\[
(\omega+1)\Big[\big(1+\tfrac{1}{\psi}\big)\tfrac{r_c+\mu_c}{n}+1\Big]+\nu\big(1+\tfrac1\psi\big)=O(\omega+\nu),
\]
because $r_c/n$ and $\mu_c/n$ are constants for $d=\Theta(n)$. It is bounded independently of $n$. The only per-member term that grows with $n$ is the authentication of consensus messages themselves, $O(n)$ MAC tags per slot, which any all-to-all BFT protocol pays. Our simulator, which runs $\Pi_{\mathsf{TN}}$ with garbage, silent and equivocating members, matches \eqref{eq:tncost}, instantiated with the semi-honest extraction rate $d=n-t$, to within $0.8\%$ (\cref{sec:eval:scale}).

\section{Implementation and Evaluation}\label{sec:eval}

\subsection{Setup}
We implemented the \scheme{} core in C: field arithmetic over $\mathbb{F}_{2^{255}-19}$ in radix $2^{51}$, signing with a log-depth power ladder for $\bfe(m)$ and lazily reduced inner products, and verification. Long messages are mapped to $\F$ with OpenSSL SHA3-256. We also wrote three Python components:
a reference implementation for arbitrary primes $q$; a simulator of $\Pki$ that counts every field element sent and decodes outputs with Berlekamp--Welch against $t$ parties sending corrupted shares; a Monte-Carlo attack suite.
All timings in this section were measured on the \emph{same host}, an Intel Core i9-13900HX under WSL2 (Ubuntu, GCC~15.2 \texttt{-O3 -march=native}, Rust~1.99), single-threaded. Baselines use the fastest available implementations: Ed25519 from \texttt{ed25519-dalek}, and ML-DSA-44/65, Falcon-512 and SLH-DSA from the AVX2-optimized PQClean code through the Rust \texttt{pqcrypto} crates. For SLH-DSA we measure SPHINCS$^+$-SHA2-128f/s-simple, which uses the same parameter sets. All code, simulators and raw data are publicly available at \url{https://github.com/hassankhodaiemehr/silmarils-mv}~\cite{SILMVcode} (\cref{app:artifact}).

\subsection{Microbenchmarks}
\begin{table}[!t]
\centering
\caption{\scheme{} core over $\mathbb{F}_{2^{255}-19}$ ($\mu$s, excluding hashing; SHA3-256 of a $128$\,B message adds $0.52\,\mu$s)}
\label{tab:micro}
\footnotesize
\setlength{\tabcolsep}{4pt}
\begin{tabular}{@{}rr|rr|rr|rr@{}}
\toprule
 & & \multicolumn{2}{c|}{$\psi=1$} & \multicolumn{2}{c|}{$\psi=16$} & \multicolumn{2}{c}{$\psi=64$}\\
$\omega$ & $|\sigma|$ (B) & sign & verify & sign & verify & sign & verify\\
\midrule
1 & 64 & 0.06 & 0.08 & 0.59 & 0.45 & 2.41 & 2.13\\
3 & 128 & 0.12 & 0.10 & 1.02 & 0.47 & 4.00 & 1.70\\
10 & 352 & 0.29 & 0.18 & 2.48 & 0.55 & 8.91 & 1.57\\
20 & 672 & 0.60 & 0.32 & 4.57 & 0.63 & 17.43 & 1.81\\
33 & 1088 & 0.87 & 0.47 & 7.81 & 0.83 & 27.43 & 1.82\\
66 & 2144 & 1.74 & 0.77 & 13.67 & 1.13 & 52.47 & 2.30\\
\bottomrule
\end{tabular}
\end{table}

\Cref{tab:micro} reports signing and verification times. Verification costs $\omega+\psi+2$ multiplications (about $12.5$\,ns each on this host). For $\psi\le 16$ it stays near or below one microsecond up to $\omega=33$, i.e., committees of $100$ validators. Signing costs $(\omega+1)(\psi+1)$ multiplications. With $\psi=1$ it is a few hundred nanoseconds; larger $\psi$ trades signing time for fewer indices and smaller verifier storage per signature.

\begin{table}[!t]
\centering
\caption{Same-host comparison. For \scheme{}: $\omega=10$ ($n=31$), $\psi=16$, times including SHA3-256. ``Verifier key'' is per signer-index ($\psi$ signatures).}
\label{tab:baselines}
\footnotesize
\setlength{\tabcolsep}{5pt}
\resizebox{\columnwidth}{!}{%
\begin{tabular}{@{}lrrrrl@{}}
\toprule
Scheme$^{\ast}$ & $|\sigma|$ (B) & Ver.\ key (B) & Sign ($\mu$s) & Verify ($\mu$s) & Basis\\
\midrule
\scheme{} & 352 & 544$^{\dagger}$ & 3.0 & 1.1 & IT\\
Ed25519 & 64 & 32 & 13.8 & 31.3 & ECDLP\\
ML-DSA-44 & 2420 & 1312 & 64.9 & 23.2 & MLWE/MSIS\\
ML-DSA-65 & 3309 & 1952 & 104 & 37.4 & MLWE/MSIS\\
Falcon-512 & 666 & 897 & 160 & 29.2 & NTRU-SIS\\
SLH-DSA-128f & 17088 & 32 & 6609 & 568 & hash\\
SLH-DSA-128s & 7856 & 32 & 146158 & 224 & hash\\
\bottomrule
\multicolumn{6}{@{}l@{}}{\scriptsize $^{\dagger}$Secret, per signer-index; $34$\,B per signature, plus $352$\,B $\bc_i$ per epoch.}
\end{tabular}}
\par\vspace{2pt}\parbox{\columnwidth}{\scriptsize $^{\ast}$The schemes compared in this table operate under different verification models. \scheme{} requires secret verifier credentials and authorization-controlled verification, whereas standardized signature schemes provide public verification.}
\end{table}

\Cref{tab:baselines} compares \scheme{} with standardized schemes on the same host. Against AVX2-optimized implementations, verification is $22\times$ faster than ML-DSA-44 and $27\times$ faster than Falcon-512, and signing is $22\times$ and $53\times$ faster, respectively. The price is secret, per-signature verifier key material: $34$\,B per future signature, issued offline. The compared schemes address different deployment models. Standardized digital signatures provide unrestricted public verification, whereas \scheme{} assumes admission-controlled verification using verifier credentials. Accordingly, the comparison should be interpreted as a resource comparison rather than a claim of functional equivalence.

\subsection{Ledger-Level Evaluation}
We model the leader-based protocol of \cref{sec:ledger} with $n=3t+1$, $\omega=t$ and $\psi=16$. Per block, $n$ votes go to the leader, which broadcasts a QC of $2t+1$ votes plus its proposal signature to $n-1$ replicas. Each replica verifies the QC and the proposal. We count authentication bytes only and use the measured times of \cref{tab:micro,tab:baselines}. For \scheme{} we report both individual-vote QCs and aggregated QCs (\cref{cor:agg}). Offline issuance cost follows \eqref{eq:dkicost}, with $n+1$ signatures per block (happy path: one vote per validator and one proposal; protocols with several voting phases scale this term proportionally).

\begin{figure*}[!t]
\centering
\begin{tikzpicture}
\begin{groupplot}[group style={group size=4 by 1, horizontal sep=1.25cm}, width=0.262\textwidth, height=4.6cm,
  xmin=3, xmax=102, xlabel={committee size $n$}, ymode=log, title style={font=\footnotesize, yshift=-1ex},
  every axis plot/.append style={mark size=1.4pt}, legend style={font=\scriptsize}]
\nextgroupplot[title={(a) QC size (KiB)}, ymin=0.03, ymax=1000]
\addplot[cSLH, mark=triangle*] table[x=n,y=slhqc]{data_ledger.dat};
\addplot[cD44, mark=square*] table[x=n,y=d44qc]{data_ledger.dat};
\addplot[cFAL, mark=diamond*] table[x=n,y=falqc]{data_ledger.dat};
\addplot[cED, mark=x, dashed] table[x=n,y=edqc]{data_ledger.dat};
\addplot[cSIL, mark=*] table[x=n,y=silqc]{data_ledger.dat};
\addplot[cAGG, mark=pentagon*, very thick] table[x=n,y=silaggqc]{data_ledger.dat};
\nextgroupplot[title={(b) leader egress / block (KiB)}, ymin=0.2, ymax=100000]
\addplot[cSLH, mark=triangle*] table[x=n,y=slhlead]{data_ledger.dat};
\addplot[cD44, mark=square*] table[x=n,y=d44lead]{data_ledger.dat};
\addplot[cFAL, mark=diamond*] table[x=n,y=fallead]{data_ledger.dat};
\addplot[cED, mark=x, dashed] table[x=n,y=edlead]{data_ledger.dat};
\addplot[cSIL, mark=*] table[x=n,y=sillead]{data_ledger.dat};
\addplot[cAGG, mark=pentagon*, very thick] table[x=n,y=silagglead]{data_ledger.dat};
\nextgroupplot[title={(c) QC + proposal verification ($\mu$s)}, ymin=1, ymax=100000]
\addplot[cSLH, mark=triangle*] table[x=n,y=slhcpu]{data_ledger.dat};
\addplot[cD44, mark=square*] table[x=n,y=d44cpu]{data_ledger.dat};
\addplot[cFAL, mark=diamond*] table[x=n,y=falcpu]{data_ledger.dat};
\addplot[cED, mark=x, dashed] table[x=n,y=edcpu]{data_ledger.dat};
\addplot[cSIL, mark=*] table[x=n,y=silcpu]{data_ledger.dat};
\addplot[cAGG, mark=pentagon*, very thick] table[x=n,y=silaggcpu]{data_ledger.dat};
\nextgroupplot[title={(d) total per validator / block (KiB)}, ymin=0.1, ymax=3000,
  legend to name=ledgerlegend, legend columns=3, legend style={/tikz/every even column/.append style={column sep=6pt}}, legend cell align=left]
\addlegendimage{cSLH, mark=triangle*}\addlegendentry{SLH-DSA-128s}
\addplot[cD44, mark=square*] table[x=n,y=d44]{data_party.dat}; \addlegendentry{ML-DSA-44}
\addplot[cFAL, mark=diamond*] table[x=n,y=fal]{data_party.dat}; \addlegendentry{Falcon-512}
\addplot[cED, mark=x, dashed] table[x=n,y=ed]{data_party.dat}; \addlegendentry{Ed25519 (not PQ)}
\addlegendimage{cSIL, mark=*}\addlegendentry{\scheme{} indiv. (a--c)}
\addlegendimage{cAGG, mark=pentagon*, very thick}\addlegendentry{\scheme{} agg. (a--c)}
\addplot[cSIL, mark=*] table[x=n,y=silaggdki]{data_party.dat}; \addlegendentry{agg.\ + $\Pki$, malicious (d)}
\addplot[cSIL, mark=o, dotted] table[x=n,y=silaggdkish]{data_party.dat}; \addlegendentry{agg.\ + $\Pki$, semi-honest (d)}
\addplot[cAGG, mark=pentagon*, very thick] table[x=n,y=silaggtd]{data_party.dat}; \addlegendentry{agg.\ + dealer (d)}
\end{groupplot}
\end{tikzpicture}

\vspace{1mm}\pgfplotslegendfromname{ledgerlegend}
\caption{Ledger-level model of a leader-based BFT committee ($n=3t+1$, $\omega=t$, $\psi=16$): (a) quorum-certificate size; (b) online authentication bytes sent by the leader per block; (c) per-replica time to verify one QC and one proposal signature; (d) average total bytes per validator per block, \emph{including} offline key issuance by the committee ($\Pki$, both cost models of \eqref{eq:dkicost}) or by a dealer/HSM. Aggregated \scheme{} QCs contain $\omega+1$ field elements, an $n$-bit bitmap and a $4$-byte index per signer; individual votes also carry their $4$-byte index.}
\label{fig:ledger}
\end{figure*}

\textbf{QC size.} \Cref{fig:ledger}(a) shows QC size against committee size. Individual-vote QCs grow as $(2t+1)(t+1)$ field elements; they beat ML-DSA-44 by $7\times$ at $n=31$ and match Falcon-512 at $n\approx 60$. Aggregated QCs grow only linearly. At $n=31$ an aggregated QC is $440$\,B ($352$\,B for the vector, $4$\,B of bitmap and $84$\,B for the $21$ signer indices), against $49.6$\,KiB for ML-DSA-44, $13.7$\,KiB for Falcon-512 and $1.3$\,KiB for Ed25519 without aggregation.

\textbf{Leader bottleneck.} \Cref{fig:ledger}(b) plots online authentication bytes sent by the leader per block. Note that this compares \emph{aggregated} \scheme{} QCs with the \emph{non-aggregated} QCs that standardized post-quantum schemes support natively; SNARG-based aggregation of Falcon~\cite{Zhaolu2025} shrinks QCs at a large proving cost and is not modelled. With aggregation, the leader sends $23.3$\,KiB per block at $n=31$, against $1.52$\,MiB for ML-DSA-44 ($67\times$ less) and $429$\,KiB for Falcon-512 ($18\times$ less). It even undercuts non-aggregated Ed25519 at every committee size we consider.

\textbf{Verification time.} \Cref{fig:ledger}(c) shows the per-replica cost of verifying one QC and one proposal signature. At $n=31$ it is $6.2\,\mu$s for aggregated \scheme{}, against $0.51$\,ms for ML-DSA-44, $0.64$\,ms for Falcon-512 and $0.69$\,ms for Ed25519, an improvement of $82\times$ over ML-DSA-44.

\textbf{Total cost including issuance.} \Cref{fig:ledger}(d) adds the offline key material to the online traffic. With a dealer or HSM (e.g., the membership service of Fabric~\cite{Fabric2018}), the total is $2.5$\,KiB per validator per block at $n=31$, $21\times$ less than ML-DSA-44. With dealer-free issuance by $\Pki$, costed with the malicious estimate $d=n-2t$ of \cref{sec:dki}, the average total per validator is \emph{lower} than ML-DSA-44 for $n\le10$ and higher for larger committees: $164$\,KiB versus $53$\,KiB at $n=31$ ($118$\,KiB under the semi-honest cost model that our simulator validates). The malicious figure omits the fault-detection and broadcast-emulation terms listed in \cref{sec:dki}. This is the price of removing every trusted party and every computational assumption. The extra traffic is offline, evenly spread over validators, and does not touch the leader bottleneck. At one block per second and $n=31$ it amounts to $1.3$\,Mbit/s per validator, and to $573$\,MiB of issuance traffic and $3.7$\,MiB of verifier key storage per validator per one-hour epoch. Any USS consumes key material per signature~\cite{Amiri2018,SwansonStinson2016}, and QDS consume it through quantum channels~\cite{Yin2023NSR,Li2023OTUH}. For \scheme{} the cost of producing that key material without any trusted party or quantum link is given in closed form by \eqref{eq:dkicost}.

\subsection{Scalability to 1000 Validators}\label{sec:eval:scale}
We now evaluate the modes of \cref{sec:scale} for $n=3t+1$ from $4$ to $1000$. TN uses $\omega=16$, $\psi=64$ and $\nu=64$ external verifiers, which receive every certificate, and issuance is costed with the malicious model $d=n-2t$. Consensus in TN mode runs two all-to-all voting rounds authenticated by $32$-byte information-theoretic MAC tags, with one tag per delivered message: each receiver gets only its own entry of the PBFT authenticator. The model is an analytic \emph{normal-case} estimate under declared exclusions: it omits pre-prepare, client, checkpoint and view-change traffic, and its issuance term uses the malicious estimate of \cref{sec:dki}, which omits fault-detection and broadcast-emulation terms. It is not the measured cost of a fully implemented malicious-secure ledger. The baselines are post-quantum signatures in a HotStuff-style star topology (one vote to the leader, a QC of $2t+1$ signatures to everyone) and in a PBFT/Tendermint-style all-to-all topology (two voting rounds), plus Ed25519 and BLS~\cite{Boneh2018BLS} (analytic, $96$-byte aggregate plus bitmap) as non-post-quantum references. As before, we count authentication bytes only.

\begin{figure}[!t]
\centering
\begin{tikzpicture}
\begin{loglogaxis}[width=\columnwidth, height=6.0cm, xlabel={committee size $n$}, ylabel={avg.\ bytes per validator per block (KiB)},
  xmin=3.5, xmax=1150, ymin=0.15, ymax=300000, legend columns=2, legend style={at={(0.5,-0.25)},anchor=north}, legend cell align=left,
  xtick={4,10,31,100,301,1000}, xticklabels={4,10,31,100,301,1000}]
\addplot[cSIL, mark=*] table[x=n,y=ivp]{data_scale.dat}; \addlegendentry{\scheme{} IV, dealer-free}
\addplot[cD44, mark=square*, dashed] table[x=n,y=d44ap]{data_scale.dat}; \addlegendentry{ML-DSA-44, all-to-all}
\addplot[cD44, mark=square*] table[x=n,y=d44sp]{data_scale.dat}; \addlegendentry{ML-DSA-44, star}
\addplot[cFAL, mark=diamond*] table[x=n,y=falsp]{data_scale.dat}; \addlegendentry{Falcon-512, star}
\addplot[cD65, mark=triangle*] table[x=n,y=tnap]{data_scale.dat}; \addlegendentry{\scheme{} TN + accountability}
\addplot[cAGG, mark=pentagon*, very thick] table[x=n,y=tnp]{data_scale.dat}; \addlegendentry{\scheme{} TN}
\addplot[gray, no marks, dotted, thick] table[x=n,y=macp]{data_scale.dat}; \addlegendentry{MAC-only BFT (no certificate)}
\addplot[cED, mark=x, dashed] table[x=n,y=edsp]{data_scale.dat}; \addlegendentry{Ed25519, star (not PQ)}
\addplot[cRED, mark=+, dashed] table[x=n,y=blsp]{data_scale.dat}; \addlegendentry{BLS aggregate (not PQ)}
\end{loglogaxis}
\end{tikzpicture}
\caption{Average authentication bytes per validator per block, including all offline key issuance, up to $n=1000$. One-time MAC key material for TN consensus is not counted as traffic (see text). TN costs an amount bounded independently of $n$ on top of MAC-authenticated consensus.}
\label{fig:scale}
\end{figure}

\begin{table}[!t]
\centering
\caption{Scalability ($n=3t+1$), analytic normal-case model. Per-validator average and maximum (leader or combiner) bytes per block, certificate size, and time for an external party to verify one certificate. TN: $\omega=16$, $\psi=64$, $\nu=64$.}
\label{tab:scale}
\scriptsize
\setlength{\tabcolsep}{2.2pt}
\begin{tabular}{@{}llrrrr@{}}
\toprule
 & & $n=31$ & $n=100$ & $n=301$ & $n=1000$\\
\midrule
\multirow{6}{*}{\shortstack[l]{avg.\ per\\ validator\\ (KiB)}}
 & ML-DSA-44 star & 52.7 & 161 & 478 & 1\,579\\
 & ML-DSA-44 all-to-all & 144 & 470 & 1\,420 & 4\,724\\
 & Falcon-512 star & 14.5 & 44.4 & 132 & 435\\
 & \scheme{} IV & 164 & 1\,675 & 15\,085 & 166\,160\\
 & \scheme{} TN & \textbf{11.0} & \textbf{14.9} & \textbf{27.4} & \textbf{71.0}\\
 & \scheme{} TN+acc. & 25.2 & 61.8 & 170 & 545\\
\midrule
\multirow{3}{*}{\shortstack[l]{max.\ party\\ (KiB)}}
 & ML-DSA-44 star & 1\,633 & 16\,146 & 143\,926 & 1\,579\,456\\
 & Falcon-512 star & 449 & 4\,443 & 39\,609 & 434\,677\\
 & \scheme{} TN & \textbf{59.9} & \textbf{101} & \textbf{221} & \textbf{636}\\
\midrule
\multirow{3}{*}{\shortstack[l]{certificate\\ (KiB)}}
 & ML-DSA-44 & 49.6 & 158 & 475 & 1\,576\\
 & Falcon-512 & 13.7 & 43.6 & 131 & 434\\
 & \scheme{} TN & \textbf{0.53} & \textbf{0.53} & \textbf{0.53} & \textbf{0.53}\\
\midrule
\multirow{3}{*}{\shortstack[l]{cert.\\ verify}}
 & ML-DSA-44 & 0.49\,ms & 1.55\,ms & 4.66\,ms & 15.5\,ms\\
 & Falcon-512 & 0.61\,ms & 1.96\,ms & 5.87\,ms & 19.5\,ms\\
 & \scheme{} TN & \textbf{2.4}\,$\mu$s & \textbf{2.4}\,$\mu$s & \textbf{2.5}\,$\mu$s & \textbf{2.4}\,$\mu$s\\
\bottomrule
\end{tabular}
\end{table}

\textbf{Bandwidth.} \Cref{fig:scale} and \cref{tab:scale} show the effect of the three mechanisms. IV mode grows as $n^2$ and is the right choice only for small consortia. TN grows only through the MAC tags of consensus, which any all-to-all protocol pays: at $n=1000$, TN costs $71$\,KiB per validator per block, of which $62.5$\,KiB are MAC tags and about $8$\,KiB are issuance. This is $22\times$ less than ML-DSA-44 in a star topology, $67\times$ less than ML-DSA-44 all-to-all, and $6\times$ less than Falcon-512. With information-theoretic MACs, consensus also consumes a pre-shared $32$-byte pad per tag and endpoint, $125$\,KiB per validator per block at $n=1000$. If this pad material is counted as traffic, TN totals $196$\,KiB, still $8\times$ below ML-DSA-44; with computational post-quantum MACs, it is not consumed at all (\cref{sec:discussion}). The busiest party is where the difference is largest. An ML-DSA-44 leader must send a $1.5$\,MiB QC to every replica, i.e., $1.5$\,GiB per block at $n=1000$, whereas the busiest TN party (the combiner) handles $636$\,KiB. For $n=4$, TN is more expensive than ML-DSA-44 in a star topology, because issuing keys to $64$ external verifiers dominates; IV mode is cheaper there. Adding accountability (TN+acc.) costs $545$\,KiB per validator per block at $n=1000$, still $2.9\times$ below ML-DSA-44 in a star topology, with attributable votes and $32$-byte vote signatures.

\textbf{Certificates.} A TN certificate is $544$\,B at every committee size, and an external verifier checks it in $2.4\,\mu$s including hashing. An ML-DSA-44 QC at $n=1000$ is $1.5$\,MiB and takes $15.5$\,ms to verify. Only BLS aggregates are smaller ($96$\,B plus a bitmap), and they are not post-quantum.

\textbf{Computation.} Measured in C over $\mathbb{F}_{2^{255}-19}$ ($\omega=16$, $\psi=64$), producing a share costs about $15\,\mu$s per member and slot. Combining is dominated by computing $t+1$ Lagrange coefficients: $12\,\mu$s at $n=31$, $0.20$\,ms at $n=301$ and $2.1$\,ms at $n=1000$, including the check with the combiner's own key. The fast decoder of \cref{fig:tn}, Gao decoding of one random combination followed by interpolation from non-suspects, takes $0.17$\,ms, $8.7$\,ms and $99$\,ms at the same sizes. It runs only when a share is wrong, and suspects are excluded from the optimistic path afterwards. We estimate that a TN validator spends about $1.5$\,ms per block at $n=1000$: $0.4$\,ms on MAC tags, assuming $0.2\,\mu$s per tag, and the rest on local issuance arithmetic. ML-DSA-44 verification alone takes $15.5$\,ms (star) or $46$\,ms (all-to-all).

\textbf{Clients.} An arbiter-verified client signature is $32$\,B. Its issuance costs each validator $0.38$\,KiB per transaction, independent of $n$, whereas a ML-DSA-44 signature delivers $2.4$\,KiB to every validator and costs each of them $23\,\mu$s to verify. The client receives about $32n$ bytes of key material per transaction: $3.2$\,KiB at $n=100$ and $31.7$\,KiB at $n=1000$.

\textbf{Simulation of the protocols.} The simulator (\texttt{tn\_sim.py} in~\cite{SILMVcode}) executes $\Pi_{\mathsf{TN}}$ end to end for $n\in\{4,7,10,13,16\}$, with $8$ committee indices, $\psi=4$ and $4$ external verifiers. The $t$ corrupted members are assigned, in turn, three behaviors: random shares, silence, and shares on a conflicting value (all three occur for $n\ge10$). All $32$ certificates per run were valid and accepted by every verifier. Combining used the optimistic path, with at most one coordinate-wise Berlekamp--Welch decoding per run, after which the combiner excluded the suspects it identified. The measured issuance cost matches \eqref{eq:tncost} to within $0.8\%$. A verifier enrolled after all slots had passed verified every historical certificate (\cref{prop:enrol}). Redistribution to a committee of $n+3$ members (\cref{lem:reshare}), with corrupted old dealers sharing wrong values and silent dealers, identified exactly the deviating dealers in every run. The new committee then issued and certified on a new index, and all old certificates still verified. For arbiter-verified keys, with the weights $r^{j-1}$ of \cref{lem:batch}, batches of $24$ valid signatures were always accepted, and batches with one modified signature were always rejected. Separately, an exhaustive enumeration over $\mathbb{F}_3$, $\mathbb{F}_5$ and $\mathbb{F}_7$ (\texttt{batch\_check.py}) reproduces the counterexample to the weights $r^j$ given after \cref{lem:batch} (acceptance $19/27$, $61/125$ and $127/343$ for $L=1$) and confirms the bound $(L+1)/q$ of the corrected lemma for batches of $L\in\{1,2\}$ unsigned honest items and every choice of the submitted values. Finally, in the TN forgery experiment ($n=7$, $\omega=2$, $\psi=2$), $t$ corrupted members and $\omega$ corrupted verifiers see $\psi-1$ released certificates on an index and try to certify a value that the committee did not decide. The adversary samples the signing matrix from its exact posterior given the corrupted keys and the certificates; by \cref{lem:sharesim}, the member shares add nothing. Over $2\times10^4$ trials it succeeded with probability $0.0079$ at $q=257$ (bound $0.0078$, standard error $0.0006$, i.e., within one standard error of the bound) and $0.0189$ at $q=101$ (bound $0.0199$). These runs support \cref{thm:tn}(i) but do not prove it.

\subsection{HotStuff Prototype}\label{sec:eval:proto}
To complement the analytic model, we implemented chained HotStuff~\cite{HotStuff2019} in Rust (tokio). It has a three-chain commit rule, round-robin leaders and a timeout-based view change, with pluggable vote authentication and real cryptography throughout:
Ed25519 (\texttt{ed25519-dalek}); ML-DSA-44 (PQClean, AVX2); \scheme{} in IV mode, with aggregated QCs; MACs over $\mathbb{F}_{2^{255}-19}$ with fixed per-pair keys (a timing model of the analyzed polynomial MAC); MACs plus TN certificates ($\omega=16$, $\psi=64$), sent to $4$ external auditors.
Star-topology variants send votes to the next leader. All-to-all variants (``a2a'') broadcast votes, and every replica forms the QC locally. In the MAC variants a QC is not transferable: a replica accepts a QC, whether it arrives in a proposal or in a new-view message, only after it has itself received $2t+1$ valid MAC-authenticated votes for that block and view, and it holds the proposal until then. Since every honest replica votes once per view and sends the same vote to everyone, two locally formed QCs for conflicting blocks in one view would need an honest replica that voted twice; hence the quorum-intersection step of the HotStuff safety argument~\cite{HotStuff2019} carries over to locally formed QCs, provided the MACs are unforgeable, i.e., use pads as in \cref{sec:tn}, which the timing prototype does not. Liveness under Byzantine voters that withhold votes from some replicas is not addressed by this rule; a deployment would use a MAC-based protocol with a proved view change, such as PBFT~\cite{PBFT2002} (\cref{sec:tn}). For timing, the prototype derives the MAC keys from a fixed per-pair key instead of consuming pads; the pad material is accounted analytically in \cref{sec:eval:scale}.
\scheme{} votes and QCs carry their $4$-byte indices, which the wire sizes include. All replicas run on one host and communicate over an emulated network. The network applies a per-link one-way delay and a per-node egress bandwidth cap, charges every message its exact wire size, and delivers through one delay queue per node. Blocks carry $400$ transactions of $128$\,B. Keys are issued before the measured window; issuance traffic is either omitted or injected as background traffic following \eqref{eq:dkicost} and \eqref{eq:tncost}. Unless stated otherwise, the one-way delay is $10$\,ms and the egress cap is $1$\,Gbit/s.

All experiments ran on the host of \cref{sec:eval} ($24$ cores, $32$ threads). Each configuration was run three times for $15$\,s after a $4$\,s warm-up, and we report medians. In all $354$ runs, every honest replica committed the same chain. These runs involve honest and crashed replicas only; they are a performance evaluation and do not test Byzantine safety under equivocation, selective delivery or fabricated QCs.

\begin{figure}[!t]
\centering
\begin{tikzpicture}
\begin{groupplot}[group style={group size=1 by 2, vertical sep=1.3cm}, width=\columnwidth, height=4.4cm,
  xmode=log, xtick={4,10,31,100}, xticklabels={4,10,31,100}, xmin=3.6, xmax=112, legend cell align=left]
\nextgroupplot[ylabel={throughput (ktx/s)}, ymin=0, ymax=19, y filter/.expression={y/1000}]
\addplot[cSIL, mark=*] table[x=x,y=siliv]{data_hs_E1_tput_tx_s.dat};
\addplot[cAGG, mark=pentagon*, very thick] table[x=x,y=mactn]{data_hs_E1_tput_tx_s.dat};
\addplot[cD44, mark=square*] table[x=x,y=mldsa44]{data_hs_E1_tput_tx_s.dat};
\addplot[cD44, mark=square*, dashed] table[x=x,y=mldsa44a2a]{data_hs_E1_tput_tx_s.dat};
\addplot[cED, mark=x, dashed] table[x=x,y=ed25519]{data_hs_E1_tput_tx_s.dat};
\nextgroupplot[xlabel={committee size $n$}, ylabel={crypto CPU / replica / block (ms)}, ymode=log, ymin=0.002, ymax=20,
  legend columns=2, legend style={at={(0.5,-0.42)},anchor=north}]
\addplot[cSIL, mark=*] table[x=x,y=siliv]{data_hs_E1_crypto_ms_per_replica_per_block.dat}; \addlegendentry{\scheme{} IV (star)}
\addplot[cAGG, mark=pentagon*, very thick] table[x=x,y=mactn]{data_hs_E1_crypto_ms_per_replica_per_block.dat}; \addlegendentry{\scheme{} TN (a2a)}
\addplot[cD44, mark=square*] table[x=x,y=mldsa44]{data_hs_E1_crypto_ms_per_replica_per_block.dat}; \addlegendentry{ML-DSA-44 (star)}
\addplot[cD44, mark=square*, dashed] table[x=x,y=mldsa44a2a]{data_hs_E1_crypto_ms_per_replica_per_block.dat}; \addlegendentry{ML-DSA-44 (a2a)}
\addplot[cED, mark=x, dashed] table[x=x,y=ed25519]{data_hs_E1_crypto_ms_per_replica_per_block.dat}; \addlegendentry{Ed25519 (not PQ)}
\end{groupplot}
\end{tikzpicture}
\caption{HotStuff prototype, $10$\,ms one-way delay and $1$\,Gbit/s per node: throughput and cryptographic CPU time per replica per block versus committee size (medians of $3$ runs, i9-13900HX).}
\label{fig:proto}
\end{figure}

\begin{table}[!t]
\centering
\caption{HotStuff prototype (medians of $3$ runs, i9-13900HX). Throughput (ktx/s), commit latency (ms), KiB sent per replica per block (including the $50$\,KiB payload), and cryptographic CPU time per replica per block ($\mu$s, $n=31$).}
\label{tab:proto}
\scriptsize
\setlength{\tabcolsep}{3pt}
\resizebox{\columnwidth}{!}{%
\begin{tabular}{@{}l|rr|rr|rrr|r@{}}
\toprule
 & \multicolumn{2}{c|}{$n{=}31$, 1\,ms} & \multicolumn{2}{c|}{$n{=}31$, 100\,Mb/s} & \multicolumn{3}{c|}{$n{=}100$, 10\,ms, 1\,Gb/s} & \\
Scheme & tput & lat & tput & lat & tput & lat & KiB & crypto\\
\midrule
ML-DSA-44 & 18.4 & 68.7 & 2.08 & 602 & 2.88 & 429 & 209 & 576\\
ML-DSA-44 a2a & 30.8 & 38.7 & 3.57 & 331 & 7.55 & 158 & 288 & 737\\
Ed25519 (not PQ) & 28.6 & 45.3 & 3.60 & 349 & 6.77 & 193 & 54.0 & 772\\
\scheme{} IV & \textbf{32.1} & 39.7 & \textbf{3.68} & 342 & \textbf{7.97} & 162 & \textbf{52.0} & \textbf{11}\\
MAC a2a & 32.5 & 37.0 & 3.71 & 324 & 8.08 & 148 & 56.8 & 10\\
\scheme{} TN & \textbf{32.6} & 36.9 & \textbf{3.68} & 325 & \textbf{8.08} & 148 & 57.4 & 26\\
\bottomrule
\end{tabular}}
\end{table}

\Cref{fig:proto} and \cref{tab:proto} summarize the results.
\emph{Scaling.} At $n=100$, \scheme{} IV commits $7.97$\,ktx/s, against $2.88$\,ktx/s for ML-DSA-44 in the same star topology ($2.77\times$). Its latency is $162$\,ms instead of $429$\,ms, and each replica sends $52$\,KiB per block instead of $209$\,KiB. ML-DSA-44 recovers most of the throughput with all-to-all voting ($7.55$\,ktx/s), but it then sends $288$\,KiB per replica per block, $5.5\times$ more than \scheme{} IV. \emph{Cryptographic CPU.} Per replica and block, \scheme{} IV spends $24\,\mu$s at $n=100$ and TN spends $57\,\mu$s, including share computation and combining. ML-DSA-44 spends $1.61$\,ms in the star topology, $67\times$ more than \scheme{} IV. The values for Ed25519 and for ML-DSA-44 a2a at $n\ge64$ are inflated by contention among colocated replicas. \emph{Bandwidth-bound regime.} At $n=31$, \scheme{} IV sustains $1.85\times$, $1.77\times$ and $1.62\times$ the throughput of ML-DSA-44 at $50$, $100$ and $250$\,Mbit/s, because an ML-DSA-44 QC adds $50$\,KiB to every copy of the proposal. \emph{CPU-bound regime} ($1$\,ms delay, $n=31$). \scheme{} IV reaches $1.75\times$ the throughput of ML-DSA-44. \emph{Threshold notarization} costs less than $1\%$ throughput relative to MAC-only consensus at every committee size. External auditors hold verified certificates $22.4$--$23.0$\,ms after commit, independent of $n$ from $4$ to $100$, i.e., two message delays plus processing. \emph{Issuance traffic} (background, following \eqref{eq:dkicost} and \eqref{eq:tncost}). In TN mode it adds $6$\,KiB per replica per block at $n=100$ and leaves throughput unchanged. In IV mode it reaches $1.7$\,MiB per replica per block at $n=100$ and lowers throughput by $35\%$, which confirms that IV mode is meant for small committees. \emph{Crashes.} With $t$ crashed replicas, recovery is dominated by the timeout policy of our simple pacemaker, and all authentication schemes behave alike (within $10\%$). \emph{Sensitivity to colocation.} Varying the runtime from $8$ to $32$ threads at $n=100$ changes the throughput of \scheme{} IV, TN and star ML-DSA-44 by less than $0.4\%$, and that of all-to-all ML-DSA-44 by $7\%$.
The emulation shares one machine among all replicas and does not model packet loss or jitter. The per-replica cryptographic CPU time is the metric least affected by colocation.

\subsection{Empirical Validation of the Bounds}\label{sec:eval:security}
We implemented the adversaries of \cref{prop:tight} over small primes $q\in\{11,31,101,257\}$ with $(\omega,\psi)\in\{(1,1),(2,2),(3,4)\}$, using $4\times10^4$ trials per point:
the \emph{posterior forger} corrupts $\omega$ verifiers, obtains $\psi$ signatures on the target index, samples a signing matrix from the exact posterior given its view, and signs a fresh message; the \emph{repudiator}, a corrupted signer colluding with $\omega$ verifiers, submits $\balpha=A\bfe(m)+\bD$ with $\bD\ne0$ orthogonal to all corrupted functionals; the tightness attacks (b) and (c) use $\omega+1$ colluders and $\psi+1$ signatures, respectively.

\begin{figure}[!t]
\centering
\begin{tikzpicture}
\begin{loglogaxis}[width=\columnwidth, height=5.0cm, xlabel={field size $q$}, ylabel={success probability},
  xmin=9, xmax=300, ymin=0.002, ymax=0.4, legend columns=2, legend style={at={(0.5,-0.32)},anchor=north}, legend cell align=left, xtick={11,31,101,257}, xticklabels={11,31,101,257}]
\addplot[cRED, dashed, no marks, domain=10:280, samples=40] {1/x + 1/(x-1)}; \addlegendentry{bound $1/q+1/(q{-}1)$}
\addplot[cSIL, only marks, mark=*] table[x=q,y=F1va]{data_attacks.dat}; \addlegendentry{posterior forger}
\addplot[cD65, dotted, no marks, domain=10:280, samples=40] {1/x}; \addlegendentry{bound $1/q$}
\addplot[cAGG, only marks, mark=square*] table[x=q,y=R1va]{data_attacks.dat}; \addlegendentry{repudiator (one honest $V_i$)}
\end{loglogaxis}
\end{tikzpicture}
\caption{Monte-Carlo success of the best attacks against the bounds of \cref{thm:main} ($Q=1$). Each $q$ has three points, for $(\omega,\psi)\in\{(1,1),(2,2),(3,4)\}$.}
\label{fig:attacks}
\end{figure}

\Cref{fig:attacks} plots the results against the bounds of \cref{thm:main} for $Q=1$. Forgery success is close to the bound $1/q+1/(q-1)$ and exceeds it by at most one standard error (the largest excess, $0.0206$ against $0.0199$ at $q=101$, is $0.95$ standard errors); at $q=257$, for example, we measure $0.0068$--$0.0072$ (standard error about $0.0004$) against a bound of $0.0078$. The second term is not vacuous: a singular $\Gamma$ \emph{can} place an honest functional in the corrupted row space (conditional on $\rank K=r$, this happens with probability $q^{\,r-\omega-1}$, not always), and the posterior forger then succeeds. Repudiation success equals $1/q$ within sampling error, matching \cref{prop:tight}(d). The independent-functional variant (no anchors) behaves the same, with forgery success at $2/q$, matching its $2Q/q$ bound. In that variant, $\omega+1$ colluders forged with probability $0.89$--$0.997$ ($2\,000$ trials per point), which is the probability that their independent random functionals are linearly independent. In the anchored scheme, the attack of \cref{prop:tight}(b) succeeded in \emph{every} one of $4\,000$ trials per field size, as predicted. With $\psi+1$ signatures, forgery succeeded in every trial (\cref{prop:tight}(c)). Finally, the $Q$-query attack of \cref{prop:tight}(a) without corruptions succeeded with probability $0.0045$, $0.016$ and $0.066$ at $q=257$ for $Q=1,4,16$, against bounds of $0.0078$, $0.019$ and $0.066$, consistent with \cref{prop:tight}(a), which shows that the leading term $Q/q$ is tight. Small-field experiments support the analysis; they do not prove the bounds or their tightness.

\section{Discussion and Limitations}\label{sec:discussion}

\textbf{Limitations.} \Cref{tab:limits} lists the limitations that \cref{sec:scale} removes and those that remain.

\begin{table}[!t]
\centering
\caption{Limitations removed by the scaling mechanisms, and limitations that remain.}
\label{tab:limits}
\scriptsize
\setlength{\tabcolsep}{2.5pt}
\renewcommand{\arraystretch}{1.1}
\begin{tabular}{@{}>{\raggedright\arraybackslash}p{0.25\columnwidth}>{\raggedright\arraybackslash}p{0.71\columnwidth}@{}}
\toprule
\textbf{Limitation} & \textbf{Status}\\
\midrule
\multicolumn{2}{c}{\cellcolor{gray!18}\textbf{Removed}}\\
Issuance cost & Bounded per validator independently of $n$ in TN mode \eqref{eq:tncost}; only consensus authentication grows with $n$.\\
\midrule[0.25pt]
Signature size & $32(\omega+1)$\,B, with $\omega$ chosen for external verifiers, not tied to $t$.\\
\midrule[0.25pt]
Signer state & TN indices follow consensus slots; an honest committee cannot reuse one.\\
\midrule[0.25pt]
Closed verifier set & Retroactive enrollment (\cref{prop:enrol}).\\
\midrule[0.25pt]
Membership changes & Robust redistribution with perfect privacy (\cref{lem:reshare}); fresh anchors in every epoch.\\
\midrule[0.25pt]
Clients & $32$-byte arbiter-verified signatures, constant cost per validator and transaction.\\
\midrule
\multicolumn{2}{c}{\cellcolor{gray!18}\textbf{Remaining}}\\
Verification not public & Secret keys and admission control, since $\omega+1$ colluders forge (\cref{prop:tight}(b)); rare external checkpoints can use ML-DSA or SLH-DSA.\\
\midrule[0.25pt]
Attribution in TN & Certificates attest the decision, not votes; accountable TN costs $545$\,KiB per validator per block at $n=1000$, below ML-DSA-44 but $7.7\times$ TN.\\
\midrule[0.25pt]
Key consumption & About one element per certified slot and signer; one-time MAC pads cost $125$\,KiB per validator per block at $n=1000$. Computational PQ MACs avoid this, but agreement then holds only against adversaries that cannot break the MAC \emph{during} the run; a failed agreement can exceed $\psi$ messages on one index and expose it (\cref{prop:tight}(c)).\\
\midrule[0.25pt]
Client key delivery & About $32n$ bytes per transaction.\\
\midrule[0.25pt]
Private channels & Needed for issuance (below).\\
\midrule[0.25pt]
Consensus integration & \Cref{thm:tn} assumes agreement of $\Pi_{\mathsf{BFT}}$ (e.g., PBFT with MAC authenticators); its digests add a collision term unless replicas agree on full values (\cref{rem:budget}). The MAC variant of our HotStuff prototype is a performance prototype, and its Byzantine liveness is not established. The traffic figures are an analytic normal-case model, not the cost of a complete deployment.\\
\midrule[0.25pt]
Long-term trust & Each refresh is covered by \cref{lem:reshare}, but security against a mobile adversary is not proved; $\omega$ bounds verifiers corrupted cumulatively per anchor matrix; historical evidence via later certificates relies on hashing and a trusted checkpoint.\\
\midrule[0.25pt]
Maturity & Malicious dealing checks of~\cite{BTH2008} are costed (as an estimate), not implemented; $n=1000$ results are analytic; the prototype runs on one host over an emulated network without loss or jitter, and only crash faults were tested.\\
\bottomrule
\end{tabular}
\end{table}

\textbf{Private channels for issuance.} Information-theoretic security of $\Pki$ and $\Pi_{\mathsf{TN}}$ needs private channels. They can be realized in three ways:
(i)~one-time pads from QKD links, as already deployed in quantum-secured ledger testbeds~\cite{Kiktenko2018QSB}, consumed offline at the rate given by \eqref{eq:dkicost} or \eqref{eq:tncost}; (ii)~pre-distributed pad material, e.g., by courier or HSM bootstrapping; (iii)~post-quantum TLS with forward secrecy as a pragmatic compromise.
With option (iii), signatures produced in an epoch are information-theoretically unforgeable against anyone who has not broken that epoch's issuance channels \emph{by then}. An attacker who records issuance traffic and later breaks the key exchange could learn old keys and forge signatures \emph{for old epochs}. Evidence that must remain valid indefinitely should therefore be anchored in later certificates, which in turn relies on collision resistance and on an authenticated checkpoint (\cref{sec:reshare}). Options (i) and (ii) avoid the issue entirely, which is the sense of ``everlasting'' security in~\cite{Unruh2013}. Signing and verification never need confidential channels.

\textbf{Threshold trust.} The committee acts as a \emph{distributed} authority: $t+1$ members who pool their shares of an epoch can forge for that epoch. Within an epoch this is the BFT assumption $t<n/3$. Proactive refresh and erasure (\cref{sec:reshare}) are designed so that it suffices that no $t+1$ members are corrupted \emph{within the same epoch}, so that members who leave cannot later pool erased shares; we prove each refresh step (\cref{lem:reshare}) but not security against a mobile adversary, and erasure cannot remove copies that corrupted members kept. This is the trust model of threshold-signature-based ledgers~\cite{CKS2005,SBFT2019}, and it is strictly weaker than trusting one authority. A malicious threshold of validators may refuse to issue new verifier credentials or selectively admit verifiers, but such behavior constitutes a denial of service at the governance layer rather than a forgery attack against \scheme{} authentications.

\textbf{Liveness.} $\Pki$ and $\Pi_{\mathsf{TN}}$ have guaranteed output for $t<n/3$ in synchronous issuance rounds. Certificate release inherits the liveness of the consensus protocol, and combining is robust also under asynchrony (\cref{thm:tn}(iii)). Asynchronous issuance and variants for $t<n/2$ with statistical security~\cite{RabinBenOr1989} are outside our scope.

\textbf{Relation to threshold signatures.} \Cref{sec:tn} and \cref{tab:threshold} compare TN with BLS and lattice threshold signatures.

\textbf{Relation to \silm.} \scheme{} complements~\cite{SILMARILS}. The two-party mode of \silm{} remains the right tool when \emph{deniability} is wanted, such as private user--validator sessions. \scheme{} is the right tool when \emph{evidence} is wanted. Its secret reconstruction weights $\bc_i$ are precisely what turns \silm's publicly reconstructed, simulatable algebraic core into a transferable, non-repudiable signature.

\section{Conclusion}\label{sec:conclusion}
We introduced \scheme{}, a multi-verifier unconditionally secure signature in which verifiers hold secret linear functionals derived from virtual anchors that nobody knows. The scheme is strongly unforgeable, non-repudiable and transferable without bound, with a bound of $Q/q+1/(q-1)$ against unbounded adversaries whose leading term is tight. It admits linear aggregation, and its keys are issued by the validator committee itself with perfect security. Letting the committee itself sign, through threshold notarization, makes certificates constant-size and per-validator issuance independent of the committee size. Together with arbiter-verified keys for clients, retroactive enrollment and robust redistribution, this brings information-theoretic authentication to committees of $1000$ validators at an estimated $71$\,KiB per validator per block in the normal case, $22\times$ below ML-DSA-44 within the admitted-verifier setting considered by \scheme{}.

Open directions include:
a full treatment of the linear-communication variant of \cref{rem:linear} and of asynchronous issuance; a malicious-secure implementation of issuance integrated with a production BFT engine; packed secret sharing to reduce client key delivery; lower bounds on key and signature length for multi-verifier USS with aggregation; integration with QKD networks for fully everlasting deployments; a proof of security against mobile adversaries across refreshes; and a MAC-based consensus engine integrated with TN, with Byzantine safety and liveness proved and tested in the deployed configuration.

\bibliographystyle{IEEEtran}
\bibliography{refs}

\appendices
\section{Reproducibility}\label{app:artifact}
The artifact is publicly available at \url{https://github.com/hassankhodaiemehr/silmarils-mv}~\cite{SILMVcode}. The files below are in its \texttt{silmarils-mv/} directory (subdirectories \texttt{code/}, \texttt{hsbench/} and \texttt{results/}). It contains:
\texttt{fe25519.h} and \texttt{silmv\_bench.c}: the C core and its benchmark; \texttt{silmv.py}: the reference implementation; \texttt{dki\_va.py} and \texttt{dki\_sim.py}: the issuance simulators, with and without anchors; \texttt{attacks.py}, \texttt{attacks\_va.py} and \texttt{attacks\_va\_tight.py}: the Monte-Carlo validation; \texttt{batch\_check.py}: the exhaustive check of batch arbitration (\cref{lem:batch}); \texttt{hsbench/run\_baselines.sh} and \texttt{make\_baselines.py}: the same-host AVX2 baselines, converted into \texttt{results/baselines.json}; \texttt{ledger\_model.py}: the ledger model, which generates every table and figure from the raw measurements. \texttt{tn\_bench.c}: threshold-notarization share, combine, fast-decoder and certificate-verification benchmark; \texttt{tn\_sim.py}: end-to-end simulation of $\Pi_{\mathsf{TN}}$ with Byzantine members, retroactive enrollment, robust redistribution, batch arbitration of arbiter-verified keys, and the TN forgery experiment (\texttt{TNQ=257}); \texttt{scale\_model.py}: the scalability model up to $n=1000$ (\cref{fig:scale,tab:scale}). \texttt{hsbench/}: the Rust HotStuff prototype of \cref{sec:eval:proto}, with \texttt{run\_matrix.py} (experiment matrix) and \texttt{summarize.py} (medians and tables).

\end{document}